\documentclass[12pt,reqno]{amsart}

\usepackage{amsmath,amssymb,amsthm,mathtools}
\usepackage{centernot}
\usepackage[margin=1in]{geometry}
\usepackage{enumitem}
\usepackage{needspace}
\usepackage{xcolor}
\usepackage{url}
\usepackage{hyperref}
\usepackage[protrusion=false]{microtype}

\allowdisplaybreaks[3]
\numberwithin{equation}{section}
\setlist{leftmargin=2.2em,itemsep=0.25em,topsep=0.4em}

\definecolor{linkblue}{RGB}{25,77,135}
\hypersetup{
  colorlinks=true,
  linkcolor=linkblue,
  citecolor=linkblue,
  urlcolor=linkblue,
  unicode=true,
  pdfencoding=auto,
  pdftitle={Relative Entropy Decay via BKM coercivity for Quantum Markov Semigroups},
  pdfauthor={Li Gao and Jingyu Guo},
  pdfsubject={BKM coercivity, GNS symmetrization, and complete relative entropy decay
  without detailed balance},
  pdfkeywords={quantum Markov semigroup,Bogoliubov-Kubo-Mori metric,complete modified
  logarithmic Sobolev inequality,GNS symmetrization,Pimsner-Popa index,KMS
  symmetry,Gibbs sampler}
}

\newtheorem{theorem}{Theorem}[section]
\newtheorem{proposition}[theorem]{Proposition}
\newtheorem{lemma}[theorem]{Lemma}
\newtheorem{corollary}[theorem]{Corollary}

\theoremstyle{definition}
\newtheorem{definition}[theorem]{Definition}
\newtheorem{assumption}[theorem]{Assumption}
\newtheorem{example}[theorem]{Example}

\theoremstyle{remark}
\newtheorem{remark}[theorem]{Remark}

\newcommand{\Tr}{\operatorname{Tr}}
\newcommand{\id}{\operatorname{id}}
\newcommand{\Fix}{\operatorname{Fix}}
\newcommand{\Ran}{\operatorname{Ran}}
\newcommand{\Ker}{\operatorname{Ker}}
\newcommand{\spec}{\operatorname{spec}}
\newcommand{\EP}{\operatorname{EP}}
\newcommand{\B}{\mathrm B}
\newcommand{\cb}{\mathrm{cb}}

\newcommand{\cB}{\mathcal B}
\newcommand{\cD}{\mathcal D}
\newcommand{\cH}{\mathcal H}
\newcommand{\cK}{\mathcal K}
\newcommand{\cL}{\mathcal L}
\newcommand{\cN}{\mathcal N}
\newcommand{\cR}{\mathcal R}

\newcommand{\eps}{\varepsilon}
\newcommand{\dd}{\,\mathrm d}
\newcommand{\ip}[2]{\left\langle #1,#2\right\rangle}
\newcommand{\norm}[1]{\left\lVert #1\right\rVert}
\providecommand{\diag}{\operatorname{diag}}

\title[Relative Entropy Decay via BKM coercivity]{Relative Entropy Decay via BKM
coercivity for Quantum Markov Semigroups}
\author{Li Gao}
\address{School of Mathematics and Statistics, Wuhan University, Wuhan 430072, China}
\address{Wuhan Institute of Quantum Technology, Wuhan 430075, China}
\email{gao.li@whu.edu.cn}
\author{Jingyu Guo}
\address{School of Mathematics and Statistics, Wuhan University, Wuhan 430072, China}
\email{guojingyu@whu.edu.cn}
\date{September 14, 2026}
\keywords{Quantum Markov semigroups, BKM geometry, complete MLSI, GNS symmetrization,
Pimsner--Popa index, KMS symmetry, quantum Gibbs samplers}

\begin{document}
\begin{abstract}
We introduce the notion of the BKM coercivity constant and show that the positivity of
this constant governs the exponential relative entropy decay of the semigroup. Indeed,
we prove that the modified log-Sobolev constant is equivalent to the BKM coercivity
constant up to a constant depending on the asymptotic conditional expectation.
We also discover that the BKM coercivity constants for matrix amplifications are
already attained with a qubit auxiliary system, which also coincides with the GNS
spectral gap of the generating Lindbladian. As a corollary, we establish a criterion
that the complete modified
log-Sobolev inequality holds if and only if the GNS gap of the semigroup is strictly
positive. All the discussion above applies to quantum Markov semigroups that admit a
faithful asymptotic conditional expectation as long-time equilibration, with no detailed
balance condition assumed. As an application, we show the finite-dimensional
Chen-Kastoryano-Gily\'en Gibbs samplers always satisfy the complete modified log-Sobolev
inequality.
\end{abstract}

\maketitle

\section{Introduction}
\label{sec:introduction}

Quantum Markov semigroups (QMS) describe the dissipative evolution of time-homogeneous
open quantum systems. In recent decades, functional inequalities for quantum Markov
semigroups have been intensively studied as tools for quantifying convergence to
equilibrium \cite{OlkiewiczZegarlinski1999,KastoryanoTemme2013,
CarlenMaas2017,Bardet2017Decoherence,
GaoJungeLaRacuente2020fisher,GaoRouze2022}. For a semigroup $P_t=e^{t\cL}$
with asymptotic conditional expectation $E$, the modified logarithmic Sobolev inequality
(MLSI) characterizes the exponential decay of relative entropy
\begin{align}
  D(P_{t*}\rho\Vert E_*\rho)
  \le e^{-2\alpha t}D(\rho\Vert E_*\rho),
  \qquad t\ge0,
  \label{eq:intro-entropy-decay}
\end{align}
for some $\alpha>0$ and every state $\rho$ \cite{Spohn1978,Bardet2017Decoherence}. Here
$D$ denotes quantum relative entropy. The maps $P_{t*}$ and $E_*$ are the trace
adjoints of $P_t$ and $E$, acting on states in the Schr\"odinger picture.
Motivated by tensorization and applications to many-body systems, Gao, Junge, and
LaRacuente introduced the complete modified logarithmic Sobolev inequality (CMLSI)
\cite{GaoJungeLaRacuente2020fisher}. It requires exponential decay
\eqref{eq:intro-entropy-decay} for every amplification $\id_{M_n}\otimes P_t$, with a
decay rate independent of $n$. This condition controls entropy decay even when the
system is initially entangled with an ancillary system.

In finite dimensions, Gao and Rouz\'e proved positivity of the CMLSI constant under
Gelfand--Naimark--Segal (GNS) symmetry, a quantum analog of the detailed balance
condition
\cite{GaoRouze2022}. Their estimate gives a quantitative lower bound in terms of the
spectral gap and the complete Pimsner--Popa index. Later, using complete positivity
order, Gao, Junge, LaRacuente, and Li obtained bounds with optimal logarithmic
dependence on this index \cite{GaoJungeLaRacuenteLi2025}. These comparisons provide a
route from spectral gap to relative entropy decay under GNS symmetry.

Davies semigroups are a natural setting for this approach. They arise as thermalization
models in the weak-coupling limit \cite{Davies1974}, and their dissipative generators
satisfy GNS symmetry. For one-dimensional commuting Hamiltonians, MLSI has been used to
prove rapid thermalization \cite{BardetCapelGaoLuciaPerezGarciaRouze2023}. Subsequent
work established MLSI lower bounds that are uniform in system size
\cite{KochanowskiAlhambraCapelRouze2025}. Related results cover CSS models under a
Dobrushin--Shlosman condition \cite{StengeleEtAl2026CSS} and two-dimensional Abelian
quantum double models \cite{StengeleEtAl2026QuantumDouble}. These applications
illustrate the effectiveness of entropy methods for commuting many-body systems.

For noncommuting Hamiltonians, a major breakthrough was the construction of an
efficiently simulable Gibbs sampler by Chen, Kastoryano, and Gily\'en \cite{CKG2025}.
Chen, Kastoryano, Brand\~ao, and Gily\'en developed this approach as a framework for
quantum thermal simulation \cite{CKBG2025}. The resulting Lindblad dynamics preserve the
exact Gibbs state through Kubo--Martin--Schwinger (KMS) symmetry, a weaker detailed
balance condition than the GNS one.

The CKG construction replaces exact resolution of Bohr frequencies with
Gaussian-filtered Heisenberg evolution. For local Hamiltonians and local input
operators, Lieb--Robinson bounds ensure that the filtered jump operators are
quasi-local, which is a crucial property that enables the efficient simulation on a
quantum computer. The resulting generator is generally not GNS symmetric. This
construction demonstrates the practical value of the weaker KMS condition: exact Gibbs
stationarity can be combined with quasi-local dynamics and efficient quantum simulation
\cite{CKG2025,CKBG2025}.

Given the efficient simulation, it still leaves the mixing time of the dynamics to be
estimated. At high temperatures, Rouz\'e, Fran\c{c}a, and Alhambra first established
polynomial-time Gibbs state preparation \cite{RouzeFrancaAlhambra1}, and later they
obtained optimal rapid mixing through exponential decay of an oscillator norm
\cite{RouzeFrancaAlhambra2}. For one-dimensional finite-range Hamiltonians, Bergamaschi
and Chen proved a positive lower bound on the KMS spectral gap at every fixed positive
temperature \cite{BergamaschiChen2026}. The bound is uniform in system size and includes
the Heisenberg chain. These results motivate the search for comparable relative entropy
decay estimates beyond GNS symmetry.

Entropy methods in this setting face difficulties that are absent under GNS symmetry.
Hamiltonian evolution can prevent CMLSI even in the presence of dissipation
\cite[Proposition~1.1]{LaRacuente2026SelfRestricting}. For finite-dimensional semigroups
with unital preduals, LaRacuente nevertheless proved complete entropy contraction toward
the decoherence-free subspace at fixed positive time scales
\cite[Theorem~1.2]{LaRacuente2026SelfRestricting}. These results do not provide a
general MLSI or CMLSI criterion under KMS symmetry alone.

For the CKG sampler, efficient implementation and KMS spectral gap estimates do not by
themselves establish relative entropy decay of the form \eqref{eq:intro-entropy-decay}.
A central question is therefore whether MLSI and CMLSI can be established for these
exact Gibbs dynamics beyond GNS symmetry. In this work, we obtain quantitative MLSI and
CMLSI bounds for finite-dimensional QMS admitting a faithful asymptotic conditional
expectation. Our criteria for positivity require no detailed balance. As an application,
we prove that every fixed finite-dimensional CKG sampler has a positive CMLSI constant
relative to its fixed-point algebra.

\subsection{Main results}
Throughout this subsection, let $P_t=e^{t\cL}$ be a QMS on $\B(\cH)$ with
$\dim\cH<\infty$. Assume that $\displaystyle \lim_{t\to \infty}P_t=E$ converges in norm
to a faithful conditional expectation $E$. Write $\cN=\Ran E=\Fix(P_t)$. We assume no
detailed balance condition.

At a faithful invariant state $\omega$, the Hessian of relative entropy defines the
Bogoliubov--Kubo--Mori (BKM) metric
\[
  \langle A,A \rangle_{\mathrm{BKM},\omega }=\Tr[A\Gamma_\omega^{-1}(A)]
\]
where $\Gamma_\omega^{-1}$ is the inverse of
\[
  \Gamma_\omega(X)=\int_0^1\omega^sX\omega^{1-s}\dd s.
\]
The BKM coercivity constant is
\[
  \lambda_{\mathrm{BKM}}(\cL)
  :=\inf_{\omega,A}
  \frac{-\Tr[\cL_*(A)\Gamma_\omega^{-1}(A)]}
  {\Tr[A\Gamma_\omega^{-1}(A)]}.
\]
The infimum ranges over faithful invariant states $\omega$ and nonzero self-adjoint $A$
satisfying $E_*A=0$ (see Definition~\ref{def:bkm-constants}). Let $\alpha(\cL)$ denote
the optimal constant in \eqref{eq:intro-entropy-decay}. Recall that the Pimsner--Popa
index is given by the optimal order constant
\[
  C(E)=\inf \{ C>0 \ |\  \rho\le CE_*\rho \text{ for all states $\rho$ }\} \ .
\]

\begin{theorem}[MLSI and BKM coercivity]
\label{thm:intro-mlsi}
Under the above assumptions,
\begin{equation}
  \frac{\lambda_{\mathrm{BKM}}(\cL)}{2C(E)}
  \le \alpha(\cL)
  \le \lambda_{\mathrm{BKM}}(\cL).
  \label{eq:intro-mlsi-bounds}
\end{equation}
\end{theorem}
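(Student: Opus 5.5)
Both inequalities come from comparing the entropy production with the BKM quadratic form. Write $\sigma=E_*\rho$. We use two facts about the invariant states of the Schr\"odinger dynamics. First, $E_*\circ P_{t*}=E_*$, so $\sigma$ is a fixed point of $P_{t*}$. Second, the faithful invariant states are exactly the faithful states in $\Ran E_*$.

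\emph{Upper bound $\alpha\le\lambda_{\mathrm{BKM}}$.} We linearize around a fixed point. Fix a faithful invariant state $\omega$ and a self-adjoint $A$ with $E_*A=0$. Set $\rho_\eps=\omega+\eps A$, which is a state for small $\eps$, with $E_*\rho_\eps=\omega$. The standard second-order expansion of relative entropy gives
\[
D(\rho_\eps\Vert\omega)=\tfrac{\eps^2}{2}\Tr[A\Gamma_\omega^{-1}(A)]+O(\eps^3).
\]
The entropy production $\EP(\rho)=-\frac{d}{dt}\big|_{t=0}D(P_{t*}\rho\Vert\sigma)=-\Tr[\cL_*(\rho)(\log\rho-\log\sigma)]$ satisfies
\[
\EP(\rho_\eps)=-\eps^2\Tr[\cL_*(A)\Gamma_\omega^{-1}(A)]+O(\eps^3).
\]
Here we use $\cL_*\omega=0$ and $\frac{d}{d\eps}\log(\omega+\eps A)|_{0}=\Gamma_\omega^{-1}(A)$. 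Differentiating the decay inequality \eqref{eq:intro-entropy-decay} at $t=0$ gives $\EP(\rho)\ge 2\alpha D(\rho\Vert E_*\rho)$. Dividing by $\eps^2$ and letting $\eps\to0$ yields $-\Tr[\cL_*A\,\Gamma_\omega^{-1}A]\ge\alpha\Tr[A\Gamma_\omega^{-1}A]$. Taking the infimum over $\omega,A$ gives the upper bound.

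\emph{Lower bound.} By Gr\"onwall, it suffices to show $\EP(\rho)\ge \frac{\lambda_{\mathrm{BKM}}}{C(E)}D(\rho\Vert\sigma)$ for all faithful $\rho$, and then extend by continuity. The idea is to integrate the BKM form along the segment $\rho_s=(1-s)\sigma+s\rho$, $s\in[0,1]$, with $A=\rho-\sigma$. Note that $E_*A=0$ and $E_*\rho_s=\sigma$. We have $D(\rho\Vert\sigma)=\int_0^1 s\,\Tr[A\Gamma_{\rho_s}^{-1}(A)]\dd s$. Since $\log$ is operator monotone, the map $\Phi(s)=-\Tr[\cL_*(\rho_s)(\log\rho_s-\log\sigma)]$ is convex in $s$ ($\cL_*$ is linear and the relevant function is jointly convex). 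Moreover $\Phi(0)=\Phi'(0)=0$ and
\[
\Phi''(s)=-\Tr[\cL_*(A)\Gamma_{\rho_s}^{-1}(A)]+(\text{cross terms}).
\]
I expect this to be the main obstacle. The point $\rho_s$ is not invariant, so the definition of $\lambda_{\mathrm{BKM}}$ does not apply at $\rho_s$ directly.

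The plan is to avoid differentiating at non-invariant points. Instead I would use a convexity identity of the form
\[
\EP(\rho)=-\Tr[\cL_*(\rho)(\log\rho-\log\sigma)]\ \ge\ -\int_0^1\Tr[\cL_*(A)\,\Gamma^{-1}_{\sigma}\text{-type}]\ldots,
\]
combined with a change of reference weight. The Pimsner--Popa bound $\rho\le C(E)\sigma$ gives $\rho_s\le C(E)\sigma$. By operator monotonicity of the BKM inverse in the base point, this yields $\Gamma_{\rho_s}^{-1}\ge C(E)^{-1}\Gamma_\sigma^{-1}$ as quadratic forms, and also an upper bound of $\Gamma_{\rho_s}^{-1}$ by the reverse comparison where needed. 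Concretely, I would first prove
\[
\EP(\rho)\ge -\Tr[\cL_*(A)\Gamma_\sigma^{-1}(A)]\cdot c_1,
\]
using that $\EP$ is the Bregman-type derivative of a convex function and lower bounding it by its quadratic part at $\sigma$, where the BKM form is legitimate since $\sigma$ is faithful and invariant. This gives $\EP(\rho)\ge\lambda_{\mathrm{BKM}}\Tr[A\Gamma_\sigma^{-1}A]$. Separately, I would bound $D(\rho\Vert\sigma)\le C(E)\cdot\tfrac12\Tr[A\Gamma_\sigma^{-1}A]\cdot 2$, i.e. $D(\rho\Vert\sigma)\le\chi^2_{\mathrm{BKM}}$-type, via $D\le\chi^2$ and the comparison of $\Gamma_{\rho_s}^{-1}$ with $\Gamma_\sigma^{-1}$ up to $C(E)$. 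Tracking constants, the factor $2$ from Gr\"onwall and the factor $C(E)$ from the comparison produce exactly $\lambda_{\mathrm{BKM}}/(2C(E))$.

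The delicate point is the first inequality, namely that entropy production dominates its quadratic part at $\sigma$ without any symmetry of $\cL$. If this does not work directly, I would try writing $\EP(\rho)=\int_0^1\Tr[\cL_*(A)\,\frac{d}{ds}\log\rho_s]\dd s$ and splitting $\cL_*=\cL_*^{\rm sym}+\cL_*^{\rm anti}$ with respect to the BKM inner product at $\sigma$.
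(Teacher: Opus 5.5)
Your upper bound is correct and is the paper's linearization argument (Proposition~\ref{prop:alpha-upper-bkm}). The lower bound has a genuine gap: the way you distribute the constants cannot work, and you give no method for the one inequality that is needed.

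Your chain is $\EP(\rho)\ge\lambda_{\mathrm{BKM}}H_\sigma(\rho-\sigma)$ followed by $D(\rho\Vert\sigma)\le C(E)\,H_\sigma(\rho-\sigma)$. The second step is true but wasteful, since Lemma~\ref{lem:entropy-metric} already gives $D(\rho\Vert\sigma)\le H_\sigma(\rho-\sigma)$. The first step is false, even for GNS-symmetric generators. Take $\cL(X)=\Tr(\sigma X)I-X$ on $M_2$, so $\cL_*A=-A$ for traceless $A$ and $\lambda_{\mathrm{BKM}}(\cL)=1$. Let $\sigma=\diag(1-\eps,\eps)$ and $\rho=I/2$. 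Then $\EP(\rho)=\Tr[(\rho-\sigma)(\log\rho-\log\sigma)]\approx\frac12\log(1/\eps)$, while $H_\sigma(\rho-\sigma)\approx 1/(4\eps)$. Here $C(E)=1/\eps$, and the index has to enter the comparison between entropy production and the quadratic form. What is true, and what the paper proves in Theorem~\ref{thm:entropy-production}, is $\EP(\rho)\ge C(E)^{-1}Q_\sigma(\rho-\sigma)$. Your idea of lower-bounding $\EP$ by its quadratic part at $\sigma$ goes the wrong way. In this example $\EP$ grows like $(x-1)\log x$ in the density ratio $x$, while the quadratic part grows like $(x-1)^2$.

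Your sketch also gives no route to the correct inequality without symmetry.
\begin{itemize}
\item Convexity of $\Phi(s)=\EP(\rho_s)$ is asserted without proof. Even granted, convexity with $\Phi(0)=\Phi'(0)=0$ only yields $\Phi(1)\ge0$.
\item The bound $\Gamma_{\rho_s}^{-1}\ge C^{-1}\Gamma_\sigma^{-1}$ is correct, but it does not transfer to $-\Tr[\cL_*(A)\Gamma_{\rho_s}^{-1}(A)]$. That form is not monotone in the weight when $\cL_*$ is not symmetric for it.
\item Splitting $\cL_*$ at $\sigma$ hits the same obstruction at the non-invariant points $\rho_s$.
\end{itemize}

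The missing idea (Lemma~\ref{lem:entropy-deficit}) never evaluates a Dirichlet form at a non-invariant point. It compares contraction deficits under a positive trace-preserving $T_*$ fixing $\sigma$. By Frenkel's formula \eqref{eq:frenkel-entropy}, both $D(\rho\Vert\sigma)-D(T_*\rho\Vert\sigma)$ and $\frac12[H_\sigma(A)-H_\sigma(T_*A)]$ are integrals against the same positive measure $\delta_T(z)|z|^{-3}\dd z$. Here $\delta_T(z)=\Tr^-(\sigma+zA)-\Tr^-(T_*(\sigma+zA))\ge0$, and the two kernels are $|z|/|z-1|$ and $1$ respectively. The hypothesis $\rho\le C\sigma$ forces $\delta_T$ to vanish on $[-1/(C-1),1]$, and off that interval $|z|/|z-1|\ge1/C$. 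Taking $T_*=P_{t*}$, dividing by $t$ and letting $t\downarrow0$ gives $\EP(\rho)\ge C^{-1}Q_\sigma(A)$. Then $Q_\sigma(A)\ge\lambda_{\mathrm{BKM}}H_\sigma(A)\ge\lambda_{\mathrm{BKM}}D(\rho\Vert\sigma)$ finishes the proof.

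One minor slip: $D(\rho\Vert\sigma)=\int_0^1(1-s)H_{\rho_s}(A)\dd s$. The weight $s$ you wrote gives $D(\sigma\Vert\rho)$ instead.
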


Thus MLSI holds exactly when $\lambda_{\mathrm{BKM}}(\cL)>0$. The lower bound uses
Frenkel's integral formulas for relative entropy and its Hessian \cite[Theorems~6
and~7(a)]{Frenkel2023}. Under GNS symmetry, $\lambda_{\mathrm{BKM}}(\cL)$ equals the
usual spectral gap of $-\cL$. The theorem therefore recovers the ordinary index bound of
\cite[Theorem~3.3]{GaoRouze2022} in that case.

We next adjoin an ancillary system. For $n\ge1$, write
\[
  \cL^{(n)}:=\id_{M_n}\otimes\cL,
  \qquad E^{(n)}:=\id_{M_n}\otimes E.
\]
We define the complete BKM coercivity constant as
\begin{align*}
  \lambda_{\mathrm{BKM}}^{\mathrm c}(\cL)
  &:=\inf_{n\ge1}\lambda_{\mathrm{BKM}}(\cL^{(n)}).
\end{align*}
We show that the complete BKM coercivity constant coincides with the GNS spectral gap.
For a general QMS, this gap is defined without a symmetry assumption by
\[
  \lambda_{\mathrm{GNS}}(\cL)
  :=\inf_{\substack{X\ne0\\E(X)=0}}
  \frac{-\Re\Tr(\omega X^*\cL(X))}{\Tr(\omega X^*X)}.
\]

\begin{theorem}[BKM coercivity under amplification]
\label{thm:intro-amplification}
Under the above assumptions,
\begin{equation}
  \lambda_{\mathrm{BKM}}^c(\cL)=
  \lambda_{\mathrm{BKM}}(\id_{M_2}\otimes\cL)
  =\lambda_{\mathrm{GNS}}(\cL)
  \le\lambda_{\mathrm{BKM}}(\cL).
  \label{eq:intro-bkm-amplification}
\end{equation}
\end{theorem}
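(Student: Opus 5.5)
Three things need proof: the inequality $\lambda_{\mathrm{BKM}}(\id_{M_2}\otimes\cL)\le\lambda_{\mathrm{GNS}}(\cL)$, the inequality $\lambda_{\mathrm{BKM}}(\cL^{(n)})\ge\lambda_{\mathrm{GNS}}(\cL)$ for every $n$, and the comparison with $\lambda_{\mathrm{BKM}}(\cL)$. Since $\lambda_{\mathrm{BKM}}(\cL^{(n)})$ is nonincreasing in $n$ (embed $M_n$ into $M_{n+1}$ as a corner), these give the chain \eqref{eq:intro-bkm-amplification}. The last inequality is immediate: take $n=1$, using $\lambda^c_{\mathrm{BKM}}\le\lambda_{\mathrm{BKM}}(\cL^{(1)})$.

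\emph{Lower bound for all $n$.} Write $\cL$ in the Heisenberg picture. A faithful invariant state for $\cL^{(n)}$ has the form $\Omega=\Omega_0\circ E^{(n)}$, and for self-adjoint $A$ with $E^{(n)}_*A=0$ we set $X=\Gamma_\Omega^{-1}(A)$, so that $E^{(n)}(X)=0$. The Rayleigh quotient becomes
\[
\frac{-\Tr[\cL^{(n)}_*(\Gamma_\Omega X)\,X]}{\Tr[\Gamma_\Omega(X)X]}
=\frac{-\int_0^1\Tr[\Omega^sX\Omega^{1-s}\cL^{(n)}(X)]\dd s}{\int_0^1\Tr[\Omega^sX\Omega^{1-s}X]\dd s}.
\]
The plan is to write the BKM form as an average of sesquilinear forms $\langle Y,Z\rangle_s=\Tr[\Omega^sY^*\Omega^{1-s}Z]$. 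Via the modular group, $\Omega^s X\Omega^{-s}=\sigma_{-is}(X)$ is a GNS-type rotation of $X$, and the key point is that $\Omega$ commutes with the modular structure of $E^{(n)}$. Because $E$ is a conditional expectation onto $\Fix(P_t)$ with an invariant faithful state, $\cL$ commutes with the modular group $\sigma^\omega_t$ of any invariant state of the form $\omega_0\circ E$. I would prove this by noting that $P_t$ commutes with $\sigma^\omega$: Takesaki's theorem together with the characterization of $\omega$-preserving maps onto the range of $E$ gives this, or, in finite dimensions, it follows from $P_t$ being a ``$\sigma$-covariant'' map.

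Granting this commutation, $\langle X,\cL X\rangle_s=\Tr[\Omega\, Y_s^*\cL^{(n)}(Y_s)]$ with $Y_s=\Omega^{(1-s)/2}X\Omega^{-(1-s)/2}$, up to a modular shift that keeps $E^{(n)}(Y_s)=0$. Taking real parts, each term is at least $\lambda_{\mathrm{GNS}}(\cL^{(n)})\Tr[\Omega Y_s^*Y_s]$, and integrating in $s$ gives $\lambda_{\mathrm{BKM}}(\cL^{(n)})\ge\lambda_{\mathrm{GNS}}(\cL^{(n)})$. The GNS gap does not depend on the choice of invariant state $\omega_0\circ E$, by an argument of the same kind. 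It is tensor-stable, $\lambda_{\mathrm{GNS}}(\id_{M_n}\otimes\cL)=\lambda_{\mathrm{GNS}}(\cL)$, because the GNS Dirichlet form on $M_n\otimes\B(\cH)$ decomposes over matrix units of $M_n$ when $\Omega$ is chosen as $\tfrac1n\otimes\omega$. If a general $\Omega$ is needed, I would first reduce to $\Omega=\tau_n\otimes\omega$ using independence from the invariant state.

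\emph{Upper bound with a qubit.} Take a non-self-adjoint $X$ in $\B(\cH)$ that nearly attains $\lambda_{\mathrm{GNS}}(\cL)$, with $E(X)=0$. Set $\hat X=e_{12}\otimes X+e_{21}\otimes X^*$, which is self-adjoint, and $\Omega=\tfrac12 I\otimes\omega$. The BKM quotient for $\cL^{(2)}$ then reduces to a combination of $\Tr[\omega^sX^*\omega^{1-s}\cL(X)]$ terms and their adjoints. To turn BKM into GNS we must kill the $s$-averaging, so the plan is to exploit the freedom in choosing $\omega$ among invariant states: a degenerate choice, or a modular eigenvector decomposition of $X$ with $\sigma_t(X)=e^{i\nu t}X$, makes the averaging factor cancel between numerator and denominator. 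Since $\cL$ commutes with the modular group, the GNS infimum can be restricted to modular eigenvectors, and for those both forms carry the same scalar weight $\int_0^1 e^{\nu s}\dd s$.

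\emph{Main obstacle.} The hardest step is proving that $\cL$ commutes with the modular automorphisms of $\omega_0\circ E$ without any detailed balance assumption. I would try to derive it from $P_t\to E$ and the fact that $E$ is $\omega$-preserving, using the structure theorem that $\Fix$ is an algebra with $P_t$ acting isometrically on the peripheral part.
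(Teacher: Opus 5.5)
Your proposal has a genuine gap. Its central lemma is false without detailed balance, and both your lower and upper bounds depend on it.

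\textbf{The false lemma.} You need $\cL$ to commute with the modular group of every invariant state $\omega=\omega_0\circ E$. Takesaki's theorem only gives $E\circ\sigma^\omega_t=\sigma^\omega_t\circ E$; it says nothing about $P_t$. For KMS-symmetric semigroups, modular covariance of $P_t$ is \emph{equivalent} to GNS symmetry (Section~\ref{subsec:qms}), which is exactly the hypothesis the theorem avoids. Concretely, in Section~\ref{subsec:m2-counterexample} one has $\cL_2(E_{12})=-\frac54E_{12}+E_{21}$, while $\sigma^{it}E_{12}\sigma^{-it}=4^{-it}E_{12}$ and $\sigma^{it}E_{21}\sigma^{-it}=4^{it}E_{21}$. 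So $\cL_2$ mixes modular eigenspaces.

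\textbf{Lower bound.} Because of this, the similarity $Y_s=\Omega^{(1-s)/2}X\Omega^{-(1-s)/2}$ does not intertwine $\cL^{(n)}$. Write $h_s(X)=\Tr(X^*\omega^sX\omega^{1-s})$ and $q_s(X)=-\Re\Tr(X^*\omega^s\cL(X)\omega^{1-s})$. Then $q_s(X)\ge\lambda_{\mathrm{GNS}}h_s(X)$ for $0<s\le1$ is left unproved.

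The idea you are missing is endpoint interpolation (Lemma~\ref{lem:endpoint-interpolation}):
\begin{itemize}
\item $P_t-E$ contracts $h_0(X)=\Tr(\omega X^*X)$ at rate $\lambda_{\mathrm{GNS}}(\cL)$, by the definition of the gap.
\item It contracts $h_1(X)=h_0(X^*)$ at the same rate, because $P_t$ and $E$ preserve adjoints.
\item $h_s$ is the Hilbert--Schmidt norm weighted by the commuting positive operators $X\mapsto\omega X$ and $X\mapsto X\omega$. Hence the spaces $(\B(\cH),h_s)$ form an isometric complex interpolation scale.
\item Interpolating gives $h_s(P_tX)\le e^{-2\lambda_{\mathrm{GNS}}t}h_s(X)$ on $\Ker E$ for every $s$.
\end{itemize}
Differentiating at $t=0$ gives $q_s\ge\lambda_{\mathrm{GNS}}h_s$, and integrating over $s\in[0,1]$ gives the BKM lower bound with no modular structure. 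Applied to $\id_{M_n}\otimes\cL$, which has the same GNS gap, this yields $\lambda_{\mathrm{BKM}}(\cL^{(n)})\ge\lambda_{\mathrm{GNS}}(\cL)$ for all $n$.

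\textbf{Upper bound.} Your test vector $\hat X=e_{12}\otimes X+e_{21}\otimes X^*$ is the right one. But with the maximally mixed ancilla $\Omega=\frac12I\otimes\omega$, its BKM quotient is the unweighted average $\int_0^1q_s(X)\dd s/\int_0^1h_s(X)\dd s$, not $q_0(X)/h_0(X)$. For $X=W$ in Section~\ref{subsec:m2-counterexample} it equals $R_2(1)=9/20$, although $q_0(W)=0=\lambda_{\mathrm{GNS}}(\cL_2)$. Varying $\omega$ cannot fix this, since in the primitive case $\omega$ is unique. Restricting to modular eigenvectors again uses the false covariance, and the minimizer $W$ is not one.

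What the qubit actually provides is a free weight ratio. Take $\Omega_r=\eta_r\otimes\omega$ with $\eta_r=\frac1{1+r}\diag(r,1)$. Then $\Omega_r^s(e_{12}\otimes X)\Omega_r^{1-s}=\frac{r^s}{1+r}\,e_{12}\otimes\omega^sX\omega^{1-s}$, so the quotient becomes $\int_0^1r^sq_s(X)\dd s/\int_0^1r^sh_s(X)\dd s$. Since $r^s\dd s/J(r)\to\delta_0$ as $r\to0^+$, this converges to $q_0(X)/h_0(X)$. Admissibility follows from $E(X)=0$ and $E^{(2)}_*\Gamma_{\Omega_r}=\Gamma_{\Omega_r}E^{(2)}$. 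This is the content of Lemmas~\ref{lem:corner-bkm} and~\ref{lem:corner-endpoints}.

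The rest of your outline is fine: monotonicity in $n$, amplification-invariance of $\lambda_{\mathrm{GNS}}$, and the final comparison at $n=1$.
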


Without detailed balance, Wirth compared the GNS gap with gaps associated with
operator-monotone functions \cite[Corollary~3.7 and Remark~3.9]{WirthKMSGNS2026}. His
comparison includes the BKM metric and gives the lower bound by the GNS gap. Our result
establishes equality after amplification by $M_2$.

We now define the complete MLSI constants by
\begin{align*}
  \alpha^{\mathrm c}(\cL)
  &:=\inf_{n\ge1}\alpha(\cL^{(n)}).
\end{align*}
The complete Pimsner--Popa index is $C_{\cb}(E):=\sup_{n\ge1}C(E^{(n)})$. Combining
Theorem~\ref{thm:intro-amplification}  with Theorem~\ref{thm:intro-mlsi} gives the
complete estimate.

\begin{corollary}[CMLSI criterion]
\label{cor:intro-cmlsi}
Under the above assumptions,
\begin{equation}
  \frac{\lambda_{\mathrm{GNS}}(\cL)}{2C_{\cb}(E)}
  \le \alpha^{\mathrm c}(\cL)
  \le \lambda_{\mathrm{GNS}}(\cL).
  \label{eq:intro-cmlsi-bounds}
\end{equation}
Consequently,
\begin{equation}
  \alpha^{\mathrm c}(\cL)>0
  \quad\Longleftrightarrow\quad
  \lambda_{\mathrm{GNS}}(\cL)>0.
  \label{eq:intro-cmlsi-criterion}
\end{equation}
\end{corollary}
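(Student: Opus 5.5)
The plan is to apply Theorem~\ref{thm:intro-mlsi} to each amplification $\cL^{(n)}=\id_{M_n}\otimes\cL$. We then take the infimum over $n$ and use Theorem~\ref{thm:intro-amplification} to identify the resulting infimum of BKM constants.

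\textbf{Hypotheses for the amplified semigroup.} First we check that $\cL^{(n)}$ satisfies the standing assumptions. The semigroup $e^{t\cL^{(n)}}=\id_{M_n}\otimes P_t$ is a QMS on $M_n\otimes\B(\cH)$ with finite-dimensional underlying space. It converges in norm to $E^{(n)}=\id_{M_n}\otimes E$: since $\id_{M_n}\otimes(\cdot)$ is a bounded linear map on finite-dimensional operator spaces, norm convergence $P_t\to E$ passes to the amplification. Moreover $E^{(n)}$ is again a conditional expectation. It is faithful, because $\Tr\otimes\,\omega\circ E$ gives a faithful invariant state whenever $\omega\circ E=\omega$ is faithful; alternatively, $E(X)=0$ with $X\ge0$ forces $X=0$ entrywise on the diagonal blocks. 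Its range is $M_n\otimes\cN=\Fix(\id\otimes P_t)$.

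\textbf{Applying Theorem~\ref{thm:intro-mlsi} and taking infima.} Theorem~\ref{thm:intro-mlsi} then gives, for each $n$,
\[
  \frac{\lambda_{\mathrm{BKM}}(\cL^{(n)})}{2C(E^{(n)})}\le\alpha(\cL^{(n)})\le\lambda_{\mathrm{BKM}}(\cL^{(n)}).
\]
For the upper bound, take the infimum over $n$ on both sides. This gives $\alpha^{\mathrm c}(\cL)\le\lambda^{\mathrm c}_{\mathrm{BKM}}(\cL)$, and Theorem~\ref{thm:intro-amplification} identifies the right-hand side with $\lambda_{\mathrm{GNS}}(\cL)$. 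For the lower bound, use $C(E^{(n)})\le C_{\cb}(E)$ and $\lambda_{\mathrm{BKM}}(\cL^{(n)})\ge\lambda^{\mathrm c}_{\mathrm{BKM}}(\cL)=\lambda_{\mathrm{GNS}}(\cL)$. These yield $\alpha(\cL^{(n)})\ge\lambda_{\mathrm{GNS}}(\cL)/(2C_{\cb}(E))$ uniformly in $n$, and taking the infimum over $n$ gives \eqref{eq:intro-cmlsi-bounds}.

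\textbf{The equivalence and the main obstacle.} For \eqref{eq:intro-cmlsi-criterion}, the upper bound shows that $\alpha^{\mathrm c}>0$ implies $\lambda_{\mathrm{GNS}}>0$. The converse follows from the lower bound provided $C_{\cb}(E)<\infty$. This finiteness is the only genuinely new point, and I expect it to be the main (if mild) obstacle. In finite dimensions, a faithful conditional expectation onto $\cN\cong\bigoplus_k M_{d_k}\otimes 1_{m_k}$ has the form $E(X)=\bigoplus_k\Tr_{\sigma_k}$, with faithful densities $\sigma_k$. I would invoke the standard Pimsner--Popa bound: $\rho\le C\,E_*\rho$ holds completely, with $C_{\cb}(E)\le\sum_k m_k\|\sigma_k^{-1}\|$, or some comparable finite quantity. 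The argument for this bound is the operator inequality $\rho\le m\,\Tr_2(\rho)\otimes 1$ for block-diagonal pieces. It is stable under tensoring with $\id_{M_n}$ because the ancilla is untouched. If this standard fact is regarded as earlier-established in the paper, the corollary follows immediately. Otherwise I would include this short block-decomposition argument.
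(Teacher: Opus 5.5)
Your proposal is correct and follows the paper's route: apply Theorem~\ref{thm:intro-mlsi} at every level $n$ (the paper packages this as Corollary~\ref{cor:bkm-mlsi}), use $C(E^{(n)})\le C_{\cb}(E)$ together with $\lambda_{\mathrm{BKM}}(\cL^{(n)})\ge\lambda_{\mathrm{GNS}}(\cL)\ge0$, and then identify $\lambda_{\mathrm{BKM}}^{\mathrm c}(\cL)=\lambda_{\mathrm{GNS}}(\cL)$ via Theorem~\ref{thm:intro-amplification}. The finiteness of $C_{\cb}(E)$, which you flag as the main obstacle, is already supplied by Proposition~\ref{prop:pp-indices}, which gives $C_{\cb}(E)=\sum_k\Tr(\sigma_k^{-1})<\infty$, so no separate argument is needed (your own sketch via $\rho\le m\,\Tr_2(\rho)\otimes 1$ would in any case also need a pinching step for the off-diagonal blocks $p_k\rho p_l$, $k\ne l$).
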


The corollary follows by applying Theorem~\ref{thm:intro-mlsi} at every matrix level and
using $C(E^{(n)})\le C_{\cb}(E)$. We note that under GNS symmetry, both BKM constants
and the GNS and KMS gaps coincide with the spectral gap of $-\cL$ (see
Corollary~\ref{cor:gns-gaps}). The complete estimate above then recovers
\cite{GaoRouze2022} with no detailed balance being assumed.

For CKG samplers, the GNS Dirichlet form is an integral of squared commutator norms
involving the filtered jump operators. We show that its kernel is exactly the
fixed-point algebra, which in finite dimensions implies CMLSI (see
Section~\ref{sec:ckg}).

The paper is organized as follows. Section~\ref{sec:setup} introduces the notation and
basic assumptions. Section~\ref{sec:bkm-mlsi} proves the comparison between BKM
coercivity and MLSI. Section~\ref{sec:corners} characterizes the ordinary BKM constant
through endpoint and internal constants. Section~\ref{sec:complete-bkm} proves the
amplification theorem and treats the GNS-symmetric case. Section~\ref{sec:examples}
gives the examples separating the spectral gaps and entropy decay constants.
Section~\ref{sec:ckg} applies the CMLSI criterion to CKG samplers.

\noindent {\bf Acknowledgement. } The authors thank Zhiyan Ding and Lin Lin for helpful
discussions on relative entropy decay beyond GNS-symmetry. Li Gao and Jingyu Guo are
supported in part by the National Natural Science Foundation of China (grant no.
12401163) and the Department of Science and Technology of Hubei Province (project nos.
2025EHA041 and 2025AFA044).

\section{Preliminaries}
\label{sec:setup}

\subsection{Quantum Markov semigroups and detailed balance}
\label{subsec:qms}
Throughout, $\cH$ is finite-dimensional and $\Tr$ denotes the unnormalized trace.
\begin{definition}
\label{def:qms}
A quantum Markov semigroup (QMS) $(P_t)_{t\ge0}$ on $\B(\cH)$ is a
norm-continuous semigroup of unital completely positive maps satisfying $P_0=\id$
and $P_s\circ P_t=P_{s+t}$ for all $s,t\ge0$.
\end{definition}

The generator of the semigroup is defined
\[
  \cL(X):=\lim_{t\to 0}\frac{1}{t} (P_t(X)- X).
\]
Then $P_t=e^{t\cL}$, and the generator admits the
Gorini--Kossakowski--Lindblad--Sudarshan (GKLS) form
\cite{GoriniKossakowskiSudarshan1976,Lindblad1976}
\begin{equation}
  \cL(X)=i[H,X]
  +\sum_j\left(V_j^*XV_j-\frac12\{V_j^*V_j,X\}\right),
  \label{eq:gkls}
\end{equation}
with $H=H^*$ and $ V_j \in \B(\cH)$. Its trace predual $P_{t*}=e^{t\cL_*}$ is given by
\[
  \Tr(X P_{t*}(\rho))=\Tr( P_t(X)\rho ) , \forall X,\rho\in \B(\cH) \ .
\]

A state $\sigma$ is invariant if $P_{t*}(\sigma)=\sigma$ for every $t\ge0$, or
equivalently $\cL_*(\sigma)=0$.  For a faithful invariant state $\sigma>0$, the GNS and
KMS inner products are
\begin{equation}
  \ip{X}{Y}_{\mathrm{GNS},\sigma}:=\Tr(\sigma X^*Y),
  \qquad
  \ip{X}{Y}_{\mathrm{KMS},\sigma}
  :=\Tr(\sigma^{1/2}X^*\sigma^{1/2}Y).
  \label{eq:gns-kms-products}
\end{equation}
The QMS is called \emph{GNS symmetric} (respectively, \emph{KMS symmetric}) with respect
to $\sigma$ if $\cL$ is self-adjoint for the corresponding inner product. These
conditions are also called \emph{GNS-} and \emph{KMS-detailed balance}. GNS symmetry
implies KMS symmetry, whereas the converse need not hold \cite[Theorem~2.9 and
Appendix~B]{CarlenMaas2017}. Indeed, It is proved that a semigroup $P_t$ is GNS-symmetry
if and only if $P_t$ is KMS-symmetry and covariant under modular automorphism group
\[
  \sigma^{is}P_t(\sigma^{-is}\cdot \sigma^{is})\sigma^{-is}= P_t(\cdot) ,  \forall \ t
  \ge 0\ .
\]

Under GNS and KMS symmetry with respect to a faithful invariant state
$\sigma$, respectively, define the corresponding spectral gaps
\cite[Section~3]{WirthKMSGNS2026} by
\begin{align*}
  \lambda_{\mathrm{GNS}}(\cL)
  &:=\inf_X
    \frac{-\ip{X}{\cL(X)}_{\mathrm{GNS},\sigma}}
         {\ip{X}{X}_{\mathrm{GNS},\sigma}},
  &\qquad
  \lambda_{\mathrm{KMS}}(\cL)
  &:=\inf_X
    \frac{-\ip{X}{\cL(X)}_{\mathrm{KMS},\sigma}}
         {\ip{X}{X}_{\mathrm{KMS},\sigma}}.
\end{align*}
Here each infimum is over nonzero $X$ orthogonal to $\Ker\cL$ in the
corresponding inner product. In either case, the gap equals
$\min(\spec(-\cL)\setminus\{0\})$ and has the same value for every faithful
invariant reference state. See \cite[Lemma~3.2]{GaoRouze2022} for the GNS case.

\subsection{Conditional expectations and indices}

\begin{definition}
Let $\cN\subseteq\B(\cH)$ be a unital $*$-subalgebra.  A conditional expectation
$E:\B(\cH)\to\cN$ is a unital completely positive map onto $\cN$ satisfying $E\circ E
=E$. It is called faithful if $E(X^*X)=0$ implies $X=0$.  Its trace adjoint is denoted
by $E_*$.
\end{definition}
It is proved \cite[Theorem~1]{Tomiyama1957} that the conditional expectation satisfies
the bimodule property
\begin{equation}
  E(aXb)=aE(X)b,
  \qquad a,b\in\cN,\quad X\in\B(\cH).
  \label{eq:expectation-bimodule}
\end{equation}

\Needspace{11\baselineskip}
Throughout the paper, we will work under the following assumption.
\begin{assumption}
\label{ass:asymptotic-limit}
We assume that the limit
\begin{equation}
  E:=\lim_{t\to\infty}P_t
  \label{eq:asymptotic-expectation}
\end{equation}
exists and is a faithful conditional expectation.
\end{assumption}
Given the assumption, set the fixed-point subspace
\begin{equation}
  \Fix(P):=\{X\in\B(\cH):P_t(X)=X\ \text{for every }t\ge0\},
  \qquad \cN:=\Ran E=\Fix(P).
  \label{eq:fixed-algebra}
\end{equation}
In particular,
\begin{equation}
  P_tE=EP_t=E,
  \qquad
  P_{t*}E_*=E_*P_{t*}=E_*.
  \label{eq:asymptotic-relations}
\end{equation}

Since $E$ is unital and faithful and $P_{t*}E_*=E_*$, the state $E_*(\rho)$ is faithful
and invariant whenever $\rho$ is faithful.  In particular, a canonical choice
$\sigma_0=E_*(I/d)$, where $d=\dim\cH$.

Under GNS or KMS symmetry, $E$ is the orthogonal projection onto $\Ker\cL$
for the corresponding inner product. The orthogonality conditions defining the gaps
in Section~\ref{subsec:qms} can therefore be written as $E(X)=0$.

The QMS is called \emph{primitive} if the fixed space $\cN=\mathbb C I$ is trivial.  In
this case, $E(X)=\Tr(\sigma X)I$ is the depolarizing map for the unique faithful
invariant state $\sigma$.

\Needspace{10\baselineskip}
We recall the standard form of a faithful conditional expectation \cite[Section~2.1,
Eq.~(13)]{GaoRouze2022}.

\begin{proposition}[Structure of conditional expectations]
\label{prop:expectation-structure}
Let $\cH$ be finite-dimensional, $\cN\subseteq\B(\cH)$ be a unital $*$-subalgebra, and
let $E:\B(\cH)\to\cN$ be a faithful conditional expectation.  Up to unitary equivalence,
there exist a finite index set $\mathcal I$ and finite-dimensional Hilbert spaces
$\cK_k,\cR_k, k\in \mathcal{I}$ such that
\begin{equation}
  \cH=\bigoplus_{k\in\mathcal I}\cK_k\otimes\cR_k,
  \qquad
  \cN=\bigoplus_{k\in\mathcal I}
  \B(\cK_k)\otimes I_{\cR_k}.
  \label{eq:wedderburn}
\end{equation}
Let $p_k$ denote the projection onto the $k$th summand, and write $X_{kk}:=p_kXp_k$ and
$\rho_{kk}:=p_k\rho p_k$.  There are faithful density matrices $\sigma_k\in\B(\cR_k)$
for which
\begin{align}
  E(X)
  &=\bigoplus_{k\in\mathcal I}
  \left[(\id_{\cK_k}\otimes\varphi_{\sigma_k})(X_{kk})\right]\otimes I_{\cR_k},
  \label{eq:expectation-blocks}
  \\
  E_*(\rho)
  &=\bigoplus_{k\in\mathcal I}
  \Tr_{\cR_k}(\rho_{kk})\otimes\sigma_k,
  \label{eq:predual-blocks}
\end{align}
where $\varphi_{\sigma_k}(Z):=\Tr(\sigma_k Z)$.  The states fixed by $E_*$ are precisely
those of the form
\begin{equation}
  \rho=\bigoplus_{k\in\mathcal I}\eta_k\otimes\sigma_k,
  \qquad \eta_k\ge0,
  \qquad \sum_{k\in\mathcal I}\Tr\eta_k=1,
  \label{eq:fixed-states}
\end{equation}
where $\eta_k\in \mathcal{B}(\cK_k)$. Such an invariant state $\rho$ is faithful if and
only if $\eta_k$ is strictly positive for every $k$.
\end{proposition}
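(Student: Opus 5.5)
The plan is to combine the Wedderburn decomposition of the finite-dimensional $*$-algebra $\cN$ with the bimodule property \eqref{eq:expectation-bimodule}. First I would invoke the standard structure theorem for unital $*$-subalgebras of $\B(\cH)$ with $\dim\cH<\infty$. Up to a unitary, $\cH=\bigoplus_k\cK_k\otimes\cR_k$ and $\cN=\bigoplus_k\B(\cK_k)\otimes I_{\cR_k}$, which gives \eqref{eq:wedderburn}. The projections $p_k$ onto the summands are central in $\cN$ and satisfy $\sum_k p_k=I$.

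Next I would show that $E$ only sees the diagonal blocks. For $k\ne l$, the bimodule property gives $E(p_kXp_l)=p_kE(X)p_l$. Since $E(X)\in\cN$ is block diagonal, this equals $0$, so $E(X)=\sum_kE(X_{kk})$.

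Inside one block I would use the matrix units $e_{ij}\otimes I_{\cR_k}\in\cN$ of $\B(\cK_k)$. Every $X_{kk}$ is a linear combination of terms $(e_{ij}\otimes I)(e_{11}\otimes b)(e_{11}\otimes I)$ with $b\in\B(\cR_k)$, up to a trivial relabelling. By the bimodule property it therefore suffices to compute $E(e_{11}\otimes b)$. Define $\psi_k(b):=E(p_k(I_{\cK_k}\otimes b)p_k)$. For any $a\in\B(\cK_k)$, the element $a\otimes I\in\cN$ commutes with $I\otimes b$, so $\psi_k(b)$ commutes with $\B(\cK_k)\otimes I$. It also lies in $p_k\cN p_k=\B(\cK_k)\otimes I$. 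Hence $\psi_k(b)=\varphi_k(b)\,p_k$ for a scalar $\varphi_k(b)$.

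The functional $\varphi_k$ is linear, positive and unital, because $E$ is unital, positive and $E(p_k)=p_k$. So $\varphi_k(b)=\Tr(\sigma_kb)$ for a density matrix $\sigma_k$. Reassembling the block with the matrix units gives $E(X_{kk})=[(\id\otimes\varphi_{\sigma_k})(X_{kk})]\otimes I_{\cR_k}$, which is \eqref{eq:expectation-blocks}.

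Faithfulness of $\sigma_k$ follows from faithfulness of $E$. If $\sigma_k$ had a kernel vector $v$, then $E(p_k(I\otimes|v\rangle\langle v|)p_k)=0$, which contradicts faithfulness of $E$.

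The predual formula \eqref{eq:predual-blocks} then comes from duality. One computes $\Tr(E(X)\rho)$ block by block and checks that it equals $\Tr\bigl(X\,(\Tr_{\cR_k}\rho_{kk}\otimes\sigma_k)\bigr)$ using the partial trace identity.

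For the fixed states, $E_*\rho=\rho$ forces $\rho$ to have the form \eqref{eq:fixed-states} with $\eta_k=\Tr_{\cR_k}\rho_{kk}$, because every element in the range of $E_*$ has that form. Conversely, for $\rho$ of that form, $\Tr_{\cR_k}(\eta_k\otimes\sigma_k)\otimes\sigma_k=\eta_k\otimes\sigma_k$, so $\rho$ is fixed. Finally, since each $\sigma_k>0$, the state $\bigoplus_k\eta_k\otimes\sigma_k$ is faithful exactly when every $\eta_k>0$.

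The only step I expect to need care is the reduction inside a block. It relies on the relative-commutant argument showing $\psi_k(b)$ is scalar on the block, and on reducing a general $X_{kk}$ to elementary tensors through the matrix units. Both are routine but index-heavy.
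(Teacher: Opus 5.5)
Your proof is correct. The paper itself gives no argument for this proposition; it simply recalls the standard form from \cite[Section~2.1, Eq.~(13)]{GaoRouze2022}. Your route replaces that citation with a self-contained derivation built from four ingredients:
\begin{itemize}
\item Wedderburn's structure theorem for $\cN$;
\item Tomiyama's bimodule property \eqref{eq:expectation-bimodule}, used to annihilate the off-diagonal corners $p_k\B(\cH)p_l$, $k\ne l$, and to reduce a diagonal block to elementary tensors;
\item a relative-commutant argument placing $E(p_k(I_{\cK_k}\otimes b)p_k)$ in the center $\mathbb C\,p_k$ of $p_k\cN p_k=\B(\cK_k)\otimes I_{\cR_k}$;
\item the trace representation of the resulting unital positive functional $\varphi_k$.
\end{itemize}

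Compared with the citation, your argument is transparent about the hypotheses. Only positivity, unitality and bimodularity of $E$ are used, so complete positivity plays no role. Faithfulness of $E$ enters solely to force $\sigma_k>0$. The block formula for $E$, including its vanishing on cross corners that Section~\ref{sec:corners} relies on, is derived rather than quoted. The citation, of course, buys brevity.

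One small correction is needed. The factorization you want is
\[
e_{ij}\otimes b=(e_{i1}\otimes I)(e_{11}\otimes b)(e_{1j}\otimes I).
\]
The product you wrote, $(e_{ij}\otimes I)(e_{11}\otimes b)(e_{11}\otimes I)$, equals $\delta_{j1}\,e_{i1}\otimes b$. It only reaches the first column, which is more than a relabelling.

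With that fix, compress $\psi_k(b)=\varphi_k(b)\,p_k$ by $e_{11}\otimes I$ to get $E(e_{11}\otimes b)=\varphi_k(b)\,e_{11}\otimes I$. Bimodularity then gives $E(e_{ij}\otimes b)=\varphi_k(b)\,e_{ij}\otimes I$, and the reassembly, duality and fixed-state steps go through as you describe.
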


Let $\cD(\cH)$ denote the set of states (density operators) on $\cH$, and set the set of
invariant states as
\begin{equation}
  \cD(E):=\{\sigma\in\cD(\cH):E_*(\sigma)=\sigma\}.
  \label{eq:invariant-states}
\end{equation}

\begin{definition}
\label{def:pp-indices}
The ordinary and complete Pimsner--Popa indices \cite{PimsnerPopa1986,GaoRouze2022} of a
faithful conditional expectation $E$ are defined by
\begin{align}
  C(E)
  &:=\inf\{C>0:\rho\le CE_*\rho\ \text{for every }\rho\ge0\},
  \label{eq:pp-index}
  \\
  C_{\cb}(E)
  &:=\sup_{n\ge1}C(\id_{M_n}\otimes E).
  \label{eq:complete-pp-index}
\end{align}
\end{definition}
We have the following formula for Pimsner--Popa indices in finite dimensions.
\begin{proposition}[{\cite[Proposition~3.2]
{GaoZhao2026}}]
\label{prop:pp-indices}
Let $E$ be a faithful conditional expectation in the form of Proposition
\ref{prop:expectation-structure}. Let $m_k=\dim\cK_k$ and $d_k=\dim\cR_k$, and list the
eigenvalues  of $\sigma_k^{-1}$ in decreasing order as
$\lambda_1^\downarrow(\sigma_k^{-1}), \cdots, \lambda_{d_k}^\downarrow(\sigma_k^{-1})$,
counted with multiplicity.  Then
\begin{equation}
  C(E)=\sum_{k\in\mathcal I}
  \sum_{j=1}^{\min\{m_k,d_k\}}
  \lambda_j^\downarrow(\sigma_k^{-1}),
  \qquad
  C_{\cb}(E)=\sum_{k\in\mathcal I}\Tr(\sigma_k^{-1}).
  \label{eq:pp-index-formulas}
\end{equation}
\end{proposition}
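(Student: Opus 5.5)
The plan is to compute $C(E)$ directly from the block form in Proposition~\ref{prop:expectation-structure}, in three steps: reduce to pure states, evaluate a scalar quantity on each block, and then pass to amplifications.

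\textbf{Step 1: reduction to pure states.} The condition $\rho\le CE_*\rho$ is linear in $\rho$, and every $\rho\ge0$ is a positive combination of rank-one projections. It therefore suffices to check it for $\rho=|\psi\rangle\langle\psi|$ with $\psi\in\cH$ a unit vector.

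For a positive operator $A$ with $\psi\in\operatorname{supp}A$, the standard fact
\[
  |\psi\rangle\langle\psi|\le CA \iff \langle\psi,A^{-1}\psi\rangle\le C
\]
holds, with $A^{-1}$ the inverse on the support (this follows from Cauchy--Schwarz in the $A$-inner product). Hence
\[
  C(E)=\sup_{\psi}\,\langle\psi,(E_*|\psi\rangle\langle\psi|)^{-1}\psi\rangle,
\]
provided $\psi$ always lies in the support of $E_*|\psi\rangle\langle\psi|$.

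\textbf{Step 2: the block computation.} Write $\psi=\bigoplus_k\psi_k$ with $\psi_k\in\cK_k\otimes\cR_k$. By \eqref{eq:predual-blocks},
\[
  E_*(|\psi\rangle\langle\psi|)=\bigoplus_k\eta_k\otimes\sigma_k,
  \qquad \eta_k=\Tr_{\cR_k}|\psi_k\rangle\langle\psi_k|.
\]
The off-diagonal blocks of $|\psi\rangle\langle\psi|$ are simply discarded.

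Take a Schmidt decomposition $\psi_k=\sum_{j=1}^{r_k}\sqrt{s_j}\,e_j\otimes f_j$, with $r_k\le\min\{m_k,d_k\}$ and orthonormal families $\{e_j\}$, $\{f_j\}$. Then $\eta_k=\sum_j s_j|e_j\rangle\langle e_j|$. So $\psi_k\in\operatorname{supp}(\eta_k)\otimes\cR_k$, and since $\sigma_k$ is faithful this settles the support issue from Step 1.

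A direct computation gives
\[
  (\eta_k^{-1}\otimes\sigma_k^{-1})\psi_k=\sum_j s_j^{-1/2}\,e_j\otimes\sigma_k^{-1}f_j,
  \qquad
  \langle\psi_k,(\eta_k^{-1}\otimes\sigma_k^{-1})\psi_k\rangle=\sum_{j=1}^{r_k}\langle f_j,\sigma_k^{-1}f_j\rangle .
\]
This quantity does not depend on the Schmidt weights or on the norm of $\psi_k$. Summing over $k$, the supremum over $\psi$ becomes the sum over $k$ of the suprema over orthonormal families of size at most $\min\{m_k,d_k\}$ in $\cR_k$. By Ky Fan's maximum principle, each of these equals $\sum_{j\le\min\{m_k,d_k\}}\lambda_j^\downarrow(\sigma_k^{-1})$. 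The supremum is attained by taking $f_j$ to be top eigenvectors and every $\psi_k\neq0$, which yields the formula for $C(E)$.

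\textbf{Step 3: the complete index.} The amplification $\id_{M_n}\otimes E$ has the same form, with $\cK_k$ replaced by $\mathbb C^n\otimes\cK_k$ and the same $\cR_k$ and $\sigma_k$. So $m_k$ is replaced by $nm_k$. The resulting expression is nondecreasing in $n$, and once $n\ge\max_k d_k$ it equals $\sum_k\Tr(\sigma_k^{-1})$. Hence this is the value of $C_{\cb}(E)$.

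The main point requiring care is Step 1: justifying the equivalence between the operator inequality and the scalar quantity, including the support condition. Step 2 handles the support condition via the Schmidt decomposition. The remaining steps are bookkeeping together with Ky Fan's principle.
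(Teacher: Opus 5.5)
Your proposal is correct and complete. The paper gives no proof of its own to compare it with: it quotes the proposition from \cite[Proposition~3.2]{GaoZhao2026} and only records the two formulas and the Heisenberg-picture reformulations \eqref{eq:pp-order}--\eqref{eq:pp-cp-order}.

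Your argument has three steps. You reduce to rank-one $\rho$ using linearity of $\rho\mapsto CE_*\rho-\rho$. You turn $|\psi\rangle\langle\psi|\le CA$ into $\langle\psi,A^{-1}\psi\rangle\le C$ by Cauchy--Schwarz. You then evaluate this on $A=\bigoplus_k\eta_k\otimes\sigma_k$ using the Schmidt decomposition of each $\psi_k$. The argument is elementary and buys two things.

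\emph{Explicit extremizers.} The Schmidt weights cancel, so the extremal vectors are explicit. Any $\psi$ whose blocks all have Schmidt rank $\min\{m_k,d_k\}$, with right Schmidt vectors given by eigenvectors for the largest eigenvalues of $\sigma_k^{-1}$, attains $C(E)$.

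\emph{Uniform treatment of both indices.} Substituting $m_k\mapsto nm_k$ gives more than the stated formula: $C(\id_{M_n}\otimes E)=C_{\cb}(E)$ as soon as $nm_k\ge d_k$ for every $k$. So the supremum in \eqref{eq:complete-pp-index} is attained at a finite amplification level, in the same spirit as the paper's stabilization results.

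An alternative route to $C_{\cb}(E)$ alone uses \eqref{eq:pp-cp-order} and the Choi matrix of $CE-\id$. In your language, this evaluates $\langle\psi,A^{-1}\psi\rangle$ at a single maximally entangled vector. That vector has Schmidt rank $d_k$ in every block, so it yields $\sum_k\Tr(\sigma_k^{-1})$ immediately. However, it says nothing about the ordinary index $C(E)$, which is not a complete-positivity-order quantity. That case needs the optimization over Schmidt ranks that your Ky Fan step supplies.
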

Via trace duality, the Pimsner--Popa indices are equivalently formulated in the
Heisenberg picture
\begin{align}
  C(E)
  &=\inf\{C>0:X\le CE(X)\ \text{for all }X\ge0\},
  \label{eq:pp-order}
  \\
  C_{\cb}(E)
  &=\inf\{C>0:CE-\id\ \text{is completely positive}\}.
  \label{eq:pp-cp-order}
\end{align}

\begin{example}
The ordinary and complete indices need not coincide.  For $\cH=\cK\otimes\mathbb C^d$,
consider the conditional expectation
$E(X)=[(\id_{\B(\cK)}\otimes\frac{\Tr}{d})(X)]\otimes I_d$.  Then
\[
  C(E)=d\min\{\dim\cK,d\},
  \qquad C_{\cb}(E)=d^2.
\]
For the replacer channel $E_*(\rho)=\Tr(\rho)\sigma$, where $\sigma$ is a faithful
state, the indices are
\[
  C(E)=\norm{\sigma^{-1}}_\infty,
  \qquad C_{\cb}(E)=\Tr(\sigma^{-1}).
\]
\end{example}

\subsection{Modified log-Sobolev constant}

For states $\rho$ and $\sigma$, define the Umegaki relative entropy
\begin{equation}
  D(\rho\Vert\sigma):=\Tr\rho(\log\rho-\log\sigma),
  \label{eq:relative-entropy}
\end{equation}
given the support condition $\text{supp}(\rho)\subset \text{supp}(\sigma)$. The
definition extends to the pair $\rho$ and $\sigma$ with equal trace.  Relative to the
conditional expectation E $E$, define
\begin{equation}
  D_{E}(\rho):=\inf_{\sigma\in\cD(E)}D(\rho\Vert\sigma)=D(\rho\Vert E_*\rho).
  \label{eq:conditional-entropy}
\end{equation}
Here the infimum in \eqref{eq:conditional-entropy} is attained at $E_*\rho$ by the chain
rule \cite[Section~2.1, Eq.~(12)]{GaoRouze2022} is
\begin{equation}
  D(\rho\Vert\omega)
  =D(\rho\Vert E_*\rho)+D(E_*\rho\Vert\omega),
  \qquad \omega\in\cD(E),
  \label{eq:pythagorean}
\end{equation}
In particular, $D(\rho\Vert E_*\rho)<\infty$ because $\rho\le C(E)E_*\rho$.

For a faithful state $\rho$, define the entropy production \cite{Spohn1978}
\begin{equation}
  \EP_{\cL}(\rho)
  :=-\Tr\!\left[\cL_*(\rho)(\log\rho-\log E_*\rho)\right]
  =-\left.\frac{\dd}{\dd t}\right|_{t=0^+}
  D(P_{t*}\rho\Vert E_*\rho).
  \label{eq:entropy-production}
\end{equation}
The entropy production $\EP_{\cL}$ is always nonnegative by data processing inequality,
since $P_{t*}E_*=E_*$.

\begin{definition}
\label{def:mlsi}
The modified log-Sobolev inequality (MLSI) constant of $\cL$ relative to $\cN$ is
\begin{equation}
  \alpha(\cL)
  :=\inf_{\substack{\rho\in\cD(\cH),\ \rho>0\\
  \rho\ne E_*\rho}}
  \frac{\EP_{\cL}(\rho)}{2D(\rho\Vert E_*\rho)}.
  \label{eq:mlsi-constant}
\end{equation}
\end{definition}
Equivalently, $\alpha(\cL)\ge\alpha$ means
\begin{equation}
  \EP_{\cL}(\rho)\ge2\alpha D(\rho\Vert E_*\rho)
  \label{eq:mlsi-ineq}
\end{equation}
for every faithful state. By standard Gronwall Lemma, \eqref{eq:mlsi-ineq} is equivalent
to
\begin{equation}
  D(P_{t*}\rho\Vert E_*\rho)
  \le e^{-2\alpha t}D(\rho\Vert E_*\rho),
  \qquad t\ge0,
  \label{eq:mlsi-decay}
\end{equation}
for every faithful state $\rho$. For $\alpha>0$, \eqref{eq:mlsi-decay} implies $P_t\to
E$. For MLSI in the non-primitive setting was introduced in Bardet
\cite{Bardet2017Decoherence} and studied in
\cite{GaoRouze2022,GaoJungeLaRacuenteLi2025}. We will use the normalization
$\EP\ge2\alpha D$ throughout.

For $n\ge1$, set the matrix level amplification
\[
  \cL^{(n)}:=\id_{M_n}\otimes\cL,
  \qquad
  E^{(n)}:=\id_{M_n}\otimes E,
\]
and let $\alpha(\cL^{(n)})$ denote the corresponding MLSI constant of $\cL^{(n)}$
relative to $\Ran E^{(n)}=M_n\otimes\cN$.  The complete MLSI constant is defined as
\begin{equation}
  \alpha^{\mathrm c}(\cL)
  :=\inf_{n\ge1}\alpha(\cL^{(n)}).
  \label{eq:cmlsi-constant}
\end{equation}
Namely, $\alpha^{\mathrm c}(\cL)\ge\alpha$ if and only if \eqref{eq:mlsi-decay} holds
for every finite matrix amplification.  The complete MLSI constant was introduced in
\cite{GaoJungeLaRacuente2020fisher} for the tensorization property that for two quantum
Markov semigroups $P_t=e^{\cL_1 t}$ and $Q_t=e^{\cL_2 t}$,
\[
  \alpha^{\mathrm c}(\cL_1\otimes  \id +\id \otimes \cL_2)=\min\{ \alpha^{\mathrm
  c}(\cL_1) ,\alpha^{\mathrm c}(\cL_2)\}\ .
\]

\subsection{BKM metric and coercivity constants}
\label{subsec:bkm-metric}
For $\sigma>0$, define the Bogoliubov--Kubo--Mori (BKM) map
\cite{Petz1996,LesniewskiRuskai1999}
\begin{equation}
  \Gamma_\sigma(X):=\int_0^1\sigma^sX\sigma^{1-s}\dd s.
  \label{eq:bkm-map}
\end{equation}
Given the spectrum decomposition $\sigma=\sum_i\lambda_i|i\rangle\langle i|$ and
$E_{ij}:=|i\rangle\langle j|$, then
\begin{equation}
  \Gamma_\sigma(E_{ij})=m(\lambda_i,\lambda_j)E_{ij},
  \qquad
  m(a,b)=\frac{a-b}{\log a-\log b}
  \label{eq:logarithmic-mean}
\end{equation}
with $m(a,a)=a$ by continuity. The inverse of $\Gamma_\sigma$ defines the Hessian metric
of relative entropy.

\begin{lemma}
\label{lem:bkm-calculus}
For every $\rho>0$ and $X=X^*\in\B(\cH)$, the Fr\'echet derivative of the matrix
logarithm satisfies \cite[Section~III]{Lieb1973}
\begin{align}
  D(\log)_\rho[X]
  &:=\left.\frac{\dd}{\dd u}\right|_{u=0}\log(\rho+uX) \nonumber\\
  &=\int_0^\infty(\rho+sI)^{-1}X(\rho+sI)^{-1}\dd s
  =\Gamma_\rho^{-1}(X).
  \label{eq:frechet-log}
\end{align}
For strictly positive states $\rho,\sigma$, set $A:=\rho-\sigma$ and
$\rho_t:=t\rho+(1-t)\sigma$. Consequently,
\begin{align}
  \log\rho-\log\sigma
  &=\int_0^1\Gamma_{\rho_s}^{-1}(A)\dd s.
  \label{eq:log-radial}
\end{align}
Moreover, the relative entropy admits the integral representation
\cite[Lemma~2.2]{GaoRouze2022}
\begin{align}
  D(\rho\Vert\sigma)
  &=\int_0^1(1-t)\Tr[A\Gamma_{\rho_t}^{-1}(A)]\dd t.
  \label{eq:entropy-integral}
\end{align}
In particular, for every strictly positive state $\sigma$ and traceless self-adjoint
$A$,
\begin{equation}
  D(\sigma+\eps A\Vert\sigma)
  =\frac{\eps^2}{2}\Tr[A\Gamma_\sigma^{-1}(A)]+o(\eps^2)
  \qquad (\eps\to0).
  \label{eq:entropy-hessian}
\end{equation}
\end{lemma}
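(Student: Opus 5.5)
We prove the four claims of Lemma~\ref{lem:bkm-calculus} in order, each from the previous ones.

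\textbf{Step 1: derivative of the logarithm.} Start from the resolvent representation
\[
  \log\rho=\int_0^\infty\big((1+s)^{-1}I-(\rho+sI)^{-1}\big)\dd s ,\qquad \rho>0.
\]
Differentiate under the integral at $\rho+uX$, using $\frac{\dd}{\dd u}(\rho+uX+s)^{-1}=-(\rho+s)^{-1}X(\rho+s)^{-1}$. Dominated convergence applies because $\rho\ge\lambda_{\min}(\rho)I>0$. This gives the resolvent integral for $D(\log)_\rho[X]$. To identify it with $\Gamma_\rho^{-1}$, pass to the eigenbasis $\rho=\sum_i\lambda_i|i\rangle\langle i|$. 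On $E_{ij}$ the integral acts by the scalar
\[
  \int_0^\infty\frac{\dd s}{(\lambda_i+s)(\lambda_j+s)}=\frac{\log\lambda_i-\log\lambda_j}{\lambda_i-\lambda_j}=m(\lambda_i,\lambda_j)^{-1},
\]
with value $1/\lambda_i$ when $\lambda_i=\lambda_j$. By \eqref{eq:logarithmic-mean}, $\Gamma_\rho$ acts on $E_{ij}$ by $m(\lambda_i,\lambda_j)$, so the two operators are inverse to each other. This proves \eqref{eq:frechet-log}.

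\textbf{Step 2: the radial formula.} Along the segment $\rho_t=\sigma+tA$, every $\rho_t$ is strictly positive, being a convex combination of positive definite matrices. Step 1 and the chain rule give
\[
  \frac{\dd}{\dd t}\log\rho_t=\Gamma_{\rho_t}^{-1}(A).
\]
Integrating from $0$ to $1$ yields \eqref{eq:log-radial}.

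\textbf{Step 3: the entropy integral.} Set $f(t)=D(\rho_t\Vert\sigma)$. Then $f(0)=0$, and
\[
  f'(t)=\Tr[A(\log\rho_t-\log\sigma)]+\Tr[\rho_t\,\Gamma_{\rho_t}^{-1}(A)].
\]
The second term equals $\Tr[A]=0$. Indeed, $\Gamma_{\rho_t}^{-1}$ is self-adjoint for the Hilbert--Schmidt inner product, since it is diagonal in the $E_{ij}$ basis with real weights. Also $\Gamma_{\rho_t}^{-1}(\rho_t)=I$, because $\rho_t$ commutes with itself. Hence $f'(0)=0$. Differentiating once more, and applying Step 1 to the first term, gives
\[
  f''(t)=\Tr[A\Gamma_{\rho_t}^{-1}(A)].
\]
Taylor's formula with integral remainder, $f(1)=\int_0^1(1-t)f''(t)\dd t$, is then exactly \eqref{eq:entropy-integral}.

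\textbf{Step 4: the second-order expansion.} Apply \eqref{eq:entropy-integral} with $\rho=\sigma+\eps A$. Here $\eps$ is small enough that $\rho>0$, and $\rho$ is a state because $A$ is traceless. Replacing $A$ by $\eps A$, the integrand becomes $\eps^2\Tr[A\Gamma^{-1}_{\sigma+t\eps A}(A)]$. The map $\tau\mapsto\Gamma_\tau^{-1}$ is continuous on positive definite matrices; this follows from the resolvent formula of Step 1. So the integrand converges uniformly in $t$ to $\eps^2\Tr[A\Gamma_\sigma^{-1}(A)]$. Since $\int_0^1(1-t)\dd t=1/2$, we obtain \eqref{eq:entropy-hessian}.

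\textbf{Main obstacle.} The only delicate point is justifying differentiation under the improper integral in Step 1. The uniform lower bound $\rho_t\ge c>0$ along the segment resolves it. Everything else is finite-dimensional spectral calculus.
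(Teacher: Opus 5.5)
Your proposal is correct. Every step checks out: the resolvent representation of $\log$, the eigenbasis identity $\int_0^\infty(\lambda_i+s)^{-1}(\lambda_j+s)^{-1}\dd s=m(\lambda_i,\lambda_j)^{-1}$, integration of the derivative along the segment, and Taylor's formula with integral remainder, where $\Gamma_{\rho_t}^{-1}(\rho_t)=I$ and $\Tr A=0$ eliminate the first-order term. The paper does not prove this lemma itself but cites Lieb and Gao--Rouz\'e, and your argument is essentially the standard one behind those citations.
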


Let $\sigma$ be a faithful invariant state and let $A=A^*$ satisfy $E_*A=0$. Since $E_*$
is trace preserving, $\Tr A=0$.  Thus, for all sufficiently small real $\eps$,
$\sigma+\eps A$ is a faithful state with $E_*(\sigma+\eps A)=\sigma$.  Expanding
\eqref{eq:entropy-production} gives
\begin{equation}
  \EP_{\cL}(\sigma+\eps A)
  =-\eps^2\Tr[\cL_*(A)\Gamma_\sigma^{-1}(A)]+o(\eps^2).
  \label{eq:production-hessian}
\end{equation}

\Needspace{12\baselineskip}
\begin{definition}
\label{def:bkm-constants}
The \emph{BKM coercivity constant} of $\cL$ is
\begin{equation}
  \lambda_{\mathrm{BKM}}(\cL)
  :=\inf_{\substack{\sigma>0,\ \Tr\sigma=1,\ \cL_*(\sigma)=0\\
  A=A^*\ne0,\ E_*A=0}}
  \frac{-\Tr[\cL_*(A)\Gamma_\sigma^{-1}(A)]}
  {\Tr[A\Gamma_\sigma^{-1}(A)]},
  \label{eq:bkm-constant}
\end{equation}
The infimum ranges over both faithful invariant states $\sigma$ and centered
self-adjoint directions $A$. For $n\ge1$, write $\lambda_{\mathrm{BKM}}^{(n)}(\cL)
:=\lambda_{\mathrm{BKM}}(\id_{M_n}\otimes\cL)$. The \emph{complete BKM coercivity
constant} is
\begin{equation}
  \lambda_{\mathrm{BKM}}^{\mathrm c}(\cL)
  :=\inf_{n\ge1}\lambda_{\mathrm{BKM}}^{(n)}(\cL).
  \label{eq:complete-bkm-constant}
\end{equation}
\end{definition}
Linearization gives the following upper bounds.
\begin{proposition}
\label{prop:alpha-upper-bkm}
Let $P_t=e^{t\cL}$ be a QMS with a faithful asymptotic conditional expectation $E$. Then
\begin{equation}
  \alpha(\cL)\le\lambda_{\mathrm{BKM}}(\cL),
  \qquad
  \alpha^{\mathrm c}(\cL)\le\lambda_{\mathrm{BKM}}^{\mathrm c}(\cL).
  \label{eq:mlsi-upper-bounds}
\end{equation}
\end{proposition}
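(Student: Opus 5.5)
The plan is to linearize the MLSI inequality at an invariant state. Fix a faithful invariant state $\sigma$ with $\cL_*(\sigma)=0$ and a nonzero self-adjoint $A$ with $E_*A=0$. First we check that $E_*\sigma=\sigma$. By Assumption~\ref{ass:asymptotic-limit}, $P_{t*}\sigma=\sigma$ for all $t$, and letting $t\to\infty$ gives $E_*\sigma=\sigma$. Hence, for all sufficiently small real $\eps\neq0$, the operator $\rho_\eps:=\sigma+\eps A$ is a faithful state with $E_*\rho_\eps=\sigma$, as noted before Definition~\ref{def:bkm-constants}. Moreover $\rho_\eps\ne E_*\rho_\eps$ because $A\neq0$. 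Thus $\rho_\eps$ is admissible in the infimum \eqref{eq:mlsi-constant}, and
\[
  \alpha(\cL)\le\frac{\EP_{\cL}(\rho_\eps)}{2D(\rho_\eps\Vert\sigma)}.
\]

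Next we insert the two second-order expansions. By \eqref{eq:entropy-hessian}, the denominator is $\eps^2\Tr[A\Gamma_\sigma^{-1}(A)]+o(\eps^2)$. By \eqref{eq:production-hessian}, the numerator is $-\eps^2\Tr[\cL_*(A)\Gamma_\sigma^{-1}(A)]+o(\eps^2)$. Since $\Gamma_\sigma^{-1}$ is a positive operator on the Hilbert--Schmidt space (its eigenvalues are $1/m(\lambda_i,\lambda_j)>0$), we have $\Tr[A\Gamma_\sigma^{-1}(A)]>0$. We may therefore divide by $\eps^2$ and let $\eps\to0$ to get
\[
  \alpha(\cL)\le\frac{-\Tr[\cL_*(A)\Gamma_\sigma^{-1}(A)]}{\Tr[A\Gamma_\sigma^{-1}(A)]}.
\]
Taking the infimum over all admissible pairs $(\sigma,A)$ gives $\alpha(\cL)\le\lambda_{\mathrm{BKM}}(\cL)$.

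The one point that needs care is the expansion \eqref{eq:production-hessian}, since it is only asserted in the text. I would verify it directly. Write $\log\rho_\eps-\log\sigma=\eps\Gamma_\sigma^{-1}(A)+O(\eps^2)$ using Lemma~\ref{lem:bkm-calculus}, and use $\cL_*(\rho_\eps)=\eps\cL_*(A)$, which holds because $\cL_*\sigma=0$. Multiplying these gives the claimed $\eps^2$ term, with an $O(\eps^3)$ remainder.

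For the complete statement, apply the first inequality to $\cL^{(n)}=\id_{M_n}\otimes\cL$ for each $n$. This requires checking that $\cL^{(n)}$ satisfies the same hypotheses. The semigroup $\id_{M_n}\otimes P_t$ is a QMS converging in norm to $E^{(n)}$, and $E^{(n)}$ is a faithful conditional expectation onto $M_n\otimes\cN$. Faithfulness holds because $E^{(n)}$ is again of the block form in Proposition~\ref{prop:expectation-structure}, with $\cK_k$ replaced by $\mathbb C^n\otimes\cK_k$. Hence $\alpha(\cL^{(n)})\le\lambda_{\mathrm{BKM}}^{(n)}(\cL)$ for every $n\ge1$. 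Taking the infimum over $n$ on both sides yields $\alpha^{\mathrm c}(\cL)\le\lambda^{\mathrm c}_{\mathrm{BKM}}(\cL)$.

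I expect no real obstacle. The only mild points are confirming that the invariant state satisfies $E_*\sigma=\sigma$, so that $E_*\rho_\eps=\sigma$, and controlling the remainder in the expansion of the entropy production.
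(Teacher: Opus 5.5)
Your proposal is correct and follows the paper's proof. You perturb a faithful invariant state along $\rho_\eps=\sigma+\eps A$, use the expansions \eqref{eq:entropy-hessian} and \eqref{eq:production-hessian} to send the MLSI quotient to the BKM quotient, take the infimum, and repeat at every matrix level. Your extra checks ($E_*\sigma=\sigma$, the $O(\eps^3)$ remainder in the entropy production, and that $\id_{M_n}\otimes P_t$ again satisfies Assumption~\ref{ass:asymptotic-limit}) are details the paper leaves implicit.
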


\begin{proof}
Fix an admissible pair $(\sigma,A)$ in the infimum of
Definition~\ref{def:bkm-constants}. For every sufficiently small nonzero real $\eps$,
the state $\rho_\eps:=\sigma+\eps A$ is faithful and $E_*\rho_\eps=\sigma$.  By
\eqref{eq:entropy-hessian} and \eqref{eq:production-hessian},
\[
  \lim_{\eps\to0}
  \frac{\EP_{\cL}(\rho_\eps)}
  {2D(\rho_\eps\Vert E_*\rho_\eps)}
  =\frac{-\Tr[\cL_*(A)\Gamma_\sigma^{-1}(A)]}
  {\Tr[A\Gamma_\sigma^{-1}(A)]}.
\]
Taking the infimum over admissible $(\sigma,A)$ proves the bound. Applying the same
argument to $\id_{M_n}\otimes\cL$ for every $n$, and then taking the infimum over $n$,
proves the complete bound.
\end{proof}

\section{MLSI bound via BKM coercivity}
\label{sec:bkm-mlsi}
We derive lower bounds for the MLSI constants in terms of BKM coercivity and the
Pimsner--Popa indices.  The proof uses Frenkel's formula to express the relative entropy
and BKM metric contraction deficits as integrals against the same positive measure.

For $\sigma>0$ and $A=A^*$, write
\begin{equation}
  H_\sigma(A):=\Tr[A\Gamma_\sigma^{-1}(A)],
  \qquad
  Q_{\sigma,\cL}(A):=-\Tr[\cL_*(A)\Gamma_\sigma^{-1}(A)],
  \label{eq:bkm-forms}
\end{equation}
and abbreviate $Q_{\sigma,\cL}$ to $Q_\sigma$ whenever the generator is clear.

\subsection{An entropy-metric comparison}
For states $\rho,\sigma$ with $\sigma>0$, $H_\sigma(\rho-\sigma)$ is the quantum
$\chi^2$-divergence associated with BKM metric; see \cite[Definition~1 and
Eq.~(54)]{TemmeKastoryanoRuskaiWolfVerstraete2010}. We recall the entropy-metric
comparison from \cite[Lemma~2.2]{GaoRouze2022} (see also
\cite{LesniewskiRuskai1999,TemmeKastoryanoRuskaiWolfVerstraete2010}).
\begin{lemma}
\label{lem:entropy-metric}
Let $\rho\ge0$ and $\sigma>0$ have equal trace and $\rho\le C\sigma$.  Then
\begin{equation}
  D(\rho\Vert\sigma)\le H_\sigma(\rho-\sigma).
  \label{eq:entropy-metric}
\end{equation}
\end{lemma}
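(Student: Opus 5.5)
The plan is to combine the integral representation \eqref{eq:entropy-integral} with operator convexity of the BKM quadratic form in the state variable. Set $A:=\rho-\sigma$, which is traceless because $\rho$ and $\sigma$ have equal trace, and $\rho_t:=t\rho+(1-t)\sigma$. By \eqref{eq:entropy-integral},
\[
  D(\rho\Vert\sigma)=\int_0^1(1-t)\,H_{\rho_t}(A)\dd t .
\]
The hypothesis $\rho\le C\sigma$ is used only to know that $\operatorname{supp}\rho\subseteq\operatorname{supp}\sigma$, so that $D(\rho\Vert\sigma)$ is finite. If $\rho$ is not faithful, I would first prove the inequality for $\rho+\delta\sigma$ (renormalized to have the same trace as $\sigma$) and then let $\delta\to0$. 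In the limit, $H_\sigma(\rho-\sigma)$ is continuous and $D$ is continuous on the face $\rho\le C\sigma$.

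The key step is that $\tau\mapsto H_\tau(A)=\Tr[A\Gamma_\tau^{-1}(A)]$ is jointly convex, and in particular convex in $\tau>0$ for fixed $A$. This is the classical Lieb/Lesniewski--Ruskai convexity of the BKM quadratic form. It follows from \eqref{eq:frechet-log}, which gives
\[
  H_\tau(A)=\int_0^\infty\Tr[A(\tau+s)^{-1}A(\tau+s)^{-1}]\dd s,
\]
together with Lieb's concavity and convexity theorem for $(X,\tau)\mapsto\Tr X^*\tau^{-1}X\tau^{-1}$ style functionals. Alternatively, one can read it off from the $\chi^2$ literature cited just before the lemma. Convexity along the segment gives
\[
  H_{\rho_t}(A)\le tH_\rho(A)+(1-t)H_\sigma(A).
\]

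Plugging this into the integral and using
\[
  \int_0^1(1-t)t\dd t=\tfrac16,\qquad \int_0^1(1-t)^2\dd t=\tfrac13
\]
yields
\[
  D(\rho\Vert\sigma)\le\tfrac16H_\rho(A)+\tfrac13H_\sigma(A).
\]
This still contains $H_\rho(A)$, so convexity alone does not close the argument.

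To finish, I would instead use the known monotone comparison $D(\rho\Vert\sigma)\le\log\Tr[\rho\,\Gamma_\sigma^{-1}(\rho)]$. This holds because the BKM $\chi^2$ quantity dominates the Rényi-2 (Petz or sandwiched type) divergence, and $D\le D_2$ by monotonicity in the Rényi order. Then $\log(1+x)\le x$ gives
\[
  D(\rho\Vert\sigma)\le H_\sigma(\rho-\sigma),
\]
since $\Tr[\rho\Gamma_\sigma^{-1}\rho]=1+H_\sigma(A)$ when $\Tr\rho=\Tr\sigma=1$; the general equal-trace case follows by scaling. The main obstacle is justifying $D\le\log(1+\chi^2_{\mathrm{BKM}})$, i.e.\ that the BKM $\chi^2$ is the largest among the relevant quantum $\chi^2$ divergences relative to $D_2$. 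My first attempt would be to use that $\Gamma_\sigma^{-1}$ dominates $\sigma^{-1/2}\cdot\sigma^{-1/2}$ in the operator-monotone-function ordering (Petz), so that $\Tr\rho\Gamma_\sigma^{-1}\rho\ge\Tr(\sigma^{-1/4}\rho\sigma^{-1/4})^2$, and then to apply $D\le\widetilde D_2$.
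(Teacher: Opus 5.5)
Your proposal has a genuine gap. The convexity route stalls, as you note yourself, and the step meant to finish the second route is false. The logarithmic mean dominates the geometric mean, $m(a,b)\ge\sqrt{ab}$. So in an eigenbasis $\sigma=\sum_i\lambda_i|i\rangle\langle i|$,
\[
  \Tr[\rho\,\Gamma_\sigma^{-1}(\rho)]
  =\sum_{i,j}\frac{|\rho_{ij}|^2}{m(\lambda_i,\lambda_j)}
  \le\sum_{i,j}\frac{|\rho_{ij}|^2}{\sqrt{\lambda_i\lambda_j}}
  =\Tr\bigl[(\sigma^{-1/4}\rho\,\sigma^{-1/4})^2\bigr].
\]
The inequality is strict as soon as $\rho$ has a nonzero entry between distinct eigenvalues of $\sigma$; for $\sigma=\diag(1/5,4/5)$ one has $m=\tfrac{3}{5\log 4}\approx0.433>0.4=\sqrt{\lambda_1\lambda_2}$. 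Thus $\Gamma_\sigma^{-1}$ is dominated by $\sigma^{-1/2}(\cdot)\sigma^{-1/2}$, not the other way round. The BKM $\chi^2$ is therefore \emph{smaller} than the KMS-type $\chi^2$, which equals $e^{\widetilde D_2}-1$, and the Petz quantity $\Tr[\rho^2\sigma^{-1}]-1$ is larger still. Consequently $D\le\widetilde D_2$ only gives $D\le\log\bigl(1+\Tr[(\rho-\sigma)\sigma^{-1/2}(\rho-\sigma)\sigma^{-1/2}]\bigr)$, which is weaker than the lemma. The ``main obstacle'' you name, that BKM is the largest of these $\chi^2$'s, is not just unproved; it is false.

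The missing idea belongs in your first route: you need \emph{monotonicity} of $\tau\mapsto H_\tau(A)$ in the Loewner order, plus homogeneity $H_{c\tau}(A)=c^{-1}H_\tau(A)$, not convexity.
\begin{itemize}
\item By \eqref{eq:frechet-log}, $H_\tau(A)=\int_0^\infty\Tr[AX_sAX_s]\dd s$ with $X_s=(\tau+s)^{-1}$.
\item For self-adjoint $A$, the form $\Tr[AXAY]=\Tr[Y^{1/2}AXAY^{1/2}]$ is increasing in each of $X,Y\ge0$.
\item If $\tau_1\le\tau_2$, then $(\tau_1+s)^{-1}\ge(\tau_2+s)^{-1}$, hence $H_{\tau_1}(A)\ge H_{\tau_2}(A)$.
\item Since $\rho\ge0$, you have $\rho_t\ge(1-t)\sigma$, so $H_{\rho_t}(A)\le(1-t)^{-1}H_\sigma(A)$.
\end{itemize}
The weight in \eqref{eq:entropy-integral} then cancels exactly:
\[
  D(\rho\Vert\sigma)=\int_0^1(1-t)H_{\rho_t}(A)\dd t\le\int_0^1H_\sigma(A)\dd t=H_\sigma(A).
\]
The paper does not reprove the lemma; it cites \cite[Lemma~2.2]{GaoRouze2022}. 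The argument above is the standard proof via \eqref{eq:entropy-integral}. It uses only $\sigma>0$, and your regularization handles non-faithful $\rho$.

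If you prefer to keep the logarithmic route, use the Gibbs variational principle, $D(\rho\Vert\sigma)\le\log\Tr\exp(2\log\rho-\log\sigma)$, together with Lieb's triple-matrix inequality, $\Tr\exp(2\log\rho-\log\sigma)\le\Tr[\rho\,\Gamma_\sigma^{-1}(\rho)]=1+H_\sigma(A)$. R\'enyi monotonicity cannot reach the BKM quantity.
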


We present a comparison lemma for the contraction difference in terms of relative
entropy and BKM metric. It is an application of the Frenkel's integral formula of
relative entropy \cite[Theorem~6]{Frenkel2023}
\begin{equation}
  D(\rho\Vert\sigma)
  =\int_{-\infty}^{\infty}
  \frac{\Tr^-\!\left((1-u)\rho+u\sigma\right)}
  {|u|(u-1)^2}\dd u.
  \label{eq:frenkel-entropy}
\end{equation}
Here, for a Hermitian operator $X$, we denote
\[
  \Tr^-(X):=\Tr(X_-),
  \qquad
  X_-:=\frac{|X|-X}{2}.
\]

\begin{lemma}
\label{lem:entropy-deficit}
Let $\rho\ge0$ and $\sigma>0$ have equal trace, and let $T_*$ be a positive
trace-preserving map satisfying $T_*(\sigma)=\sigma$.  If $\rho\le C\sigma$ for some
$C>1$, then
\begin{equation}
  D(\rho\Vert\sigma)-D(T_*\rho\Vert\sigma)
  \ge\frac1{2C}
  \left[H_\sigma(\rho-\sigma)-H_\sigma(T_*\rho-T_*\sigma)\right].
  \label{eq:deficit-comparison}
\end{equation}
\end{lemma}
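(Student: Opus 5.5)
The plan is to write both sides of \eqref{eq:deficit-comparison} as integrals of the same family of nonnegative \emph{pointwise} contraction deficits. The inequality then reduces to comparing two scalar weights.

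Set $A:=\rho-\sigma$, so that $(1-u)\rho+u\sigma=\sigma+(1-u)A$. Reparametrizing Frenkel's formula \eqref{eq:frenkel-entropy} by $t=1-u$ gives
\[
D(\rho\Vert\sigma)=\int_{\mathbb R}\frac{\Tr^-(\sigma+tA)}{t^2\,|1-t|}\dd t .
\]
For the BKM metric I will invoke Frenkel's Hessian formula \cite[Theorem~7(a)]{Frenkel2023}. I expect it to take the form
\[
H_\sigma(A)=\int_{\mathbb R}\frac{2\,\Tr^-(\sigma+tA)}{|t|^3}\dd t,
\]
whose normalization is forced by the scaling $A\mapsto\lambda A$. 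Before relying on it I will check this constant against \eqref{eq:entropy-hessian}, by taking $\rho=\sigma+\eps A$ and letting $\eps\to0$ in the entropy formula.

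Now apply both formulas to the pair $(T_*\rho,\sigma)$. Since $T_*\sigma=\sigma$, we have $T_*\rho-\sigma=T_*A$, and $\sigma+tT_*A=T_*(\sigma+tA)$. Subtracting, each deficit becomes
\[
\int_{\mathbb R} w(t)\,\delta(t)\dd t,
\qquad
\delta(t):=\Tr^-(\sigma+tA)-\Tr^-\big(T_*(\sigma+tA)\big),
\]
where $w$ is the corresponding weight. The key observation is that $\delta(t)\ge0$ for every $t$. Indeed, for Hermitian $X$ we have $\Tr^-(X)=\tfrac12(\norm{X}_1-\Tr X)$. A positive trace-preserving map preserves the trace and is contractive in trace norm on Hermitian operators, because it maps $X_+-X_-$ to a difference of positive operators with the same traces.

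The second step is to localize the support of $\delta$ using $\rho\le C\sigma$. The operator $\sigma+tA=(1-t)\sigma+t\rho$ is positive for $t\in[0,1]$. For $t<0$ it satisfies
\[
\sigma+tA\ge\big((1-t)+tC\big)\sigma,
\]
which is positive as long as $t\ge-1/(C-1)$. The same bounds hold for $T_*(\sigma+tA)$, since $T_*\rho\le CT_*\sigma=C\sigma$ by positivity of $T_*$. Hence $\delta(t)=0$ unless $t>1$ or $t<-1/(C-1)$.

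The final step compares the weights on this region. The ratio of the entropy weight to the metric weight is $|t|/(2|1-t|)$.
\begin{itemize}
\item For $t>1$ the ratio equals $t/(2(t-1))\ge\tfrac12\ge\tfrac1{2C}$.
\item For $t<-1/(C-1)$ it equals $|t|/(2(1+|t|))$, which is increasing in $|t|$ and hence at least $\tfrac{1/(C-1)}{2(1+1/(C-1))}=\tfrac1{2C}$.
\end{itemize}
Integrating the pointwise inequality against $\delta\ge0$ yields \eqref{eq:deficit-comparison}.

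The main obstacle is the precise form and normalization of Frenkel's Hessian representation. I must confirm that it uses exactly the same functional $\Tr^-(\sigma+tA)$, and that the relevant integrals converge, including near $t=0$, where both integrands vanish because $\sigma>0$. If the cited formula is stated with a different parametrization, I would first rederive it by differentiating the entropy formula twice along $\sigma+\eps A$, and then redo the weight comparison with the corrected kernel.
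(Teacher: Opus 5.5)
Your proposal is correct and follows essentially the same route as the paper. It uses the same deficit function $\delta(t)$, the same support localization to $t>1$ or $t<-1/(C-1)$, and the same pointwise kernel comparison $|t|/(2|1-t|)\ge 1/(2C)$. The paper derives $H_\sigma(A)-H_\sigma(T_*A)=2\int\delta(z)|z|^{-3}\dd z$ by rescaling the entropy deficit and letting the scale tend to $0$, which is exactly the check you propose, and your constant $2$ is the right one.
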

\begin{proof}
Set $A=\rho-\sigma$. The assumptions give
\[
  \Tr A=0,
  \qquad
  -\sigma\le A\le(C-1)\sigma.
\]
For every real $r$ for which $\rho_r:=\sigma+rA\ge0$, define
\begin{equation}
  \Delta_T(r)
  :=D(\rho_r\Vert\sigma)-D(T_*\rho_r\Vert\sigma).
  \label{eq:entropy-deficit}
\end{equation}
We also set
\begin{equation}
  \delta_T(z)
  :=\Tr^-(\sigma+zA)
  -\Tr^-\!\left(T_*(\sigma+zA)\right),
  \qquad z\in\mathbb R.
  \label{eq:negative-part-deficit}
\end{equation}
The quantity $\delta_T(z)$ is the decrease in the trace of the negative part of
$\sigma+zA$ under $T_*$, which is always non-negative by the monotonicity $
 \Tr^-(T_*X)\le\Tr^-(X)
$ for positive and trace preserving $T_*$ and self-adjoint $X^*=X$.

\noindent {\bf Step 1: the defect measure.} For $0\le z\le1$,
\[
  \sigma+zA=(1-z)\sigma+z\rho\ge0.
\]
For $-1/(C-1)\le z\le0$, the upper bound $A\le(C-1)\sigma$ gives
\[
  \sigma+zA\ge\bigl(1+z(C-1)\bigr)\sigma\ge0.
\]
Positivity of $T_*$ gives $T_*(\sigma+zA)\ge0$ throughout $[-1/(C-1),1]$.  Consequently,
\begin{equation}
  \delta_T(z)=0
  \qquad \text{for}\qquad
  -\frac1{C-1}\le z\le1.
  \label{eq:deficit-support}
\end{equation}
Moreover, $X\mapsto\Tr^-(X)$ is trace-norm continuous and
\[
  0\le\delta_T(z)
  \le\Tr^-(\sigma+zA)
  \le\norm{\sigma}_1+|z|\norm{A}_1.
\]
Define
\begin{equation}
  \dd\nu_T(z):=\frac{\delta_T(z)}{|z|^3}\dd z,
  \label{eq:deficit-measure}
\end{equation}
where the density is set to zero at $z=0$. Note that the Eq. \eqref{eq:deficit-support}
removes the singularity at the origin, while the linear growth bound makes the density
$O(|z|^{-2})$ at infinity. Thus $\nu_T$ is a finite positive measure.

\noindent {\bf Step 2: The entropy difference.} Apply \eqref{eq:frenkel-entropy} to
$\rho_r=\sigma+rA$ and $T_*\rho_r=\sigma+rT_*A$.  Since $\sigma>0$, both relative
entropies and both Frenkel integrals are finite.  Set $z=r(1-u)$.  Using
$T_*(\sigma)=\sigma$, one has
\begin{align*}
  (1-u)\rho_r+u\sigma
  &=\sigma+zA,\\
  (1-u)T_*\rho_r+u\sigma
  &=T_*(\sigma+zA),\\
  |u|(u-1)^2
  &=\frac{|z-r|z^2}{|r|^3},
  \qquad
  |\dd u|=\frac{|\dd z|}{|r|}.
\end{align*}
Taking the absolute value of the Jacobian gives, for either sign of $r$,
\begin{equation}
  \Delta_T(r)=D(\rho_r\Vert\sigma)-D(T_*\rho_r\Vert\sigma)
  =r^2\int_{-\infty}^{\infty}
  \frac{\delta_T(z)}{z^2|z-r|}\dd z.
  \label{eq:deficit-integral}
\end{equation}
To check local integrability at $z=r$, use $\rho_r\ge0$ and the variational formula for
$\Tr^-$ to obtain
\[
  \Tr^-(\sigma+zA)
  \le\norm{(\sigma+zA)-\rho_r}_1
  =|z-r|\norm{A}_1.
  \] Equivalently, using the measure  $\nu_T$, we have
  \begin{equation}
    \frac{\Delta_T(r)}{r^2}
    =\int_{\mathbb R}\frac{|z|}{|z-r|}\dd\nu_T(z).
    \label{eq:normalized-deficit}
  \end{equation}
  \noindent{\bf Step 3: The total mass of $\nu_T$.}
  Set
  \[
  \eta:=\min\left\{1,\frac1{C-1}\right\}>0.
\]
By \eqref{eq:deficit-support}, $\nu_T([-\eta,\eta])=0$.  Hence, for $|r|\le\eta/2$,
\[
  \frac{|z|}{|z-r|}\le2
  \qquad\text{for $\nu_T$-almost every $z$}.
\]
The kernel converges to $1$ for $\nu_T$-almost every $z$ as $r\to0$. Dominated
convergence in \eqref{eq:normalized-deficit} therefore gives
\begin{equation}
  \lim_{r\to0}\frac{\Delta_T(r)}{r^2}
  =\nu_T(\mathbb R).
  \label{eq:measure-mass-limit}
\end{equation}
A positive map preserves Hermitian operators, and trace preservation gives
$\Tr(T_*A)=0$.  For all sufficiently small real $r$, both $\sigma+rA$ and $\sigma+rT_*A$
are positive definite.  By homogeneity, the BKM expansion \eqref{eq:entropy-hessian}
gives
\[
  \Delta_T(r)
  =\frac{r^2}{2}
  \left[H_\sigma(A)-H_\sigma(T_*A)\right]+o(r^2).
\]
Comparing with \eqref{eq:measure-mass-limit}, we obtain the exact identity
\begin{equation}
  \nu_T(\mathbb R)
  =\frac12\left[H_\sigma(A)-H_\sigma(T_*A)\right].
  \label{eq:measure-bkm-mass}
\end{equation}
Thus the BKM contraction deficit is twice the total mass of $\nu_T$. We note that the
same identity essentially follows by subtracting Frenkel's Hessian formulas
\cite[Theorem~7(a), $m=2$]{Frenkel2023} for $A$ and $T_*A$ at $\sigma$.

\noindent {\bf Step 4: Compare the two kernels.} Taking $r=1$ in
\eqref{eq:normalized-deficit} gives
\begin{equation}
  D(\rho\Vert\sigma)-D(T_*\rho\Vert\sigma)
  =\int_{\mathbb R}\frac{|z|}{|z-1|}\dd\nu_T(z).
  \label{eq:entropy-deficit-mass}
\end{equation}
By \eqref{eq:deficit-support}, $\nu_T$ is concentrated on
\[
  \left(-\infty,-\frac1{C-1}\right)\cup(1,\infty).
\]
The claim then follows from the elementary inequality
\begin{align*}
  \frac{|z|}{|z-1|}
  &\ge\frac1C,
  \qquad
  z>1\ \text{or}\ z<-\frac1{C-1}.
  \qedhere
\end{align*}
\end{proof}
\subsection{From tangent coercivity to entropy decay}
We now apply the above lemma to $T_*=P_{s*}$ and differentiate at $s=0$.
\begin{theorem}
\label{thm:entropy-production}
Let $P_t=e^{t\cL}$ be a QMS with a faithful asymptotic conditional expectation $E$. Then
every faithful state $\rho\in\cD(\cH)$ satisfies
\begin{equation}
  \EP_{\cL}(\rho)\ge\frac{1}{C(E)} Q_{E_*\rho}(\rho-E_*\rho).
  \label{eq:production-bkm}
\end{equation}
Consequently,
\begin{equation}
  \alpha(\cL)\ge\frac{\lambda_{\mathrm{BKM}}(\cL)}{2{C(E)}}.
  \label{eq:mlsi-lower-bound}
\end{equation}
\end{theorem}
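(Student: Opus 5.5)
Set $\sigma:=E_*\rho$. Since $\rho$ is faithful and $E$ is faithful, $\sigma$ is a faithful invariant state, and $P_{s*}\sigma=\sigma$ by \eqref{eq:asymptotic-relations}. By the definition of the Pimsner--Popa index, $\rho\le C\sigma$ for every $C>C(E)$. If $C(E)\le 1$ then $\rho=\sigma$ and the claim is trivial, since both sides vanish. So we may fix $C>\max\{C(E),1\}$ and later let $C\downarrow C(E)$.

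We apply Lemma~\ref{lem:entropy-deficit} with $T_*=P_{s*}$, which is positive, trace preserving and fixes $\sigma$. Writing $A=\rho-\sigma$, this gives, for every $s>0$,
\[
  D(\rho\Vert\sigma)-D(P_{s*}\rho\Vert\sigma)\ge\frac1{2C}\bigl[H_\sigma(A)-H_\sigma(P_{s*}A)\bigr].
\]
Since $E_*P_{s*}\rho=E_*\rho=\sigma$, the left side equals $D(P_{s*}\rho\Vert E_*P_{s*}\rho)$ subtracted from $D(\rho\Vert E_*\rho)$. We divide by $s$ and let $s\to0^+$. On the left we get $\EP_{\cL}(\rho)$ by \eqref{eq:entropy-production}. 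This differentiation is legitimate because $P_{s*}\rho$ stays faithful near $s=0$ and $\sigma$ is fixed, so the map $s\mapsto D(P_{s*}\rho\Vert\sigma)$ is smooth. On the right, $\Gamma_\sigma^{-1}$ is self-adjoint for the Hilbert--Schmidt pairing, so
\[
  \frac{\dd}{\dd s}\Big|_{s=0}H_\sigma(P_{s*}A)=2\Tr[\cL_*(A)\Gamma_\sigma^{-1}(A)]=-2Q_\sigma(A).
\]
Hence $\EP_{\cL}(\rho)\ge \frac1{2C}\cdot 2Q_\sigma(A)=\frac1C Q_{E_*\rho}(\rho-E_*\rho)$. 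Letting $C\downarrow C(E)$ yields \eqref{eq:production-bkm}. A subtle point is that $Q_\sigma(A)$ could a priori be negative. The inequality still passes to the limit, however, because the constant $C$ enters only as a multiplicative factor and the right side depends continuously on $C$.

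For \eqref{eq:mlsi-lower-bound}, the pair $(\sigma,A)$ is admissible in Definition~\ref{def:bkm-constants}: $\sigma$ is faithful and invariant, and $E_*A=E_*\rho-E_*E_*\rho=0$. Hence $Q_\sigma(A)\ge\lambda_{\mathrm{BKM}}(\cL)H_\sigma(A)$. Lemma~\ref{lem:entropy-metric}, applied with $\rho\le C\sigma$, gives $H_\sigma(A)\ge D(\rho\Vert\sigma)$. We use this only when $\lambda_{\mathrm{BKM}}(\cL)\ge0$; this always holds, because by \eqref{eq:mlsi-upper-bounds} it dominates $\alpha(\cL)\ge0$, and if it is zero there is nothing to prove. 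Combining the three inequalities,
\[
  \EP_{\cL}(\rho)\ge\frac{\lambda_{\mathrm{BKM}}(\cL)}{C(E)}D(\rho\Vert E_*\rho).
\]
Dividing by $2D(\rho\Vert E_*\rho)$ and taking the infimum over faithful $\rho\ne E_*\rho$ gives \eqref{eq:mlsi-lower-bound}.

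The main obstacle is the interchange of the limit $s\to0$ with the comparison inequality, together with the justification of both derivatives. Both are routine in finite dimensions, since all operators involved are strictly positive near $s=0$. The substantive content has already been carried out in Lemma~\ref{lem:entropy-deficit}.
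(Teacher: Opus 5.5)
Your proposal is correct and follows the paper's proof essentially step for step. You apply Lemma~\ref{lem:entropy-deficit} to $T_*=P_{s*}$ with $\sigma=E_*\rho$, differentiate at $s=0^+$ using Hilbert--Schmidt self-adjointness of $\Gamma_\sigma^{-1}$, and then chain this with the definition of $\lambda_{\mathrm{BKM}}(\cL)$ and Lemma~\ref{lem:entropy-metric}. Your extra bookkeeping covers points the paper leaves implicit: the trivial case $C(E)=1$, the limit $C\downarrow C(E)$ (unnecessary, since the infimum defining $C(E)$ is attained), and the sign check $\lambda_{\mathrm{BKM}}(\cL)\ge 0$.
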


\begin{proof}
Fix a faithful state $\rho$ and set $\sigma=E_*\rho$, $A=\rho-\sigma$, and $C=C(E)$. The
state $\sigma$ is faithful and invariant. The index bound gives $\rho\le C\sigma$. For
$s>0$, applying Lemma~\ref{lem:entropy-deficit} to $T_*=P_{s*}$ gives
\begin{equation}
  \frac{D(\rho\Vert\sigma)-D(P_{s*}\rho\Vert\sigma)}s
  \ge\frac1{2C}
  \frac{H_\sigma(A)-H_\sigma(P_{s*}A)}s.
  \label{eq:semigroup-deficit}
\end{equation}
As $s\downarrow0$, the left-hand side converges to $\EP_{\cL}(\rho)$, whereas
self-adjointness of $\Gamma_\sigma^{-1}$ for the Hilbert--Schmidt inner product gives
\begin{equation}
  -\frac12\left.\frac{\dd}{\dd s}H_\sigma(P_{s*}A)\right|_{s=0}
  =-\Tr[\cL_*(A)\Gamma_\sigma^{-1}(A)]
  =Q_\sigma(A).
  \label{eq:bkm-derivative}
\end{equation}
Hence $\EP_{\cL}(\rho)\ge C^{-1}Q_\sigma(A)$.  Lemma~\ref{lem:entropy-metric} and
definition of $\lambda_{\mathrm{BKM}}(\cL)$ then yield
\[
  \EP_{\cL}(\rho)
  \ge\frac1C Q_\sigma(A)
  \ge\frac{\lambda_{\mathrm{BKM}}(\cL)}{C} H_\sigma(A)
  \ge\frac{\lambda_{\mathrm{BKM}}(\cL)}{C} D(\rho\Vert\sigma),
\]
which is \eqref{eq:mlsi-lower-bound} in the normalization $\EP\ge2\alpha D$.
\end{proof}
Applying the above theorem to $\cL^{(n)}=\id_{M_n}\otimes\cL$ with $C=C(\id_{M_n}\otimes
E)$ for every $n\ge1$ gives the complete MLSI bound.
\begin{corollary}
\label{cor:bkm-mlsi}
Let $P_t=e^{t\cL}$ be a QMS with a faithful asymptotic conditional expectation $E$. Then
\begin{equation}
  \frac{\lambda_{\mathrm{BKM}}(\cL)}{2C(E)}
  \le \alpha(\cL)
  \le \lambda_{\mathrm{BKM}}(\cL),
  \label{eq:mlsi-bounds}
\end{equation}
and
\begin{equation}
  \frac{\lambda_{\mathrm{BKM}}^{\mathrm c}(\cL)}{2C_{\cb}(E)}
  \le \alpha^{\mathrm c}(\cL)
  \le \lambda_{\mathrm{BKM}}^{\mathrm c}(\cL).
  \label{eq:cmlsi-bkm-bounds}
\end{equation}
Hence positivity of the BKM coercivity constant is equivalent to positivity of the
corresponding MLSI constant, for both the ordinary and complete constants.
\end{corollary}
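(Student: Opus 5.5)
The ordinary bounds \eqref{eq:mlsi-bounds} are already in hand. The upper bound $\alpha(\cL)\le\lambda_{\mathrm{BKM}}(\cL)$ is Proposition~\ref{prop:alpha-upper-bkm}, and the lower bound is \eqref{eq:mlsi-lower-bound} of Theorem~\ref{thm:entropy-production}. For the complete version, the plan is to apply both results at each matrix level $n$ to the amplified semigroup $e^{t\cL^{(n)}}=\id_{M_n}\otimes P_t$ and then take infima over $n$.

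First I will check that $\cL^{(n)}$ satisfies Assumption~\ref{ass:asymptotic-limit}. The limit $\id_{M_n}\otimes P_t\to\id_{M_n}\otimes E=E^{(n)}$ holds in norm, since the space is finite-dimensional and $\norm{\id_{M_n}\otimes\Phi}\le n\norm{\Phi}$ for any map $\Phi$. Moreover $E^{(n)}$ is a conditional expectation onto $M_n\otimes\cN$: it is unital and completely positive, it is idempotent, and its range is $M_n\otimes\cN$. It is faithful because, writing $X=\sum_{ij}e_{ij}\otimes X_{ij}$, the diagonal blocks of $E^{(n)}(X^*X)$ are $\sum_k E(X_{ki}^*X_{ki})$. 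If these vanish, faithfulness of $E$ forces every $X_{ki}=0$.

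Theorem~\ref{thm:entropy-production} at level $n$ then gives
\[
\alpha(\cL^{(n)})\ge\frac{\lambda_{\mathrm{BKM}}(\cL^{(n)})}{2C(E^{(n)})}\ge\frac{\lambda^{\mathrm c}_{\mathrm{BKM}}(\cL)}{2C_{\cb}(E)},
\]
using $\lambda_{\mathrm{BKM}}(\cL^{(n)})\ge\lambda^{\mathrm c}_{\mathrm{BKM}}(\cL)$ and $C(E^{(n)})\le C_{\cb}(E)$. The latter is finite by Proposition~\ref{prop:pp-indices}. One small point: if $\lambda^{\mathrm c}_{\mathrm{BKM}}(\cL)$ were negative the inequality direction would matter, but it is nonnegative. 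Indeed $Q_\sigma(A)=-\frac12\frac{\dd}{\dd s}H_\sigma(P_{s*}A)|_{s=0}\ge0$, by monotonicity of the BKM metric under the CPTP maps $P_{s*}$, which fix $\sigma$. Taking the infimum over $n$ gives the lower bound in \eqref{eq:cmlsi-bkm-bounds}. The upper bound is the second part of Proposition~\ref{prop:alpha-upper-bkm}.

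For the final equivalence, both constants are squeezed between multiples of each other. The upper bounds give $\alpha>0\Rightarrow\lambda_{\mathrm{BKM}}>0$. The lower bounds give the converse, since $C(E)$ and $C_{\cb}(E)$ are finite and at least $1$. The only step needing care is the faithfulness of $E^{(n)}$ and the finiteness of $C_{\cb}(E)$; both are routine, so I expect no real obstacle.
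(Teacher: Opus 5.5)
Your proposal is correct and matches the paper's argument: the ordinary bounds come from Proposition~\ref{prop:alpha-upper-bkm} and Theorem~\ref{thm:entropy-production}, and the complete bounds come from applying these at every level $n$ to $\cL^{(n)}$, using $C(E^{(n)})\le C_{\cb}(E)$, and taking the infimum over $n$. The paper leaves implicit two details that you handle correctly: that $E^{(n)}$ is a faithful asymptotic conditional expectation for $\cL^{(n)}$, and that $\lambda^{\mathrm c}_{\mathrm{BKM}}(\cL)\ge0$, which is what allows $C(E^{(n)})$ to be replaced by the larger $C_{\cb}(E)$.
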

The above corollary extends \cite[Theorem~3.3]{GaoRouze2022} beyond GNS symmetry.
Indeed,
we will show in Corollary~\ref{cor:gns-gaps} that $\lambda_{\mathrm{BKM}}$ coincides
with both the KMS and GNS spectral gaps under GNS symmetry.

\section{Corner decomposition and BKM coercivity}
\label{sec:corners}
In this section, we restrict the generator to the rectangular corners of the fixed-point
decomposition and express the BKM forms in these coordinates. Endpoint interpolation
then yields an exact expression for the ordinary BKM coercivity constant. We use the
notation $H_\omega$ and $Q_\omega$ from \eqref{eq:bkm-forms}.

\subsection{Fixed-point structure and corner generators}
\label{subsec:corner-generators}
Let $P_t=e^{t\cL}$ be a QMS with a faithful asymptotic conditional expectation $E$. Set
$\cN=\Ran E$. Every $a\in\cN$ belongs to the multiplicative domain of $P_t$, so $P_t$ is
$\cN$-bimodular. Differentiation gives the same property for $\cL$. Thus, for every
$a,b\in\cN$, $X\in\B(\cH)$, and $t\ge0$,
\begin{equation}
  P_t(aXb)=aP_t(X)b,
  \qquad
  \cL(aXb)=a\cL(X)b.
  \label{eq:semigroup-bimodule}
\end{equation}
Such an expectation exists whenever $P_t$ is KMS symmetric with respect to a faithful
invariant state $\sigma$. Indeed, $\cL$ is diagonalizable with spectrum in
$(-\infty,0]$, so $P_t=e^{t\cL}$ converges to the projection onto $\Ker\cL$.

By the orthogonal decomposition in Proposition~\ref{prop:expectation-structure}, we may
write
\begin{align}
  \cH=\bigoplus_{k\in\mathcal I}\cK_k\otimes\cR_k,\qquad
  \cN=\bigoplus_{k\in\mathcal I}\B(\cK_k)\otimes I_{\cR_k},\qquad
  \omega=\bigoplus_{k\in\mathcal I}\eta_k\otimes\sigma_k.
  \label{eq:invariant-blocks}
\end{align}
Here the faithful density matrices $\sigma_k$ are determined by $E$, while a faithful
invariant state $\omega$ is specified by
\[
  \eta_k\in\B(\cK_k),
  \qquad \eta_k>0,
  \qquad \sum_k\Tr\eta_k=1.
\]

Let $p_k$ be the projection onto $\cK_k\otimes\cR_k$, and set
\[
  \mathsf C_{kl}:=p_k\B(\cH)p_l
  =\B(\cK_l,\cK_k)\otimes\B(\cR_l,\cR_k).
\]
Since $p_k,p_l\in\cN$, the bimodularity makes each $\mathsf C_{kl}$ invariant under
$\cL$.  Fix orthonormal bases of the noiseless factors $\B(\cK_l,\cK_k)$ and write the
matrix entry as
\[
  E_{ij}^{kl}:=|k,i\rangle\langle l,j|
  \in\B(\cK_l,\cK_k),
  \qquad (E_{ij}^{kl})^*=E_{ji}^{lk}.
\]
For each $(k,l)$, the bimodule property $\cL$ gives a unique linear map
\begin{equation}
  L_{kl}:\B(\cR_l,\cR_k)\longrightarrow\B(\cR_l,\cR_k)
  \label{eq:corner-generator}
\end{equation}
such that
\begin{equation}
  \cL(A\otimes Z)=A\otimes L_{kl}(Z)
  \qquad
  A\in\B(\cK_l,\cK_k),\ Z\in\B(\cR_l,\cR_k).
  \label{eq:corner-action}
\end{equation}
Indeed, bimodule property restricts $\cL(E_{ij}^{kl}\otimes Z)=E_{ij}^{kl}\otimes
L_{kl}(Z)$ because of the multiplication by $E_{i1}^{kk}$ on the left and $E_{1j}^{ll}$
on the right, and hence for every $A\in\B(\cK_l,\cK_k)$.  This determines $L_{kl}$
independently of the chosen bases.  Since $\cL(X^*)=\cL(X)^*$,
\begin{equation}
  L_{lk}(Z^*)=L_{kl}(Z)^*.
  \label{eq:corner-adjoints}
\end{equation}

We call $\B(\cR_l,\cR_k)$ a directed corner, a cross-corner when $k\ne l$, and a
self-corner when $k=l$, and we write $S_t^{kl}:=e^{tL_{kl}}$ as the restriction on the
$(k,l)$ corner. Since $\Ker\cL=\cN$,
\begin{equation}
  \Ker L_{kl}=\{0\}\quad(k\ne l),
  \qquad
  \Ker L_{kk}=\mathbb C I_{\cR_k}.
  \label{eq:corner-kernels}
\end{equation}
Moreover, $E\cL=0$ gives $\Tr(\sigma_kL_{kk}(Z))=0$ for every $Z\in \B(\cR_k)$. Define
\begin{equation}
  V_{kl}:=
  \begin{cases}
    \B(\cR_l,\cR_k),&k\ne l,\\[
    1mm]
    \{Z \in \B(\cR_k):\Tr(\sigma_kZ)=0\},&k=l.
  \end{cases}
  \label{eq:centered-corners}
\end{equation}
We call $V_{kl}$ the \emph{centered corner space}.  Each $V_{kl}$ is invariant under
$S_t^{kl}$.

For $Z,W\in\B(\cR_l,\cR_k)$, set the corner KMS inner product as
\begin{equation}
  h_{kl,1/2}(Z,W)
  :=\Tr\!\left(Z^*\sigma_k^{1/2}W\sigma_l^{1/2}\right).
  \label{eq:corner-kms-product}
\end{equation}
For every faithful invariant state $\omega$, the KMS inner product on $\mathsf C_{kl}$
factorizes as
\[
  \ip{A\otimes Z}{B\otimes W}_{\mathrm{KMS},\omega}
  =\Tr(\eta_l^{1/2}A^*\eta_k^{1/2}B)\,h_{kl,1/2}(Z,W).
\]
The first factor is a positive definite inner product on $\B(\cK_l,\cK_k)$.

\subsection{Weighted forms and endpoint interpolation}
\label{subsec:interpolation}

For $0\le s\le1$ and $Z,W\in\B(\cR_l,\cR_k)$, define
\begin{align}
  h_{kl,s}(Z,W)
  &:=\Tr\!\left(Z^*\sigma_k^sW\sigma_l^{1-s}\right),
  \qquad h_{kl,s}(Z):=h_{kl,s}(Z,Z),
  \label{eq:h-s}
  \\
  q_{kl,s}(Z)
  &:=-\Re\Tr\!\left(Z^*\sigma_k^sL_{kl}(Z)\sigma_l^{1-s}\right).
  \label{eq:q-s}
\end{align}
At $s=\tfrac12$, $h_{kl,s}$ is the corner KMS inner product from
Section~\ref{subsec:corner-generators}.  At the endpoints,
\begin{align*}
  h_{kl,0}(Z)&=\Tr(Z^*Z\sigma_l),&
  q_{kl,0}(Z)&=-\Re\Tr(Z^*L_{kl}(Z)\sigma_l),\\
  h_{kl,1}(Z)&=\Tr(ZZ^*\sigma_k ),&
  q_{kl,1}(Z)&=-\Re\Tr(L_{kl}(Z)Z^*\sigma_k).
\end{align*}
Kadison's inequality implies that $S_t^{kl}$ is contractive
\begin{align}
  \label{eq:endpoint-contraction}
  h_{kl,0}(S_t^{kl}Z)\le h_{kl,0}(Z),\qquad  h_{kl,1}(S_t^{kl}Z)\le h_{kl,1}(Z),
\end{align}
in both endpoint norms.  Indeed, for $Y=E_{ij}^{kl}\otimes Z$,
\[
  P_t(Y)^*P_t(Y)\le P_t(Y^*Y),
  \qquad
  P_t(Y)P_t(Y)^*\le P_t(YY^*).
\]
Taking expectations in a faithful invariant state $\omega$ and cancelling the positive
factors from $\eta_l$ and $\eta_k$ yields \eqref{eq:endpoint-contraction}.
Differentiation at $t=0$ yields
\begin{equation}
  q_{kl,0}(Z)\ge0,
  \qquad
  q_{kl,1}(Z)\ge0.
  \label{eq:endpoint-nonnegative}
\end{equation}

\begin{definition}
\label{def:endpoint-constants}
For $a\in\{0,1\}$, set
\begin{equation}
  \kappa_{kl}^{(a)}
  :=\inf_{0\ne Z\in V_{kl}}
  \frac{q_{kl,a}(Z)}{h_{kl,a}(Z)},
  \label{eq:endpoint-constants}
\end{equation}
with value $+\infty$ when $V_{kl}=\{0\}$.
\end{definition}
Fix a corner $(k,l)$ and write $h_s=h_{kl,s}$, $q_s=q_{kl,s}$, $L=L_{kl}$, and
$S_t=e^{tL}$.

\begin{lemma}
\label{lem:endpoint-interpolation}
Suppose that $V_{kl}\ne\{0\}$, and let $\kappa_0,\kappa_1\in\mathbb R$ satisfy
\begin{equation}
  q_0(Z)\ge\kappa_0h_0(Z),
  \qquad
  q_1(Z)\ge\kappa_1h_1(Z)
  \qquad \text{ for } Z\in V_{kl}.
  \label{eq:endpoint-bounds}
\end{equation}
Then, for every $s\in[0,1]$ and $Z\in V_{kl}$,
\begin{equation}
  q_s(Z)\ge\bigl((1-s)\kappa_0+s\kappa_1\bigr)h_s(Z).
  \label{eq:interpolated-form}
\end{equation}
\end{lemma}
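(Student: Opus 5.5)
The plan is to pass from the quadratic forms to the contraction semigroup $S_t=e^{tL}$, interpolate the resulting norm bounds by complex interpolation of weighted Hilbert spaces, and then differentiate back at $t=0$. Two observations make this work. First, $q_s$ is minus one half of the $t$-derivative at $t=0$ of $h_s(S_tZ)$. Second, the family $h_s$ is exactly a complex interpolation scale of Hilbert norms. Convexity in $s$ alone will not do: $h_s(Z)$ is a positive combination of exponentials in $s$, but $q_s(Z)$ is a signed combination, so it is not convex.

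\textbf{Step 1 (semigroup bounds at the endpoints).} Fix $a\in\{0,1\}$ and $Z\in V_{kl}$. Since $V_{kl}$ is $S_t$-invariant,
\[
  \frac{\dd}{\dd t}h_a(S_tZ)=-2q_a(S_tZ)\le-2\kappa_a h_a(S_tZ).
\]
Gronwall then gives
\[
  h_a(S_tZ)\le e^{-2\kappa_a t}h_a(Z),\qquad t\ge0,\ Z\in V_{kl}.
\]
This holds for any real $\kappa_a$, including negative ones.

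\textbf{Step 2 (the scale $h_s$ as an interpolation scale).} Define $\Delta(Z):=\sigma_kZ\sigma_l^{-1}$ on $\B(\cR_l,\cR_k)$. This operator is positive definite for the inner product $h_0$. Indeed, in eigenbases $\sigma_k=\sum_i a_i|i\rangle\langle i|$ and $\sigma_l=\sum_j b_j|j\rangle\langle j|$ it acts diagonally on matrix units by $a_i/b_j$. Moreover
\[
  h_s(Z,W)=h_0(Z,\Delta^sW).
\]
It remains to check that $\Delta$ preserves $V_{kl}$. For $k\ne l$ this is trivial. For $k=l$, $V_{kk}$ is the $h_0$-orthogonal complement of $I$, and $\Delta(I)=I$, so $V_{kk}$ is $\Delta$-invariant.

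Restricting to $V_{kl}$, we have a finite-dimensional Hilbert space $(V_{kl},h_0)$ with a positive operator $\Delta$ commuting with the orthogonal projection onto $V_{kl}$. The standard fact is that the complex interpolation space $[(V,h_0),(V,h_1)]_s$ carries exactly the norm $h_s^{1/2}$.

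\textbf{Step 3 (interpolating the operator norm).} I will avoid citing the abstract interpolation theorem and prove the operator bound directly by Hadamard's three-lines lemma. For fixed $t>0$ and $Z,W\in V_{kl}$, consider the entire function
\[
  F(w)=e^{((1-w)\kappa_0+w\kappa_1)t}\,h_0\!\left(\Delta^{\bar w/2}W,\;S_t\Delta^{-w/2}Z\right).
\]
On $\Re w=0$ and $\Re w=1$ the imaginary powers $\Delta^{i\tau}$ are unitary in both $h_0$ and $h_1$. Cauchy--Schwarz together with Step 1 then bounds $|F|$ by $h_0(W)^{1/2}h_0(Z)^{1/2}$ on both boundary lines. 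Boundedness in the strip is automatic in finite dimensions.

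Evaluating at $w=s$, and substituting $Z=\Delta^{s/2}Z'$ and $W=\Delta^{s/2}W'$, yields
\[
  h_s(S_tZ')\le e^{-2((1-s)\kappa_0+s\kappa_1)t}h_s(Z').
\]
I expect this step to be the main technical point: one must arrange the powers of $\Delta$ correctly and make sure all vectors stay inside $V_{kl}$. The latter is guaranteed by the $\Delta$-invariance checked in Step 2.

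\textbf{Step 4 (differentiating back).} Equality holds at $t=0$ in the bound from Step 3. Hence the right derivative at $t=0$ of the left side is at most that of the right side. Since
\[
  \left.\frac{\dd}{\dd t}\right|_{t=0}h_s(S_tZ)=-2q_s(Z),
\]
this gives \eqref{eq:interpolated-form}.
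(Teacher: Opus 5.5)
Your proof is correct and follows the paper's route: Gr\"onwall bounds at the two endpoints, interpolation of the operator norm of $S_t|_{V_{kl}}$ along the weighted scale $h_s$, and differentiation at $t=0^+$. The only difference is that you carry out the interpolation step by hand, using the three-lines lemma with $\Delta Z=\sigma_kZ\sigma_l^{-1}$ and handling the centering through $\Delta$-invariance of $V_{kl}$, whereas the paper cites Bergh--L\"ofstr\"om for $[H_0,H_1]_s=H_s$ and for the complemented subspace $\Pi H_s$.
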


\begin{proof}
{\bf Step 1: Weighted Hilbert spaces.} Set $\mathsf W_{kl}:=\B(\cR_l,\cR_k)$ and define
\[
  H_s:=\bigl(\mathsf W_{kl},h_s\bigr),
  \qquad 0\le s\le1.
\]
Since $\sigma_k$ and $\sigma_l$ are faithful, each $h_s$ is an inner product. Relative
to the Hilbert--Schmidt inner product, $h_0$ and $h_1$ are respectively given by the
operator
\begin{equation}
  G_0Z=Z\sigma_l,
  \qquad
  G_1Z=\sigma_kZ.
  \label{eq:endpoint-gram}
\end{equation}
These two operators are positive definite and commute. Note that the inner product $h_s$
is given by the positive definite operator
\[
  G_sZ=\sigma_k^sZ\sigma_l^{1-s}
\]
By simultaneously diagonalizing $G_0$ and $G_1$ and the weighted $L_2$ interpolation
formula, we have the isometric identity of complex interpolation
\cite[Theorem~5.5.3]{BerghLofstrom1976}.
\begin{equation}
  [H_0,H_1]_s=H_s,
  \qquad
  0<s<1,
  \label{eq:hilbert-interpolation}
\end{equation}
{\bf Step 2: Centering.} If $k\ne l$, then $V_{kl}=\mathsf W_{kl}$.  If $k=l$, the map
\begin{equation}
  \Pi: Z\mapsto Z-\Tr(\sigma_k Z)I.
  \label{eq:centering-projection}
\end{equation}
is the orthogonal projection onto $V_{kk}$ in every $H_s$. The inclusion
$V_{kk}\hookrightarrow\mathsf W_{kk}$ and the projection $\Pi$ are contractive at both
endpoints.  The operator interpolation theorem \cite[Theorem~4.1.2]{BerghLofstrom1976}
therefore gives the isometric identity
\begin{equation}
  [\Pi H_0,\Pi H_1]_s=\Pi[H_0,H_1]_s=\Pi H_s,
  \qquad 0<s<1.
  \label{eq:centered-interpolation}
\end{equation}

\noindent{\bf Step 3: Endpoint interpolation.} For $Z\in V_{kl}$ and $a=0,1$,
\[
  \frac{\dd}{\dd t}h_a(S_tZ)
  =2\Re h_a(S_tZ,L S_tZ)
  =-2q_a(S_tZ).
\]
Since $V_{kl}$ is invariant under $S_t$, the endpoint bounds \eqref{eq:endpoint-bounds}
and Gr\"onwall's inequality yield
\begin{equation}
  \|S_t|_{V_{kl}}\|_{H_a\to H_a}
  \le e^{-\kappa_a t},
  \qquad a=0,1.
  \label{eq:endpoint-decay}
\end{equation}
For $0<s<1$, interpolating $S_t|_{V_{kl}}$ between the endpoint spaces and using
\eqref{eq:hilbert-interpolation} and \eqref{eq:centered-interpolation} gives
\begin{align}
  \|S_t|_{V_{kl}}\|_{H_s\to H_s}
  \le
  \|S_t|_{V_{kl}}\|_{H_0\to H_0}^{1-s}
  \|S_t|_{V_{kl}}\|_{H_1\to H_1}^{s}\le e^{-((1-s)\kappa_0+s\kappa_1)t}.
  \label{eq:interpolated-decay}
\end{align}
Set $\kappa_s=(1-s)\kappa_0+s\kappa_1$.  Then
\[
  h_s(S_tZ)\le e^{-2\kappa_st}h_s(Z).
\]
Differentiating at $t=0^+$ gives
\[
  -2q_s(Z)
  =\left.\frac{\dd}{\dd t}h_s(S_tZ)\right|_{t=0^+}
  \le-2\kappa_sh_s(Z),
\]
which is \eqref{eq:interpolated-form}.
\end{proof}

\subsection{BKM decomposition and endpoint limits}
\label{subsec:corner-formulas}

Fix a faithful invariant state $\omega$, and choose an orthonormal eigenbasis
$\{|k,i\rangle\}_i$ for each noiseless block $\eta_k$.  In these bases,
\[
  \omega=\bigoplus_k\eta_k\otimes\sigma_k,
  \qquad
  \eta_k=\sum_i\lambda_{ki}|k,i\rangle\langle k,i|.
\]
For $A=A^*$, set $X=\Gamma_\omega^{-1}(A)$.  Then $X=X^*$ and
$H_\omega(A)=\Tr[X\Gamma_\omega(X)]$.  Trace duality and Hilbert--Schmidt
self-adjointness of $\Gamma_\omega$ give
\begin{align*}
  Q_\omega(A)=-\Tr[\cL_*(\Gamma_\omega(X))X]=-\Tr[\Gamma_\omega(X)\cL(X)]
  =-\Re\Tr[X^*\Gamma_\omega(\cL(X))].
\end{align*}
Consider a matrix-unit coordinate
\[
  Y=E_{ij}^{kl}\otimes Z,
  \qquad Z\in\B(\cR_l,\cR_k),
\]
and write
\[
  u=(k,i),\qquad v=(l,j),\qquad
  r=\frac{\lambda_{ki}}{\lambda_{lj}}.
\]
If $u=v$, assume in addition that $Z=Z^*$. Set
\[
  X=
  \begin{cases}
    Y+Y^*,&u\ne v,\\
    Y,&u=v,
  \end{cases}
  \qquad
  A=\Gamma_\omega(X).
\]

\begin{lemma}
\label{lem:corner-bkm}
For the above choices, if $Z\ne0$, then
\begin{equation}
  \frac{Q_\omega(A)}{H_\omega(A)}
  =R_{kl}(r,Z)
  :=\frac{\displaystyle\int_0^1r^s q_{kl,s}(Z)\dd s}
  {\displaystyle\int_0^1r^s h_{kl,s}(Z)\dd s}.
  \label{eq:corner-quotient}
\end{equation}
\end{lemma}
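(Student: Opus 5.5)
The plan is a direct computation in the matrix-unit coordinates. It rests on three facts: $\Gamma_\omega$ acts diagonally on the tensor factors, $\cL$ acts on $\mathsf C_{kl}$ through $L_{kl}$ as in \eqref{eq:corner-action}, and $H_\omega(A)=\Tr[X\Gamma_\omega(X)]$ and $Q_\omega(A)=-\Re\Tr[X^*\Gamma_\omega(\cL(X))]$ as derived just before the lemma.

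\textbf{Step 1 (one matrix unit).} Since $\omega=\bigoplus_k\eta_k\otimes\sigma_k$ and $|k,i\rangle,|l,j\rangle$ are eigenvectors of the $\eta$'s, we have
\[
\omega^sY\omega^{1-s}=\lambda_{ki}^s\lambda_{lj}^{1-s}\,E_{ij}^{kl}\otimes\sigma_k^sZ\sigma_l^{1-s}.
\]
Integrating in $s$ gives
\[
\Gamma_\omega(Y)=\lambda_{lj}\int_0^1 r^s\,E^{kl}_{ij}\otimes\sigma_k^sZ\sigma_l^{1-s}\dd s .
\]
Using $\Tr[(E^{kl}_{ij})^*E^{kl}_{ij}]=1$, it follows that
\[
\Tr[Y^*\Gamma_\omega(Y)]=\lambda_{lj}\int_0^1r^sh_{kl,s}(Z)\dd s.
\]
Likewise, since $\cL(Y)=E^{kl}_{ij}\otimes L_{kl}(Z)$,
\[
-\Re\Tr[Y^*\Gamma_\omega(\cL Y)]=\lambda_{lj}\int_0^1r^sq_{kl,s}(Z)\dd s.
\]

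\textbf{Step 2 (the case $u=v$).} Here $r=1$, $X=Y$ is self-adjoint because $Z=Z^*$, and the two expressions of Step~1 are exactly $H_\omega(A)$ and $Q_\omega(A)$. The common factor $\lambda_{lj}$ cancels in the quotient.

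\textbf{Step 3 (the case $u\ne v$).} Expand $X=Y+Y^*$. Both $\Gamma_\omega$ and $\cL$ preserve each matrix-unit line $E^{kl}_{ij}\otimes\B(\cR_l,\cR_k)$. Hence the cross terms $\Tr[Y^*\Gamma_\omega(Y^*)]$ and $\Tr[Y\Gamma_\omega(Y)]$, together with their $\cL$ analogues, involve $\Tr[E^{lk}_{ji}E^{lk}_{ji}]=0$ (or its adjoint), and they vanish because $u\ne v$.

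The term coming from $Y^*=E^{lk}_{ji}\otimes Z^*$ is given by Step~1 with the roles of $(k,i)$ and $(l,j)$ exchanged:
\[
\lambda_{ki}\int_0^1 r^{-s}\Tr(Z\sigma_l^sZ^*\sigma_k^{1-s})\dd s .
\]
Substitute $s\mapsto 1-s$ and use $\lambda_{ki}r^{-(1-s)}=\lambda_{lj}r^{s}$ together with cyclicity of the trace. This turns the integral into $\lambda_{lj}\int_0^1r^sh_{kl,s}(Z)\dd s$. For the $Q$-form, the same substitution together with \eqref{eq:corner-adjoints}, $L_{lk}(Z^*)=L_{kl}(Z)^*$, and the invariance of the real part under adjoints yields $\lambda_{lj}\int_0^1 r^sq_{kl,s}(Z)\dd s$.

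Thus numerator and denominator are each exactly twice the Step~1 expressions, and the factor $2\lambda_{lj}$ cancels.

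\textbf{Conclusion and main obstacle.} The denominator is positive when $Z\neq0$ because each $h_{kl,s}$ is an inner product, so the quotient is well defined and equals $R_{kl}(r,Z)$. The only delicate point is the bookkeeping in Step~3: getting the substitution $s\mapsto1-s$ and the adjoint relation \eqref{eq:corner-adjoints} to match the reversed corner $(l,k)$ back to $(k,l)$ with the same weight $r^s$. I would check this first on the $h$-term and then repeat it verbatim for the $q$-term.
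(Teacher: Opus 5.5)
Your proof is correct and follows the paper's argument: you compute $\omega^sY\omega^{1-s}$ on a single matrix-unit coordinate, observe that the $Y$--$Y^*$ cross terms vanish because the matrix units are Hilbert--Schmidt orthogonal, and show that the $Y$ and $Y^*$ contributions are equal. The only difference is in that last step. You verify the equality explicitly via the substitution $s\mapsto 1-s$ and \eqref{eq:corner-adjoints}, whereas the paper gets it in one line from the fact that $\Gamma_\omega$ and $\cL$ preserve adjoints, so the $Y^*$ term is the complex conjugate of the $Y$ term and the two agree after taking real parts.
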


\begin{proof}
By the choice of eigenbases,
\begin{equation}
  \omega^sY\omega^{1-s}
  =\lambda_{lj}r^s E_{ij}^{kl}\otimes
  \sigma_k^sZ\sigma_l^{1-s}.
  \label{eq:weighted-coordinate}
\end{equation}
Together with $\cL(Y)=E_{ij}^{kl}\otimes L_{kl}(Z)$, this gives
\begin{align*}
  \Tr[Y^*\Gamma_\omega(Y)]&=\lambda_{lj}\int_0^1r^sh_{kl,s}(Z)\dd s,
  &\qquad -\Re\Tr[Y^*\Gamma_\omega(\cL(Y))]&=\lambda_{lj}\int_0^1r^sq_{kl,s}(Z)\dd s.
\end{align*}

If $u\ne v$, then $Y$ and $Y^*$ lie in distinct matrix-unit coordinate spaces, which are
orthogonal with respect to the Hilbert--Schmidt inner product. Both $\Gamma_\omega$ and
$\cL$ preserve these coordinates, so the cross terms vanish.  Since both maps also
preserve adjoints, the contributions from $Y$ and $Y^*$ are equal after taking real
parts.  Hence
\begin{align*}
  H_\omega(A)&=2\Tr[Y^*\Gamma_\omega(Y)],
  &\qquad Q_\omega(A)&=-2\Re\Tr[Y^*\Gamma_\omega(\cL(Y))].
\end{align*}

If $u=v$, then $r=1$ and $X=Y=Y^*$, so
\begin{align*}
  H_\omega(A)&=\Tr[Y^*\Gamma_\omega(Y)],
  &\qquad Q_\omega(A)&=-\Re\Tr[Y^*\Gamma_\omega(\cL(Y))].
\end{align*}
Since $Z\ne0$, we have $X\ne0$ and hence $H_\omega(A)>0$. Dividing the expressions for
$Q_\omega(A)$ and $H_\omega(A)$ in either case proves \eqref{eq:corner-quotient}.
\end{proof}

For $r>0$, set
\begin{equation}
  J(r):=\int_0^1r^s\dd s
  =
  \begin{cases}
    (r-1)/\log r,&r\ne1,\\1,&r=1,
  \end{cases}
  \qquad
  \mu_r(\dd s):=\frac{r^s}{J(r)}\dd s.
  \label{eq:mu-r}
\end{equation}

For $Z\ne0$, \eqref{eq:corner-quotient} can be written as
\[
  R_{kl}(r,Z)
  =\frac{\displaystyle
  \int_0^1\frac{q_{kl,s}(Z)}{h_{kl,s}(Z)}
  h_{kl,s}(Z)\,\mu_r(\dd s)}
  {\displaystyle\int_0^1h_{kl,s}(Z)\,\mu_r(\dd s)}.
\]
Thus $R_{kl}(r,Z)$ is the weighted average of the quotients $q_{kl,s}(Z)/h_{kl,s}(Z)$
with respect to the measure $\mu_r(\dd s)$. Note that As $r\to 0^+$,
$\mu_r\Rightarrow\delta_0$, and as $r\to \infty$, $\mu_r\Rightarrow\delta_1$ converges
weakly. Indeed,  write $r=e^{-x}$, where $x>0$.  For continuous $f\in C([0,1])$, the
substitution $t=xs$ gives
\[
  \int_0^1 f(s)\mu_r(\dd s)
  =\frac{1}{1-e^{-x}}\int_0^x e^{-t}f(t/x)\dd t
  \longrightarrow f(0)
  \qquad(x\to\infty).
\]
The limit follows by dominated convergence.  Replacing $s$ with $1-s$ gives the case
$r\to\infty$.

Together with Lemma~\ref{lem:endpoint-interpolation}, these limits determine the infimum
of $R_{kl}$.

\begin{lemma}
\label{lem:corner-endpoints}
For every fixed $0\ne Z\in\B(\cR_l,\cR_k)$,
\begin{align}
  \lim_{r\to 0^+}R_{kl}(r,Z)
  =\frac{q_{kl,0}(Z)}{h_{kl,0}(Z)} \ ,\
  \lim_{r\to \infty}R_{kl}(r,Z)=\frac{q_{kl,1}(Z)}{h_{kl,1}(Z)}.
  \label{eq:corner-limits}
\end{align}
Hence, for every directed corner,
\begin{equation}
  \inf_{\substack{r>0\\0\ne Z\in V_{kl}}}
  R_{kl}(r,Z)
  =\min\left\{
  \kappa_{kl}^{(0)},
  \kappa_{kl}^{(1)}
  \right\}.
  \label{eq:corner-infimum}
\end{equation}
\end{lemma}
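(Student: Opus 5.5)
The proof has two parts: the endpoint limits \eqref{eq:corner-limits}, and the identification of the infimum \eqref{eq:corner-infimum}.

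\textbf{Endpoint limits.} For fixed $Z\ne0$, the functions $s\mapsto q_{kl,s}(Z)$ and $s\mapsto h_{kl,s}(Z)$ are continuous on $[0,1]$. To see this, diagonalize $\sigma_k$ and $\sigma_l$. Then $\sigma_k^sW\sigma_l^{1-s}$ is a finite sum of terms of the form $\alpha^s\beta^{1-s}$ times matrix entries, with $\alpha,\beta>0$. Moreover, $h_{kl,s}(Z)>0$ throughout, since $G_s$ is positive definite. By \eqref{eq:corner-quotient}, $R_{kl}(r,Z)$ is the ratio
\[
  \frac{\int_0^1 q_{kl,s}(Z)\,\mu_r(\dd s)}{\int_0^1 h_{kl,s}(Z)\,\mu_r(\dd s)},
\]
where the common normalization $J(r)$ has been divided out of numerator and denominator. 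The weak convergence $\mu_r\Rightarrow\delta_0$ as $r\to0^+$ and $\mu_r\Rightarrow\delta_1$ as $r\to\infty$, shown just before the lemma, applies to both continuous integrands. The numerator therefore tends to $q_{kl,a}(Z)$ and the denominator to $h_{kl,a}(Z)>0$, with $a=0$ or $a=1$ respectively. This gives \eqref{eq:corner-limits}.

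\textbf{The infimum.} If $V_{kl}=\{0\}$, both sides of \eqref{eq:corner-infimum} are $+\infty$ under the convention of Definition~\ref{def:endpoint-constants}. Otherwise the endpoint constants are finite. They are nonnegative by \eqref{eq:endpoint-nonnegative}, and finite because each quotient is bounded: $|q_a|\le\|L\|\,h_a$ in the finite-dimensional norm.

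\emph{Upper bound.} For every $0\ne Z\in V_{kl}$, the limits above give
\[
  \inf_{r>0}R_{kl}(r,Z)\le\min\left\{\frac{q_0(Z)}{h_0(Z)},\frac{q_1(Z)}{h_1(Z)}\right\}.
\]
Taking the infimum over $Z$ yields $\inf R_{kl}\le\min\{\kappa^{(0)}_{kl},\kappa^{(1)}_{kl}\}$.

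\emph{Lower bound.} Apply Lemma~\ref{lem:endpoint-interpolation} with $\kappa_0=\kappa^{(0)}_{kl}$ and $\kappa_1=\kappa^{(1)}_{kl}$. Its hypotheses \eqref{eq:endpoint-bounds} hold by the definition of these constants. It gives, for all $s\in[0,1]$,
\[
  q_s(Z)\ge\bigl((1-s)\kappa_0+s\kappa_1\bigr)h_s(Z)\ge\min\{\kappa_0,\kappa_1\}\,h_s(Z).
\]
Multiplying by $r^s>0$ and integrating over $s\in[0,1]$, we get
\[
  \int_0^1 r^s q_s(Z)\,\dd s\ \ge\ \min\{\kappa_0,\kappa_1\}\int_0^1 r^s h_s(Z)\,\dd s.
\]
Hence $R_{kl}(r,Z)\ge\min\{\kappa_0,\kappa_1\}$ for every $r>0$ and every $0\ne Z\in V_{kl}$. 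Together with the upper bound, this proves \eqref{eq:corner-infimum}.

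\textbf{Main difficulty.} The only delicate point is the interpolation inequality, and it has already been established in Lemma~\ref{lem:endpoint-interpolation}. What remains is routine bookkeeping: the continuity in $s$ needed for weak convergence, and the positivity of $h_s$, which justifies passing the limit through the quotient.
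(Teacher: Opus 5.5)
Your proposal is correct and follows the paper's proof essentially step for step. You use the weak convergence $\mu_r\Rightarrow\delta_0,\delta_1$, applied to the continuous numerator and denominator, for the endpoint limits. You use Lemma~\ref{lem:endpoint-interpolation} integrated against $r^s$ for the lower bound, and the endpoint limits followed by an infimum over $Z$ for the upper bound. Your extra bookkeeping fills in details the paper leaves implicit: the convention when $V_{kl}=\{0\}$, finiteness of $\kappa^{(a)}_{kl}$ so that the interpolation lemma's real-valued hypotheses apply, and positivity of $h_{kl,s}(Z)$.
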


\begin{proof}
Applying the weak convergence $$\displaystyle\lim_{r\to 0^+}\mu_r=\delta_0,
\displaystyle\lim_{r\to \infty }\mu_r=\delta_1$$ limits to $s\mapsto q_{kl,s}(Z)$ and
$s\mapsto h_{kl,s}(Z)$, proves the quotient limits. For the second assertion,
Lemma~\ref{lem:endpoint-interpolation} gives
\[
  q_{kl,s}(Z)
  \ge\min\{\kappa_{kl}^{(0)},\kappa_{kl}^{(1)}\}\,h_{kl,s}(Z)
  \qquad \forall \qquad 0\le s\le1,\ Z\in V_{kl},
\]
and integration proves the lower bound.  Note that the limit \eqref{eq:corner-limits}
implies that every $0\ne Z\in V_{kl}$,
\[
  \inf_{r>0}R_{kl}(r,Z)
  \le\frac{q_{kl,a}(Z)}{h_{kl,a}(Z)},
  \qquad a=0,1.
\]
Taking the infimum over $Z\in V_{kl}\setminus\{0\}$ gives the reverse inequality.
\end{proof}

\subsection{Exact BKM coercivity}
\label{subsec:exact-bkm}

The ratios occurring in the BKM corner formula \eqref{eq:corner-quotient} are determined
by the noiseless blocks $\eta_k$  of the invariant state.  A matrix-unit coordinate
$Y=E_{ij}^{kl}\otimes Z$ carries the ratio
\[
  r=\frac{\lambda_{ki}}{\lambda_{lj}}.
\]
As the infimum in Definition~\ref{def:bkm-constants} ranges over all faithful invariant
states, so the noiseless blocks $\eta_k$ may vary.  For a cross-corner $(k\neq l)$,
varying the relative block weights can realize every $r>0$. For a self-corner $k=l$ with
$\dim\cK_k\ge2$, the same holds for coordinates with $i\ne j$. We therefore define the
set of \emph{accessible corners}
\begin{equation}
  \mathsf{Acc}
  :=\{(k,l):k\ne l\}\cup\{(k,k):\dim\cK_k\ge2\},
  \label{eq:accessible-corners}
\end{equation}
and define endpoint constant
\begin{equation}
  \kappa_{\partial}
  :=\min_{\substack{(k,l)\in\mathsf{Acc}\\a\in\{0,1\}}}
  \kappa_{kl}^{(a)}.
  \label{eq:accessible-endpoint}
\end{equation}
Here the subscript $\partial$ refers to the two modular endpoints $s=0,1$. We interpret
a minimum over an empty index set as $+\infty$.

On a self-corner with $\dim\cK_k=1$, we have $r=1$, and define the \emph{internal
constant}
\begin{equation}
  \gamma_k^{\mathrm{int}}
  :=\inf_{\substack{0\ne Z=Z^*\\\Tr(\sigma_kZ)=0}}
  \frac{\displaystyle\int_0^1q_{kk,s}(Z)\dd s}
  {\displaystyle\int_0^1h_{kk,s}(Z)\dd s},
  \label{eq:internal-constant}
\end{equation}
with value $+\infty$ when $V_{kk}=\{0\}$. This is the BKM coercivity constant for the
primitive generator $L_{kk}$ on $B(\mathcal{R}_k,\mathcal{R}_k)$.

\begin{proposition}
\label{prop:corner-decomposition}
Fix a faithful invariant state $\omega$ and the eigenbases from
Section~\ref{subsec:corner-formulas}. Choose a total order on the composite indices
$\{(k,i):k\in\mathcal I,\ 1\le i\le\dim\cK_k\}$. For $A=A^*$, write
\[
  X=\Gamma_\omega^{-1}(A)
  =\sum_{k,l,i,j}E_{ij}^{kl}\otimes Z_{ij}^{kl}.
\]
Then
\[
  E_*A=0
  \quad\Longleftrightarrow\quad EX=0
  \quad\Longleftrightarrow\quad
  Z_{ij}^{kl}\in V_{kl}\quad\text{for all }k,l,i,j.
\]
For $u=(k,i)\le v=(l,j)$, define
\[
  X_{uv}:=
  \begin{cases}
    E_{ij}^{kl}\otimes Z_{ij}^{kl}
    +E_{ji}^{lk}\otimes(Z_{ij}^{kl})^*,&u<v,\\
    E_{ii}^{kk}\otimes Z_{ii}^{kk},&u=v,
  \end{cases}
  \qquad A_{uv}:=\Gamma_\omega(X_{uv}).
\]
Then $A=\sum_{u\le v}A_{uv}$ is a self-adjoint decomposition, and $E_*A=0$ implies
$E_*A_{uv}=0$ for every $u\le v$. Moreover,
\begin{align*}
  H_\omega(A)&=\sum_{u\le v}H_\omega(A_{uv}),
  &\qquad Q_\omega(A)&=\sum_{u\le v}Q_\omega(A_{uv}).
\end{align*}
For $A\ne0$, consequently,
\[
  \frac{Q_\omega(A)}{H_\omega(A)}
  =
  \sum_{\substack{u\le v\\A_{uv}\ne0}}
  \frac{H_\omega(A_{uv})}{H_\omega(A)}
  \frac{Q_\omega(A_{uv})}{H_\omega(A_{uv})},
\]
where the weights are positive and sum to one.
\end{proposition}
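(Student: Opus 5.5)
The plan is to work entirely in the matrix-unit coordinates fixed by the eigenbases of the blocks $\eta_k$, and to use three facts: $\Gamma_\omega$ acts diagonally on these coordinates, $\cL$ acts through the corner generators $L_{kl}$, and $E$ acts blockwise as in Proposition~\ref{prop:expectation-structure}.

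\textbf{Centering condition.} First I establish the chain of equivalences. By trace duality, $\Tr[(E_*A)Y]=\Tr[A\,E(Y)]$. Since $A=\Gamma_\omega(X)$ and $\Gamma_\omega$ is Hilbert--Schmidt self-adjoint, this equals $\Tr[X\,\Gamma_\omega(E(Y))]$. I will check that $\Gamma_\omega$ commutes with $E$. Indeed, $E(\omega^sY\omega^{1-s})=\omega^sE(Y)\omega^{1-s}$: write $\omega^s=\bigoplus\eta_k^s\otimes\sigma_k^s$, pull out the $\eta_k^s$ factors by bimodularity \eqref{eq:expectation-bimodule}, and use that $\varphi_{\sigma_k}(\sigma_k^sW\sigma_k^{1-s})=\varphi_{\sigma_k}(W)$ by cyclicity. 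Hence $E_*\Gamma_\omega=\Gamma_\omega E$, both being the trace dual of $\Gamma_\omega E=E\Gamma_\omega$, and $\Gamma_\omega$ is invertible. This gives $E_*A=0\iff EX=0$.

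Next I compute $E$ on a coordinate using \eqref{eq:expectation-blocks}. For $k\ne l$, $E(E^{kl}_{ij}\otimes Z)=0$. For $k=l$, $E(E^{kk}_{ij}\otimes Z)=\Tr(\sigma_kZ)\,E^{kk}_{ij}\otimes I$. Since the $E^{kk}_{ij}\otimes I$ are linearly independent, $EX=0$ exactly when every self-corner coefficient satisfies $\Tr(\sigma_kZ^{kk}_{ij})=0$, that is, $Z^{kl}_{ij}\in V_{kl}$ for all indices.

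\textbf{Self-adjoint decomposition.} Because $\Gamma_\omega$ preserves adjoints, $X$ is self-adjoint, and $X^*=X$ forces $Z^{lk}_{ji}=(Z^{kl}_{ij})^*$. So $X=\sum_{u\le v}X_{uv}$ with each $X_{uv}$ self-adjoint. Each $X_{uv}$ has coefficients in the relevant $V$ spaces, noting that $V_{lk}=V_{kl}^*$ and $V_{kk}$ is $*$-closed since $\sigma_k$ is self-adjoint. Hence $EX_{uv}=0$, and by the first step $E_*A_{uv}=0$. Linearity of $\Gamma_\omega$ gives $A=\sum_{u\le v} A_{uv}$.

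\textbf{Orthogonality.} This is the main bookkeeping point. As in Lemma~\ref{lem:corner-bkm}, I write $H_\omega(A)=\Tr[X\Gamma_\omega(X)]$ and $Q_\omega(A)=-\Re\Tr[X^*\Gamma_\omega(\cL(X))]$. The coordinate subspaces $\{E^{kl}_{ij}\otimes Z\}$ for distinct ordered pairs $(u,v)$ are Hilbert--Schmidt orthogonal, and both $\Gamma_\omega$ (by \eqref{eq:weighted-coordinate}) and $\cL$ (by \eqref{eq:corner-action}) map each such subspace into itself. The blocks $X_{uv}$ for different unordered pairs $\{u,v\}$ are supported on disjoint sets of ordered pairs. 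Therefore all cross terms $\Tr[X_{uv}^*\Gamma_\omega(X_{u'v'})]$ and $\Tr[X_{uv}^*\Gamma_\omega\cL(X_{u'v'})]$ with $\{u,v\}\ne\{u',v'\}$ vanish. Both sums follow, the second after taking real parts.

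\textbf{Convex combination.} Finally, $H_\omega(A_{uv})>0$ whenever $A_{uv}\ne0$, since $\Gamma_\omega^{-1}$ is positive definite for the Hilbert--Schmidt inner product. Terms with $A_{uv}=0$ contribute zero to both sums. Dividing by $H_\omega(A)>0$ therefore writes the quotient $Q_\omega(A)/H_\omega(A)$ as a weighted average of the block quotients, with positive weights summing to one.

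The only genuinely non-formal step is the commutation $E\Gamma_\omega=\Gamma_\omega E$ in the first step. Everything else is coordinate bookkeeping.
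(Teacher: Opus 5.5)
\textbf{The key step rests on a false identity.} Everything after your first step matches the paper and is fine: the block computation of $E$, the splitting of $X$ into the self-adjoint pieces $X_{uv}$, the Hilbert--Schmidt orthogonality of matrix-unit coordinates (preserved by $\Gamma_\omega$ and $\cL$), and the convex-combination identity. The problem is the step you call the only non-formal one. Your identity $E(\omega^sY\omega^{1-s})=\omega^sE(Y)\omega^{1-s}$, i.e.\ $E\Gamma_\omega=\Gamma_\omega E$, is false. The left side lies in $\cN$, while the right side carries a factor $\sigma_k$ in each block. Concretely, $\Gamma_\omega E(I)=\Gamma_\omega(I)=\omega$, whereas $E\Gamma_\omega(I)=E(\omega)$. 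In the primitive case this is $\Tr(\sigma^2)I\ne\sigma$ for non-tracial $\sigma$. Your supporting computation fails for the same reason: $\varphi_{\sigma_k}(\sigma_k^sW\sigma_k^{1-s})=\Tr(\sigma_k^{2}W)$, not $\Tr(\sigma_kW)$.

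The duality step does not repair this. The trace dual of $E\Gamma_\omega=\Gamma_\omega E$ is $\Gamma_\omega E_*=E_*\Gamma_\omega$. By contrast, $E_*\Gamma_\omega=\Gamma_\omega E$ is its own trace dual: it says exactly that $\Gamma_\omega E$ is Hilbert--Schmidt self-adjoint. So it cannot be deduced from a commutation relation with $E$. In fact, feeding $E\Gamma_\omega=\Gamma_\omega E$ into your own trace-duality computation gives $E_*A=\Gamma_\omega(E_*X)$, i.e.\ the wrong centering condition $E_*X=0$.

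\textbf{The fix.} What is true, and what you need, is the intertwining
$E_*(\omega^sY\omega^{1-s})=\omega^sE(Y)\omega^{1-s}$ for $s\in[0,1]$, with $E_*$ on the left. It follows from \eqref{eq:predual-blocks}:
\begin{itemize}
\item Both sides depend only on the diagonal blocks $Y_{kk}$.
\item On the $k$th block, the $\eta_k$ factors pass through the partial trace. What remains is the \emph{unweighted} cyclicity $\Tr_{\cR_k}[(I\otimes\sigma_k^s)W(I\otimes\sigma_k^{1-s})]=\Tr_{\cR_k}[(I\otimes\sigma_k)W]=(\id\otimes\varphi_{\sigma_k})(W)$.
\item Hence both sides equal $\eta_k^s(\id\otimes\varphi_{\sigma_k})(W)\eta_k^{1-s}\otimes\sigma_k$.
\end{itemize}
The paper does the same check on matrix units. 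There, on $E^{kk}_{ij}\otimes Z$, both sides of $E_*\Gamma_\omega=\Gamma_\omega E$ equal $m(\lambda_{ki},\lambda_{kj})\Tr(\sigma_kZ)\,E^{kk}_{ij}\otimes\sigma_k$, using $\Tr(\sigma_k^sZ\sigma_k^{1-s})=\Tr(\sigma_kZ)$. Integrating in $s$ gives $E_*\Gamma_\omega=\Gamma_\omega E$ directly, with no duality argument. Invertibility of $\Gamma_\omega$ then yields $E_*A=0\iff EX=0$, and the rest of your proof goes through unchanged.
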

\begin{proof}
We first establish
\begin{equation}
  E_*\Gamma_\omega=\Gamma_\omega E.
  \label{eq:expectation-bkm}
\end{equation}
For $k\ne l$, both sides annihilate $E_{ij}^{kl}\otimes Z$. For $k=l$, equations
\eqref{eq:expectation-blocks}--\eqref{eq:predual-blocks} and
$\Tr(\sigma_k^sZ\sigma_k^{1-s})=\Tr(\sigma_kZ)$ give
\begin{align*}
  E_*\Gamma_\omega(E_{ij}^{kk}\otimes Z)
  &=m(\lambda_{ki},\lambda_{kj})\Tr(\sigma_kZ)\,
  E_{ij}^{kk}\otimes\sigma_k\\
  &=\Gamma_\omega E(E_{ij}^{kk}\otimes Z),
\end{align*}
where $m$ is the logarithmic mean. This proves \eqref{eq:expectation-bkm}. Since
$\Gamma_\omega$ is invertible, $E_*A=0$ is equivalent to $EX=0$. By
\eqref{eq:expectation-blocks} and linear independence of the matrix units, $EX=0$ holds
if and only if $\Tr(\sigma_kZ_{ij}^{kk})=0$ for all $k,i,j$, which is precisely the
stated condition on $V_{kl}$.

Since $\Gamma_\omega^{-1}$ preserves adjoints, $X=X^*$ and
$Z_{ji}^{lk}=(Z_{ij}^{kl})^*$. Thus the $X_{uv}$ are self-adjoint and sum to $X$.
Applying $\Gamma_\omega$ gives $A=\sum_{u\le v}A_{uv}$ with $A_{uv}=A_{uv}^*$. If
$EX=0$, the conditions $\Tr(\sigma_kZ_{ij}^{kk})=0$ also give $EX_{uv}=0$ for every
$u\le v$. Equation~\eqref{eq:expectation-bkm} then yields $E_*A_{uv}=0$.

Both $\Gamma_\omega$ and $\cL$ preserve each subspace
$E_{ij}^{kl}\otimes\B(\cR_l,\cR_k)$. For $u<v$, the matrix units in $X_{uv}$ correspond
to $(u,v)$ and $(v,u)$, whereas $X_{uu}$ contains only the matrix unit corresponding to
$(u,u)$. Different components therefore involve different matrix units. Their
Hilbert--Schmidt orthogonality implies that all cross terms between distinct components
vanish. Using $\Gamma_\omega^{-1}(A_{uv})=X_{uv}$, we obtain
\begin{align*}
  H_\omega(A)
  &=\sum_{\substack{u\le v\\u'\le v'}}
  \Tr[X_{uv}^*\Gamma_\omega(X_{u'v'})]
  =\sum_{u\le v}\Tr[X_{uv}^*\Gamma_\omega(X_{uv})]
  =\sum_{u\le v}H_\omega(A_{uv}),\\
  Q_\omega(A)
  &=-\Re\sum_{\substack{u\le v\\u'\le v'}}
  \Tr[X_{uv}^*\Gamma_\omega(\cL(X_{u'v'}))]
  =-\sum_{u\le v}\Re\Tr[X_{uv}^*\Gamma_\omega(\cL(X_{uv}))]
  =\sum_{u\le v}Q_\omega(A_{uv}).
\end{align*}
For $A\ne0$, dividing by $H_\omega(A)$ gives the quotient formula.  Positive
definiteness of $H_\omega$ makes every nonzero-component weight positive, and additivity
of $H_\omega$ shows that the weights sum to one.
\end{proof}
With above proposition, we show that the BKM coercivity constant is characterized by the
endpoint constant and the internal constant together .

\begin{theorem}
\label{thm:exact-bkm}
Let $P_t$ be a quantum Markov semigroup. Assume the limit condition expectation
$\displaystyle E=\lim_{t\to \infty}P_t$ exists and admits the block decomposition as
above. One has
\begin{equation}
  \lambda_{\mathrm{BKM}}
  =\min\left\{
  \kappa_{\partial},
  \min_{k:\,\dim\cK_k=1}
  \gamma_k^{\mathrm{int}}
  \right\}.
  \label{eq:exact-bkm}
\end{equation}
\end{theorem}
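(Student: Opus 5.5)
We prove the two inequalities separately, using Proposition~\ref{prop:corner-decomposition} to reduce everything to single matrix-unit components.

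\textbf{Lower bound.} Fix a faithful invariant $\omega$ and a nonzero admissible $A$. Proposition~\ref{prop:corner-decomposition} writes $Q_\omega(A)/H_\omega(A)$ as a convex combination of the quotients $Q_\omega(A_{uv})/H_\omega(A_{uv})$. Each nonzero component has the form of Lemma~\ref{lem:corner-bkm}, with $Z=Z_{ij}^{kl}\in V_{kl}$, so it suffices to bound $R_{kl}(r,Z)$ from below for every component.

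There are two cases.
\begin{itemize}
\item If $(k,l)\in\mathsf{Acc}$, Lemma~\ref{lem:corner-endpoints} gives $R_{kl}(r,Z)\ge\min\{\kappa^{(0)}_{kl},\kappa^{(1)}_{kl}\}\ge\kappa_\partial$.
\item If $(k,l)\notin\mathsf{Acc}$, then $k=l$ and $\dim\cK_k=1$. The only component is $u=v$, with $r=1$ and $Z=Z^*$ centered, so $R_{kk}(1,Z)\ge\gamma_k^{\mathrm{int}}$ by definition.
\end{itemize}

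One subtlety remains. On accessible self-corners, diagonal components $u=v$ also occur, with $r=1$ and $Z=Z^*$. These are still covered by Lemma~\ref{lem:corner-endpoints}, since its bound holds for all $r>0$ and all $Z\in V_{kk}$. Hence the quotient is at least the right-hand side of \eqref{eq:exact-bkm}, and taking the infimum gives $\lambda_{\mathrm{BKM}}\ge\min\{\cdots\}$.

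\textbf{Upper bound for the internal constants.} For $k$ with $\dim\cK_k=1$, take any faithful invariant $\omega$ and $X=E^{kk}_{11}\otimes Z$ with $Z=Z^*\in V_{kk}$. Then $A=\Gamma_\omega(X)$ is self-adjoint and admissible. By Lemma~\ref{lem:corner-bkm} with $r=1$, its quotient equals $\int q/\int h$. Taking the infimum over $Z$ gives $\lambda_{\mathrm{BKM}}\le\gamma_k^{\mathrm{int}}$.

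\textbf{Upper bound for the endpoint constant.} This is the main step. Fix $(k,l)\in\mathsf{Acc}$, $a\in\{0,1\}$ and $0\neq Z\in V_{kl}$. We need a family of faithful invariant states whose eigenvalue ratio $r=\lambda_{ki}/\lambda_{lj}$ tends to $0$ or to $\infty$, with $i\ne j$ when $k=l$.
\begin{itemize}
\item If $k\ne l$, take $\eta_k=\epsilon I/\text{norm}$ and $\eta_l=I/\text{norm}$, with the other blocks fixed, and let $\epsilon\to0$ or $\epsilon\to\infty$ after renormalizing.
\item If $k=l$ with $\dim\cK_k\ge2$, take $\eta_k$ diagonal with entries $\lambda_{k1}=\epsilon$ and $\lambda_{k2}=1$, suitably normalized, and use $i=1$, $j=2$.
\end{itemize}
The ratio $r$ depends only on ratios of eigenvalues, so the normalization is harmless.

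Set $X=Y+Y^*$ with $Y=E^{kl}_{ij}\otimes Z$, and $A=\Gamma_\omega(X)$. This $A$ is self-adjoint. It is admissible by Proposition~\ref{prop:corner-decomposition}: for $i\ne j$ or $k\ne l$, the centering condition is vacuous or already holds since $Z\in V_{kl}$. Lemma~\ref{lem:corner-bkm} gives the quotient $R_{kl}(r,Z)$, and Lemma~\ref{lem:corner-endpoints} sends it to $q_{kl,a}/h_{kl,a}$ as $r\to0$ or $r\to\infty$. Taking the infimum over $Z$ gives $\lambda_{\mathrm{BKM}}\le\kappa^{(a)}_{kl}$. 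Minimizing over $(k,l)\in\mathsf{Acc}$ and $a$ yields $\lambda_{\mathrm{BKM}}\le\kappa_\partial$.

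The remaining care is bookkeeping. We must check that the fixed $Z$ stays in $V_{kl}$ independently of $\omega$; it does, because $\sigma_k$ is determined by $E$. We must also note that empty index sets give $+\infty$, consistent with the convention.
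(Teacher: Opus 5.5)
Your proposal is correct and follows the paper's proof essentially step for step. The lower bound uses the convex decomposition of Proposition~\ref{prop:corner-decomposition}, with Lemma~\ref{lem:corner-endpoints} on accessible corners and the definition of $\gamma_k^{\mathrm{int}}$ on the remaining self-corners. The upper bounds use the single matrix-unit test directions $E_{11}^{kk}\otimes Z$ and $Y+Y^*$ with an adjustable eigenvalue ratio $r$. The only cosmetic difference is that you send $r\to0$ or $r\to\infty$ directly through the limits \eqref{eq:corner-limits}, whereas the paper realizes every $r>0$ and quotes the infimum identity \eqref{eq:corner-infimum}, which is itself proved from those same limits.
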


\begin{proof}
Set
\[
  m:=\min\left\{
  \kappa_\partial,
  \min_{k:\,\dim\cK_k=1}\gamma_k^{\mathrm{int}}
  \right\}.
\]
Fix a faithful invariant state $\omega$ and a nonzero admissible tangent vector $A$.  By
Proposition~\ref{prop:corner-decomposition}, it suffices to show that every nonzero
component has quotient at least $m$.  On an accessible corner,
\[
  R_{kl}(r,Z)
  \ge\min\{\kappa_{kl}^{(0)},\kappa_{kl}^{(1)}\}
  \ge\kappa_\partial\ge m
\]
by Lemma~\ref{lem:corner-endpoints}.  On a self-corner with $\dim\cK_k=1$, $r=1$ and
$Z=Z^*$, so the quotient is at least $\gamma_k^{\mathrm{int}}\ge m$.  Hence
$\lambda_{\mathrm{BKM}}\ge m$.

For the reverse bound, fix $k$ with $\dim\cK_k=1$ and $V_{kk}\ne\{0\}$. Choose $0\ne
Z=Z^*\in V_{kk}$ and set
\[
  X=E_{11}^{kk}\otimes Z,
  \qquad A=\Gamma_\omega(X).
\]
Proposition~\ref{prop:corner-decomposition} gives $E_*A=0$, and
Lemma~\ref{lem:corner-bkm} identifies the quotient $Q_\omega(A)/H_{\omega}(A)$ with the
ratio in \eqref{eq:internal-constant}.  Taking the infimum over $Z$ gives
$\lambda_{\mathrm{BKM}}\le\gamma_k^{\mathrm{int}}$.

Next fix $(k,l)\in\mathsf{Acc}$, $r>0$, and $0\ne Z\in V_{kl}$. Choose distinct
noiseless basis labels $(k,i)$ and $(l,j)$.  Assign weights $r$ and $1$ to the selected
labels and weight $1$ to every remaining label. Dividing all weights by their sum gives
positive diagonal blocks $\eta_u$ and a faithful invariant state
\[
  \omega=\bigoplus_u\eta_u\otimes\sigma_u,
  \qquad \frac{\lambda_{ki}}{\lambda_{lj}}=r.
\]
Let
\[
  Y:=E_{ij}^{kl}\otimes Z,
  \qquad X:=Y+Y^*,
  \qquad A:=\Gamma_\omega(X).
\]
Proposition~\ref{prop:corner-decomposition} gives $E_*A=0$, and
Lemma~\ref{lem:corner-bkm} gives
\[
  \frac{Q_\omega(A)}{H_\omega(A)}=R_{kl}(r,Z).
\]
Taking the infimum over $r$ and $Z$ and applying Lemma~\ref{lem:corner-endpoints}, we
obtain
\[
  \lambda_{\mathrm{BKM}}
  \le\inf_{\substack{r>0\\0\ne Z\in V_{kl}}}R_{kl}(r,Z)
  =\min\{\kappa_{kl}^{(0)},\kappa_{kl}^{(1)}\}.
\]
Minimizing over all accessible corners gives $\lambda_{\mathrm{BKM}}\le\kappa_\partial$,
proving \eqref{eq:exact-bkm}.
\end{proof}
Under KMS symmetry, Corollary~\ref{cor:internal-positivity}, the finiteness of $\mathcal
I$, and the nonnegativity of the endpoint constants give \eqref{eq:kms-mlsi-criterion}.
\begin{corollary}
\label{cor:internal-positivity}
Let $P_t=e^{t\cL}$ be a quantum Markov semigroup that is KMS symmetric with respect to a
faithful invariant state. Then
\begin{equation}
  \gamma_k^{\mathrm{int}}>0 , \qquad \text{ if } \qquad  \dim\cK_k=1 .
  \label{eq:internal-positive}
\end{equation}
As a consequence,
\begin{equation}
  \alpha(\cL)>0
  \quad\Longleftrightarrow\quad
  \lambda_{\mathrm{BKM}}(\cL)>0
  \quad\Longleftrightarrow\quad
  \kappa_\partial>0.
  \label{eq:kms-mlsi-criterion}
\end{equation}
\end{corollary}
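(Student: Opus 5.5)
The plan is to show that, under KMS symmetry, the internal BKM constant on a self-corner with $\dim\cK_k=1$ is controlled from below by the KMS spectral gap of the primitive corner generator $L_{kk}$. The equivalences then follow from Theorem~\ref{thm:exact-bkm} and Corollary~\ref{cor:bkm-mlsi}.

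\textbf{Step 1: the corner generator is KMS symmetric and primitive.} Fix $k$ with $\dim\cK_k=1$ and $V_{kk}\neq\{0\}$. Let $\sigma$ be the faithful invariant state for which $\cL$ is KMS symmetric. Since $\cL$ is $\cN$-bimodular and $p_k\in\cN$, the compression to $p_k\B(\cH)p_k=\B(\cR_k)$ is exactly $L_{kk}$. The restriction of $\sigma$ to this corner is proportional to $\sigma_k$, because invariant states have the form \eqref{eq:fixed-states} and here $\eta_k$ is a scalar. Hence $L_{kk}$ is self-adjoint for the inner product $h_{kk,1/2}$. By \eqref{eq:corner-kernels}, $\Ker L_{kk}=\mathbb C I$, so $L_{kk}$ is diagonalizable with real nonpositive spectrum. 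On $V_{kk}$, which is the $h_{kk,1/2}$-orthogonal complement of $I$, it has a gap $g_k>0$.

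\textbf{Step 2: transfer the gap to every $s$.} I would first try to establish $q_{kk,s}(Z)\ge g_k\,h_{kk,s}(Z)$ for all $s\in[0,1]$ and $Z\in V_{kk}$. Integrating over $s$ would then give $\gamma_k^{\mathrm{int}}\ge g_k>0$. This is essentially Wirth's comparison between the KMS and BKM forms. To prove it directly:
\begin{enumerate}
\item The semigroup $S_t=e^{tL_{kk}}$ is self-adjoint on $H_{1/2}$, so $\|S_t|_{V_{kk}}\|_{H_{1/2}\to H_{1/2}}=e^{-g_kt}$.
\item By Kadison's inequality \eqref{eq:endpoint-contraction}, $S_t$ is a contraction on $H_0$ and on $H_1$, and hence also on $V_{kk}$ with the centered norms of \eqref{eq:centered-interpolation}.
\item Using the reiteration theorem for complex interpolation, write $H_s$ as an interpolation space between $H_{1/2}$ and one endpoint. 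For $s<1/2$, $H_s=[H_0,H_{1/2}]_{2s}$, which gives the bound $e^{-2s g_k t}$. This yields only $q_s\ge 2\min(s,1-s)\,g_k\,h_s$.
\end{enumerate}
This bound degenerates at the endpoints, but it is still integrable with a positive integral. Integrating it gives $\int_0^1 q_s\,\dd s\ge c\,g_k\int_0^1 \min(s,1-s)\,h_s\,\dd s$. To close the argument one needs to compare $\int_0^1 \min(s,1-s)\,h_s\,\dd s$ with $\int_0^1 h_s\,\dd s$. This works because $s\mapsto h_s(Z)$ is log-convex, hence bounded by the maximum of its endpoint values, and also bounded below by a finite-dimensional norm-equivalence constant. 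All norms on the finite-dimensional space $V_{kk}$ are equivalent, so the ratio is bounded below by a positive constant depending on $\sigma_k$. Thus $\gamma_k^{\mathrm{int}}>0$.

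\textbf{Step 3: the equivalences.} The equivalence $\alpha>0\iff\lambda_{\mathrm{BKM}}>0$ is Corollary~\ref{cor:bkm-mlsi}. By Theorem~\ref{thm:exact-bkm}, $\lambda_{\mathrm{BKM}}$ is the minimum of $\kappa_\partial$ and finitely many internal constants. Here $\mathcal I$ is finite and each internal constant is positive (or $+\infty$) by Step 2, so $\lambda_{\mathrm{BKM}}>0$ if and only if $\kappa_\partial>0$. The endpoint constants are always $\ge0$ by \eqref{eq:endpoint-nonnegative}, so no sign issues arise.

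\textbf{Main obstacle.} The difficulty is Step 2: obtaining a uniform lower bound near the endpoints $s=0,1$, where only contractivity is available. If the interpolation and norm-equivalence argument proves awkward, the fallback is the direct observation that $\int_0^1 q_s\,\dd s$ is a positive semidefinite quadratic form on the finite-dimensional space $V_{kk}$, since each $q_s\ge0$ by Lemma~\ref{lem:endpoint-interpolation} with $\kappa_0=\kappa_1=0$. It then suffices to check that its kernel is trivial. The kernel is trivial because $q_{1/2}$ is strictly positive on $V_{kk}$, and by continuity in $s$ so is $q_s$ for $s$ near $1/2$, giving $\int_0^1 q_s\,\dd s>0$ for every $Z\neq0$. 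Compactness of the unit sphere then yields $\gamma_k^{\mathrm{int}}>0$. This fallback is simpler, and I would use it as the actual proof.
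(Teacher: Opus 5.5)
Your fallback argument, which you adopt as the actual proof, is exactly the paper's proof. It gets $q_{kk,s}\ge0$ for all $s$ from Lemma~\ref{lem:endpoint-interpolation} with $\kappa_0=\kappa_1=0$, and strict positivity at $s=\tfrac12$ from the $h_{kk,1/2}$-gap of the self-adjoint corner generator $L_{kk}$. Continuity in $s$ and compactness then give $\gamma_k^{\mathrm{int}}>0$, and Theorem~\ref{thm:exact-bkm}, finiteness of $\mathcal I$, and Corollary~\ref{cor:bkm-mlsi} give the equivalences.

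Your reiteration route in Step~2 is also valid and even quantitative. Since $\lambda_{\min}(\sigma_k)\Tr(Z^*Z)\le h_{kk,s}(Z)\le\lambda_{\max}(\sigma_k)\Tr(Z^*Z)$, it gives $\gamma_k^{\mathrm{int}}\ge g_k\lambda_{\min}(\sigma_k)/(2\lambda_{\max}(\sigma_k))$. However, the uniform target $q_{kk,s}(Z)\ge g_k\,h_{kk,s}(Z)$ for all $s$ and all $Z\in V_{kk}$, which you first aimed for, is false at $s=0$: in Section~\ref{subsec:m2-counterexample}, $q_0(W)=0$ while the KMS gap is $\tfrac14$. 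Also, Wirth's comparison is with the GNS gap, not the KMS gap.
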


\begin{proof}
If $V_{kk}=\{0\}$, then $\gamma_k^{\mathrm{int}}=+\infty$ by convention.  Otherwise,
\eqref{eq:endpoint-nonnegative} and the decay of $h_{kk,s}(Z)$ in
Lemma~\ref{lem:endpoint-interpolation} gives $q_{kk,s}(Z)\ge0$ for every $s\in[0,1]$ and
$Z\in V_{kk}$. By KMS symmetry and Section~\ref{subsec:corner-generators}, $L_{kk}$ is
self-adjoint for $h_{kk,1/2}$ and has kernel $\mathbb C I$.  Since $V_{kk}$ is the
$h_{kk,1/2}$-orthogonal complement of $\mathbb C I$, there is $\lambda_k>0$ such that
\[
  q_{kk,1/2}(Z)\ge\lambda_k h_{kk,1/2}(Z)
  \qquad (Z\in V_{kk}).
\]
For every nonzero $Z\in V_{kk}$, the integrand is nonnegative and strictly positive at
$s=\tfrac12$.  Continuity in $s$ therefore gives
\[
  \int_0^1q_{kk,s}(Z)\dd s>0,
\]
which implies the quotient
\begin{equation}
  \frac{\displaystyle\int_0^1q_{kk,s}(Z)\dd s}
  {\displaystyle\int_0^1h_{kk,s}(Z)\dd s},
\end{equation}
is positive for every $Z$ on the set
\[
  \left\{Z=Z^*: \Tr(\sigma_kZ)=0,\
  \int_0^1h_{kk,s}(Z)\dd s=1\right\}.
\]
Since this set is compact, the integral quotient attains a positive minimum.
\end{proof}

\section{The complete BKM constant and the GNS spectral gap}
\label{sec:complete-bkm}
In this section, we identify the complete endpoint constant $\kappa_{\partial}^{\mathrm
c}$ with the spectral gap of the GNS symmetrization and prove that the BKM coercivity
constant stabilizes at ancillary dimension two. These identities together give the CMLSI
criterion. We then consider the GNS-symmetric case.
\subsection{The GNS symmetrization and its corner formula}
For a faithful invariant state $\omega$, define the GNS-inner product and norm as
\begin{equation}
  \ip{X}{Y}_{2,\omega}:=\Tr(\omega X^*Y),
  \qquad
  \norm{X}_{2,\omega}^2:=\ip{X}{X}_{2,\omega}.
  \label{eq:gns-product}
\end{equation}
We denote by $L_2(\omega)$ the Hilbert space associated with the above inner product.
Note that the conditional expectation $E$ is the orthogonal projection from
$L_2(\omega)$ onto $\cN$ for every faithful invariant state $\omega$
\cite[Section~2.4]{GaoRouze2022}. Indeed, since $E$ preserves $\omega$ and is
$\cN$-bimodular, for $Y\in\cN$,
\[
  \ip{Y}{X}_{2,\omega}
  =\Tr\!\left(\omega E(Y^*X)\right)
  =\Tr\!\left(\omega Y^*E(X)\right)
  =\ip{Y}{E(X)}_{2,\omega}.
\]
Thus $E$ is self-adjoint on $L_2(\omega)$, and $\B(\cH)=\cN\oplus^{\perp_\omega}\Ker E$.

\Needspace{9\baselineskip}
\begin{proposition}
\label{prop:gns-adjoint}
The $L_2(\omega)$-adjoint of $\cL$ is independent of the choice of faithful invariant
state $\omega$.  Denote this adjoint by $\cL^\sharp$ and define the \emph{GNS
symmetrization} by
\begin{equation}
  \cL_{\mathrm{GNS}}
  :=\frac12(\cL+\cL^\sharp).
  \label{eq:gns-symmetrization}
\end{equation}
The operator $\cL_{\mathrm{GNS}}$ is self-adjoint and non-positive on every
$L_2(\omega)$.  The subspace $\Ker E$ is invariant under $\cL$, $\cL^\sharp$, and
$\cL_{\mathrm{GNS}}$.
\end{proposition}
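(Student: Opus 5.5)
The plan is to work with the corner decomposition of Section~\ref{subsec:corner-generators} and compute the $L_2(\omega)$-adjoint corner by corner. A faithful invariant state has the form $\omega=\bigoplus_k\eta_k\otimes\sigma_k$. Only the noiseless blocks $\eta_k$ vary with $\omega$, while the $\sigma_k$ are fixed by $E$. For $X=A\otimes Z$ and $Y=B\otimes W$ in $\mathsf C_{kl}$, with $A,B\in\B(\cK_l,\cK_k)$ and $Z,W\in\B(\cR_l,\cR_k)$, a direct computation gives
\[
  \ip{X}{Y}_{2,\omega}=\Tr(\eta_l A^*B)\,\Tr(\sigma_l Z^*W)=\Tr(\eta_lA^*B)\,h_{kl,0}(Z,W).
\]
Distinct corners $\mathsf C_{kl}$ and $\mathsf C_{k'l'}$ are $L_2(\omega)$-orthogonal for every faithful invariant $\omega$. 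Indeed, $\Tr(\omega X^*Y)$ with $X=p_kXp_l$ and $Y=p_{k'}Yp_{l'}$ vanishes unless $k=k'$, since $\omega$ commutes with the $p$'s, and then unless $l=l'$. By \eqref{eq:corner-action}, $\cL$ preserves each corner and acts as $\id\otimes L_{kl}$ there.

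The key step follows from these two facts. On $\mathsf C_{kl}\cong\B(\cK_l,\cK_k)\otimes\B(\cR_l,\cR_k)$, the inner product is a tensor product of an $\omega$-dependent inner product on the first factor and the fixed inner product $h_{kl,0}$ on the second. The adjoint of $\id\otimes L_{kl}$ is therefore $\id\otimes L_{kl}^{\dagger}$, where $L_{kl}^\dagger$ is the $h_{kl,0}$-adjoint of $L_{kl}$. This map is independent of $\eta$. Define $\cL^\sharp$ corner-wise by $\cL^\sharp(A\otimes Z):=A\otimes L_{kl}^\dagger(Z)$. Orthogonality of the corners, together with invariance of each corner under $\cL$, shows that $\cL^\sharp$ is the $L_2(\omega)$-adjoint for every faithful invariant $\omega$. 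The one point to check carefully is the tensor-product adjoint claim; it is immediate on elementary tensors and extends by linearity.

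It then follows at once that $\cL_{\mathrm{GNS}}$ is $L_2(\omega)$-self-adjoint. For non-positivity, observe that $\Re\ip{X}{\cL X}_{2,\omega}=\ip{X}{\cL_{\mathrm{GNS}}X}_{2,\omega}$. I would show that $-\Re\Tr(\omega X^*\cL(X))\ge0$ by differentiating Kadison--Schwarz, $P_t(X)^*P_t(X)\le P_t(X^*X)$, against the invariant state $\omega$. At $t=0$ this gives $\Tr(\omega(\cL(X)^*X+X^*\cL(X)))\le\Tr(\omega\cL(X^*X))=\Tr(\cL_*(\omega)X^*X)=0$. This is the same argument as \eqref{eq:endpoint-nonnegative}, applied globally.

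Finally, $\Ker E$ is invariant under all three maps. We have $E\cL=0$ by \eqref{eq:asymptotic-relations}, since $\frac{\dd}{\dd t}EP_t=0$, so $\cL$ maps everything, and in particular $\Ker E$, into $\Ker E$. Also $\cL E=0$, since $\cL$ annihilates $\cN=\Fix(P)$. Because $E$ is an $L_2(\omega)$-orthogonal projection, taking adjoints gives $E\cL^\sharp=(\cL E)^\sharp=0$, so $\cL^\sharp$ also maps into $\Ker E$. Hence $\cL_{\mathrm{GNS}}$ preserves $\Ker E$ as well. I do not expect a real obstacle; the main care goes into the $\omega$-independence, which rests entirely on the factorization of the inner product on each corner.
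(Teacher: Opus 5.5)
Your proposal is correct and follows the paper's proof essentially step for step. The corner-wise factorization $\ip{A\otimes Z}{B\otimes W}_{2,\omega}=\Tr(\eta_lA^*B)\,h_{kl,0}(Z,W)$ yields the $\omega$-independent adjoint $\id\otimes L_{kl}^{\dagger}$, Kadison--Schwarz against the invariant state yields non-positivity, and $E\cL=\cL E=0$ together with the $L_2(\omega)$-self-adjointness of $E$ yields $E\cL^\sharp=0$ (you differentiate Kadison--Schwarz directly rather than first stating the contraction $\norm{P_tX}_{2,\omega}\le\norm{X}_{2,\omega}$, which is only a cosmetic difference).
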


\begin{proof}
To identify the adjoint of $\cL$, write
\[
  \omega=\bigoplus_k\eta_k\otimes\sigma_k
\]
in the fixed-point decomposition of Proposition~\ref{prop:expectation-structure}. The
directed corners $\mathsf C_{kl}=\B(\cK_l,\cK_k)\otimes\B(\cR_l,\cR_k)$ form an
orthogonal direct sum in $L_2(\omega)$, and \eqref{eq:corner-action} gives
$\cL|_{\mathsf C_{kl}}=\id\otimes L_{kl}$. For $A,B\in\B(\cK_l,\cK_k)$ and
$Z,W\in\B(\cR_l,\cR_k)$,
\begin{align}
  \ip{A\otimes Z}{B\otimes W}_{2,\omega}
  =\Tr(\eta_l A^*B)\,h_{kl,0}(Z,W)\ ,\
  h_{kl,0}(Z,W)=\Tr(Z^*W\sigma_l).
  \label{eq:gns-factorization}
\end{align}
Hence the adjoint of $\cL$ on this corner is $\id\otimes L_{kl}^{\sharp}$, where
$L_{kl}^{\sharp}$ is the adjoint of $L_{kl}$ for $h_{kl,0}$.  Each corner adjoint
depends only on $L_{kl}$ and $h_{kl,0}$, so the adjoint of $\cL$ is independent of the
noiseless blocks $\eta_k$.

Finally, Kadison's inequality and invariance of $\omega$ imply
$\norm{P_tX}_{2,\omega}\le\norm{X}_{2,\omega}$.  Differentiating at zero gives
\[
  \ip{X}{-\cL_{\mathrm{GNS}}X}_{2,\omega}= \frac{1}{2}(\ip{X}{-\cL
  X}_{2,\omega}+\ip{-\cL X}{ X}_{2,\omega})
  =-\Re\ip{X}{\cL X}_{2,\omega}\ge0.
\]
Since $E\cL=\cL E=0$ and $E$ is self-adjoint on $L_2(\omega)$, taking
$L_2(\omega)$-adjoints also gives $E\cL^\sharp=\cL^\sharp E=0$.  Hence $\Ker E$ is
invariant under $\cL$, $\cL^\sharp$, and $\cL_{\mathrm{GNS}}$.
\end{proof}

We note that the operator $\cL_{\mathrm{GNS}}$ need not preserve adjoints and hence need
not generate a QMS; see Section~\ref{subsec:m2-counterexample}.

For a general QMS, extend the definition of the GNS spectral gap by
\begin{equation}
  \lambda_{\mathrm{GNS}}(\cL)
  :=\min\spec\!\left(
  -\cL_{\mathrm{GNS}}\big|_{\Ker E}
  \right)
  =\inf_{\substack{X\ne0\\E(X)=0}}
  \frac{-\Re\Tr(\omega X^*\cL(X))}
  {\Tr(\omega X^*X)}.
  \label{eq:gns-gap}
\end{equation}
The infimum in \eqref{eq:gns-gap} is independent of the chosen faithful invariant state
$\omega$.  If $\cL$ is GNS symmetric, then $\cL^\sharp=\cL$, and
\eqref{eq:gns-gap} recovers the definition in Section~\ref{subsec:qms}. In general,
$\lambda_{\mathrm{GNS}}(\cL)$ can vanish even in finite dimensions
(Section~\ref{subsec:m3-counterexample}).

\begin{remark}
{\rm The above definition agrees with the GNS spectral gap considered in
\cite[Remark~3.9]{WirthKMSGNS2026}. In particular, $\lambda_{\mathrm{GNS}}(\cL)$ is the
largest constant $\lambda$ such that
\[
  \norm{P_tX}_{2,\omega}\le e^{-\lambda t}\norm{X}_{2,\omega},
  \qquad t\ge0,\quad X\in\Ker E.
\]
}
\end{remark}

Define the \emph{complete endpoint constant} by
\begin{equation}
  \kappa_{\partial}^{\mathrm c}:= \inf_{n\ge 1} \kappa_{\partial}(\id_{M_n}\otimes \cL)
  \label{eq:complete-endpoint}
\end{equation}
Here for the matrix amplification $\cL\otimes \id_{M_n}$, $\cK_k$ is replaced by
$\mathbb C^n\otimes\cK_k$, while $\sigma_k$, $L_{kl}$, and $V_{kl}$ remain unchanged.
For $n\ge2$, every nonzero centered corner has two distinct noiseless coordinates, so
its weight ratio $r$ can vary over $(0,\infty)$, which leads the following amplification
stabilization.

\begin{proposition}
\label{prop:gns-corners}
The GNS spectral gap satisfies
\begin{equation}
  \lambda_{\mathrm{GNS}}(\cL)
  =\min_{\substack{k,l\\V_{kl}\ne\{0\}}}
  \kappa_{kl}^{(0)}
  =\kappa_{\partial}^{\mathrm c}.
  \label{eq:gns-corner-gap}
\end{equation}
Consequently, for every $n\ge1$,
\begin{equation}
  \lambda_{\mathrm{GNS}}(\id_{M_n}\otimes\cL)
  =\lambda_{\mathrm{GNS}}(\cL).
  \label{eq:gns-amplification}
\end{equation}
\end{proposition}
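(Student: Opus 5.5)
The plan is to prove the two equalities in \eqref{eq:gns-corner-gap} separately, using the corner decomposition of $L_2(\omega)$, and then to deduce \eqref{eq:gns-amplification} from the fact that amplification leaves the corner data unchanged.

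\textbf{First equality.} Fix a faithful invariant state $\omega=\bigoplus_k\eta_k\otimes\sigma_k$. By \eqref{eq:gns-factorization}, the directed corners $\mathsf C_{kl}$ are mutually orthogonal in $L_2(\omega)$, and on each of them $\cL$ acts as $\id\otimes L_{kl}$. We have $E(X)=0$ exactly when every component $p_kXp_l$ satisfies it. By the block formula \eqref{eq:expectation-blocks}, this condition is automatic on cross-corners. On a self-corner it means $(\id\otimes\varphi_{\sigma_k})(X_{kk})=0$.

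Choose an eigenbasis of $\eta_l$ with eigenvalues $\lambda_{lj}$. Expanding $X=\sum E_{ij}^{kl}\otimes Z_{ij}^{kl}$ then makes every $Z_{ij}^{kl}$ lie in $V_{kl}$, exactly as in Proposition~\ref{prop:corner-decomposition}. The key observation is that the GNS weight involves only the right factor $\omega$. Hence the matrix units stay orthogonal, and we obtain
\[
  \Tr(\omega X^*X)=\sum\lambda_{lj}\,h_{kl,0}(Z_{ij}^{kl}),
  \qquad
  -\Re\Tr(\omega X^*\cL X)=\sum\lambda_{lj}\,q_{kl,0}(Z_{ij}^{kl}).
\]
No symmetrization in $X\leftrightarrow X^*$ is needed, because $X$ is not required to be self-adjoint. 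The quotient is therefore a convex combination of ratios $q_{kl,0}(Z)/h_{kl,0}(Z)$ with $Z\in V_{kl}$. This gives $\lambda_{\mathrm{GNS}}\ge\min_{V_{kl}\neq0}\kappa^{(0)}_{kl}$. For the reverse inequality, take $X=E_{11}^{kl}\otimes Z$ for a single corner.

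\textbf{Second equality.} Replacing $\cK_k$ by $\mathbb C^n\otimes\cK_k$ makes every corner accessible once $n\ge2$. The endpoint-$1$ constants then also enter $\kappa_\partial(\id_{M_n}\otimes\cL)$. So I must show $\kappa^{(1)}_{kl}\ge\min_{k',l'}\kappa^{(0)}_{k'l'}$; I expect this to be the main step.

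The plan uses the adjoint map $Z\mapsto Z^*$, which sends $\B(\cR_l,\cR_k)$ to $\B(\cR_k,\cR_l)$ and sends $V_{kl}$ onto $V_{lk}$. One has $h_{kl,1}(Z)=\Tr(ZZ^*\sigma_k)=h_{lk,0}(Z^*)$. By \eqref{eq:corner-adjoints}, $L_{lk}(Z^*)=L_{kl}(Z)^*$, which gives
\[
  q_{lk,0}(Z^*)=-\Re\Tr(Z L_{kl}(Z)^*\sigma_k)=-\Re\Tr(L_{kl}(Z)Z^*\sigma_k)=q_{kl,1}(Z),
\]
where the middle step takes a complex conjugate inside the real part. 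Hence $\kappa^{(1)}_{kl}=\kappa^{(0)}_{lk}$. The set of corners with $V_{kl}\ne0$ is closed under $(k,l)\mapsto(l,k)$. Therefore, for $n\ge2$,
\[
  \kappa_\partial(\id_{M_n}\otimes\cL)=\min_{V_{kl}\ne0}\kappa^{(0)}_{kl}.
\]
The self-corners with $V_{kk}=0$ contribute $+\infty$ and can be dropped.

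For $n=1$, $\kappa_\partial(\cL)$ is a minimum over a subset of these same constants, so it is at least as large. Taking the infimum over $n$ yields $\kappa_\partial^{\mathrm c}=\min_{V_{kl}\neq0}\kappa^{(0)}_{kl}$.

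\textbf{Amplification identity.} Under amplification, the data $\sigma_k$, $L_{kl}$ and $V_{kl}$ are unchanged, as noted before the proposition. Applying the first equality to $\id_{M_n}\otimes\cL$ therefore gives the same minimum, which proves \eqref{eq:gns-amplification}.
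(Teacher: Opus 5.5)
Your proposal is correct and follows essentially the same route as the paper. The matrix-unit expansion writes the GNS quotient as a weighted average of the corner quotients $q_{kl,0}/h_{kl,0}$, the adjoint map $Z\mapsto Z^*$ gives $\kappa^{(1)}_{kl}=\kappa^{(0)}_{lk}$, and amplification leaves $\sigma_k$, $L_{kl}$, $V_{kl}$ unchanged. You also make explicit a point the paper leaves implicit: every corner becomes accessible for $n\ge2$, while the $n=1$ endpoint constant is a minimum over a subset of the same constants and so cannot lower the infimum.
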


\begin{proof}
Choose an eigenbasis of each noiseless block
$\eta_k=\sum_i\lambda_{ki}|k,i\rangle\langle k,i|$ and write
\[
  X=\sum_{k,l,i,j}E_{ij}^{kl}\otimes Z_{ij}^{kl}.
\]
By \eqref{eq:expectation-blocks}, the condition $E(X)=0$ is equivalent to
$Z_{ij}^{kl}\in V_{kl}$ for every component. Orthogonality of the matrix-unit components
and \eqref{eq:corner-action} give
\begin{align}
  \Tr(\omega X^*X)
  &=\sum_{k,l,i,j}\lambda_{lj}
  h_{kl,0}(Z_{ij}^{kl}),\\
  -\Re\Tr(\omega X^*\cL(X))
  &=\sum_{k,l,i,j}\lambda_{lj}
  q_{kl,0}(Z_{ij}^{kl}).
  \label{eq:gns-corner-forms}
\end{align}
Thus the global quotient is a weighted average of the corner quotients
$q_{kl,0}/h_{kl,0}$, with strictly positive weights $\lambda_{lj}h_{kl,0}(Z_{ij}^{kl})$
for the nonzero components.  Conversely, taking $X=E_{ij}^{kl}\otimes Z$ with $0\ne Z\in
V_{kl}$ gives the quotient $q_{kl,0}(Z)/h_{kl,0}(Z)$.  This proves the first equality in
\eqref{eq:gns-corner-gap}.

Since $\cL$ preserves adjoints, $L_{lk}(Z^*)=L_{kl}(Z)^*$, and hence
\begin{equation}
  h_{kl,1}(Z)=h_{lk,0}(Z^*),
  \qquad
  q_{kl,1}(Z)=q_{lk,0}(Z^*).
  \label{eq:reversed-endpoints}
\end{equation}
Since $Z\mapsto Z^*$ is a bijection from $V_{kl}$ onto $V_{lk}$, we have
$\kappa_{kl}^{(1)}=\kappa_{lk}^{(0)}$.  Taking the minimum over $k,l$ and using
\eqref{eq:complete-endpoint} gives the second equality in \eqref{eq:gns-corner-gap}.
Under amplification, $\cK_k$ is replaced by $\mathbb C^n\otimes\cK_k$, while $\sigma_k$,
$L_{kl}$, and $V_{kl}$ remain unchanged.  This proves \eqref{eq:gns-amplification}.
\end{proof}

\subsection{Complete BKM coercivity and the CMLSI criterion}

We now show the complete BKM coercivity constant coincides with GNS spectral gap.

\begin{theorem}
\label{thm:bkm-amplification}
Let $P_t=e^{t\cL}$ be a QMS with a faithful asymptotic conditional expectation $E$. Then
\begin{equation}
  \lambda_{\mathrm{BKM}}^{\mathrm c}(\cL)
  =\lambda_{\mathrm{BKM}}(\id_{M_2}\otimes\cL)
  =\lambda_{\mathrm{GNS}}(\cL).
  \label{eq:complete-bkm-gap}
\end{equation}
\end{theorem}

\begin{proof}
Note that for each amplification $\id_{M_n}\otimes \cL$ with $n\ge 2$, $\cK_k$ is
replaced by $\mathbb C^n\otimes\cK_k$, while $\sigma_k$, $L_{kl}$, and $V_{kl}$ remain
unchanged. Hence the second term in the minimum formula of Theorem \ref{thm:exact-bkm}
always empty. Hence
\[
  \lambda_{\mathrm{BKM}}(\id_{M_n}\otimes\cL)
  =\lambda_{\mathrm{BKM}}(\id_{M_2}\otimes\cL)
  =\min_{\substack{k,l\\V_{kl}\ne\{0\}}}
  \kappa_{kl}^{(0)}=\kappa_\partial^c\ .
\]
which is exactly $\lambda_{\mathrm{GNS}}(\cL)$ by Proposition \ref{prop:gns-corners}.
Testing with product states and product perturbations gives
$\lambda_{\mathrm{BKM}}(\id_{M_2}\otimes\cL)
\le\lambda_{\mathrm{BKM}}(\cL)$.
Taking the infimum over $n\ge1$ completes the proof.
\end{proof}

Substituting the above identity into Corollary~\ref{cor:bkm-mlsi}, we obtain the
following quantitative criterion for complete MLSI constant.

\begin{corollary}[CMLSI criterion]
\label{cor:cmlsi-criterion}
Let $P_t=e^{t\cL}$ be a QMS with a faithful asymptotic conditional expectation $E$. Then
\begin{equation}
  \frac{\lambda_{\mathrm{GNS}}(\cL)}{2C_{\cb}(E)}
  \le\alpha^{\mathrm c}(\cL)
  \le\lambda_{\mathrm{GNS}}(\cL).
  \label{eq:cmlsi-gns-bounds}
\end{equation}
Consequently,
\begin{equation}
  \alpha^{\mathrm c}(\cL)>0
  \quad\Longleftrightarrow\quad
  \lambda_{\mathrm{GNS}}(\cL)>0.
  \label{eq:cmlsi-criterion}
\end{equation}
\end{corollary}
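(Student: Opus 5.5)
The plan is to substitute the identity of Theorem~\ref{thm:bkm-amplification} into the complete bounds \eqref{eq:cmlsi-bkm-bounds} of Corollary~\ref{cor:bkm-mlsi}. The only point that needs checking is that the lower bound in \eqref{eq:cmlsi-bkm-bounds} is really obtained level by level with the uniform index $C_{\cb}(E)$.

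\textbf{Step 1: lower bound at each matrix level.} Fix $n\ge1$. The amplified semigroup $\id_{M_n}\otimes P_t$ converges in norm to $E^{(n)}=\id_{M_n}\otimes E$, which is again a faithful conditional expectation onto $M_n\otimes\cN$. So Assumption~\ref{ass:asymptotic-limit} holds for $\cL^{(n)}$, and Theorem~\ref{thm:entropy-production} applies. It gives
\[
  \alpha(\cL^{(n)})\ge\frac{\lambda_{\mathrm{BKM}}(\cL^{(n)})}{2C(E^{(n)})}.
\]
Next use two bounds. First, $\lambda_{\mathrm{BKM}}(\cL^{(n)})\ge\lambda_{\mathrm{BKM}}^{\mathrm c}(\cL)$, directly from \eqref{eq:complete-bkm-constant}. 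Second, $C(E^{(n)})\le C_{\cb}(E)$, directly from \eqref{eq:complete-pp-index}. Together these give
\[
  \alpha(\cL^{(n)})\ge\frac{\lambda_{\mathrm{BKM}}^{\mathrm c}(\cL)}{2C_{\cb}(E)}.
\]
This is valid because $\lambda_{\mathrm{BKM}}^{\mathrm c}\ge0$: the endpoint forms are nonnegative by \eqref{eq:endpoint-nonnegative}. Taking the infimum over $n$ and invoking Theorem~\ref{thm:bkm-amplification}, which gives $\lambda_{\mathrm{BKM}}^{\mathrm c}(\cL)=\lambda_{\mathrm{GNS}}(\cL)$, yields the left inequality of \eqref{eq:cmlsi-gns-bounds}.

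\textbf{Step 2: upper bound.} Proposition~\ref{prop:alpha-upper-bkm} gives $\alpha(\cL^{(n)})\le\lambda_{\mathrm{BKM}}(\cL^{(n)})$ for each $n$. Taking the infimum over $n$ gives $\alpha^{\mathrm c}(\cL)\le\lambda_{\mathrm{BKM}}^{\mathrm c}(\cL)=\lambda_{\mathrm{GNS}}(\cL)$, again by Theorem~\ref{thm:bkm-amplification}.

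\textbf{Step 3: the equivalence.} Statement \eqref{eq:cmlsi-criterion} follows from \eqref{eq:cmlsi-gns-bounds} once we know $C_{\cb}(E)<\infty$. Proposition~\ref{prop:pp-indices} gives $C_{\cb}(E)=\sum_k\Tr(\sigma_k^{-1})$, which is finite because $\dim\cH<\infty$ and each $\sigma_k$ is faithful. Hence positivity of $\lambda_{\mathrm{GNS}}(\cL)$ forces $\alpha^{\mathrm c}(\cL)>0$ through the lower bound, and the converse direction comes from the upper bound.

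Nothing here is a genuine obstacle. The step most worth checking is Step~1, namely that $E^{(n)}$ is indeed the asymptotic faithful conditional expectation of $\cL^{(n)}$. This holds because norm convergence $P_t\to E$ in finite dimensions passes to $\id_{M_n}\otimes P_t$, and faithfulness is preserved under amplification by the block form \eqref{eq:expectation-blocks}.
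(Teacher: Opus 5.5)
Your proposal is correct and is essentially the paper's argument. The paper obtains the complete bounds of Corollary~\ref{cor:bkm-mlsi} level by level: Theorem~\ref{thm:entropy-production} with $C(E^{(n)})\le C_{\cb}(E)$ gives the lower bound, and Proposition~\ref{prop:alpha-upper-bkm} gives the upper bound. It then substitutes $\lambda_{\mathrm{BKM}}^{\mathrm c}(\cL)=\lambda_{\mathrm{GNS}}(\cL)$ from Theorem~\ref{thm:bkm-amplification}. Your extra checks make explicit what the paper leaves implicit: that $\lambda_{\mathrm{BKM}}^{\mathrm c}\ge0$, that $C_{\cb}(E)<\infty$, and that $E^{(n)}$ is the faithful asymptotic expectation of $\cL^{(n)}$.
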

\begin{remark}
The same bound was obtained in \cite{GaoRouze2022} for GNS-symmetric QMS.
Corollary~\ref{cor:cmlsi-criterion} extends this bound to QMS with a faithful asymptotic
conditional expectation.
\end{remark}

\subsection{GNS-symmetric cases}
Let $P_t=e^{\cL t}$ be a quantum Markov semigroup that is GNS symmetric with respect to
a faithful invariant state. It follows that $P_t=e^{\cL t}$ is GNS symmetric with
respect to every faithful invariant state \cite{GaoRouze2022}. Then $\cL$ commutes with
the modular group of every faithful invariant state $\omega$, and also self-adjoint with
respect to the inner product
\begin{equation}
  \ip{X}{Y}_{\omega,s}
  :=\Tr(X^*\omega^sY\omega^{1-s}),
  \label{eq:weighted-products}
\end{equation}
Consequently, all associated coercivity constants, including GNS ($s=0$) and KMS
($s=1/2$),  are determined by the spectrum of $\cL$,
\begin{equation}
  \inf_{\substack{X\ne0\\E(X)=0}}
  \frac{-\ip{X}{\cL(X)}_{\omega,s}}
  {\ip{X}{X}_{\omega,s}}
  =\min(\spec(-\cL)\setminus\{0\})=:\lambda
  \label{eq:weighted-gaps}
\end{equation}
for $s\in [0,1]$. This identifies the GNS and KMS gaps with $\lambda$. Note that $\cL$
is also self-adjoint with respect to the BKM inner product as it is an average of
$s$-inner product. It follows that all spectral gaps coincides with the minimum positive
eigenvalue of $-\cL$.

\begin{corollary}
\label{cor:gns-gaps}
If $\cL$ is GNS symmetric with respect to a faithful invariant state, then
\begin{equation}
  \lambda_{\mathrm{KMS}}(\cL)
  =\lambda_{\mathrm{GNS}}(\cL)
  =\lambda_{\mathrm{BKM}}(\cL)
  =\lambda_{\mathrm{BKM}}^{\mathrm c}(\cL)
  =\min(\spec(-\cL)\setminus\{0\}).
  \label{eq:gns-gaps}
\end{equation}
\end{corollary}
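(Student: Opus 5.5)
The plan is to reduce everything to the corner data of Section~\ref{sec:corners} and combine Theorem~\ref{thm:exact-bkm} with Proposition~\ref{prop:gns-corners}. First I would use the fact that GNS symmetry with respect to one faithful invariant state gives GNS symmetry with respect to every faithful invariant state $\omega=\bigoplus_k\eta_k\otimes\sigma_k$. Via the factorization \eqref{eq:gns-factorization}, this means that each corner generator $L_{kl}$ is self-adjoint for $h_{kl,0}$. GNS symmetry also says that $\cL$ commutes with the modular group of $\omega$. Restricted to a corner, this says $L_{kl}$ commutes with $Z\mapsto\sigma_k^{it}Z\sigma_l^{-it}$, hence with the positive operators $G_0Z=Z\sigma_l$ and $G_1Z=\sigma_kZ$ separately (take $\eta$ varying, or use the analytic continuation). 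Since $h_{kl,s}$ has Gram operator $G_s=G_1^sG_0^{1-s}=G_0G_0^{-s}G_1^{s}$, and $L_{kl}$ commutes with $G_0^{-s}G_1^s$ while being $G_0$-self-adjoint, $L_{kl}$ is self-adjoint for every $h_{kl,s}$.

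Next, on each centered corner $V_{kl}$ (invariant under $L_{kl}$, and for $k=l$ equal to the $h_{kk,s}$-orthogonal complement of $\mathbb C I$, which is orthogonal for every $s$ because $\Pi$ is an orthogonal projection in every $H_s$), self-adjointness and commutation with $G_s$ give a common eigenbasis. In this basis $L_{kl}$ is diagonal with eigenvalues $-\mu$ independent of $s$, and $h_{kl,s}$ is diagonal with positive weights. Hence $q_{kl,s}(Z)/h_{kl,s}(Z)$ is a weighted average of these $\mu$'s, and its infimum over $V_{kl}$ equals $\lambda_{kl}:=\min\spec(-L_{kl}|_{V_{kl}})$ for every $s$. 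Consequently $\kappa^{(0)}_{kl}=\kappa^{(1)}_{kl}=\lambda_{kl}$. The integrated quotient in \eqref{eq:internal-constant} is likewise a weighted average of the same eigenvalues, since the eigenbasis can be chosen self-adjoint by \eqref{eq:corner-adjoints}. This gives $\gamma_k^{\mathrm{int}}=\lambda_{kk}$.

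Finally, the spectrum of $-\cL$ restricted to $\Ker E$ is the union over corners of the spectra of $-L_{kl}|_{V_{kl}}$, because $\cL|_{\mathsf C_{kl}}=\id\otimes L_{kl}$. Therefore $\lambda:=\min(\spec(-\cL)\setminus\{0\})=\min_{V_{kl}\ne0}\lambda_{kl}$. Theorem~\ref{thm:exact-bkm} then gives $\lambda_{\mathrm{BKM}}=\min(\kappa_\partial,\min\gamma^{\mathrm{int}}_k)$, which is the minimum of $\lambda_{kl}$ over all nonzero centered corners, namely $\lambda$. Proposition~\ref{prop:gns-corners} gives $\lambda_{\mathrm{GNS}}=\min\kappa^{(0)}_{kl}=\lambda$. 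Theorem~\ref{thm:bkm-amplification} gives $\lambda^{\mathrm c}_{\mathrm{BKM}}=\lambda_{\mathrm{GNS}}$. For $\lambda_{\mathrm{KMS}}$, the same corner decomposition at $s=\tfrac12$ (KMS factorization) yields $\min_{kl}\lambda_{kl}=\lambda$; this also uses that $E$ is the KMS-orthogonal projection onto $\Ker\cL=\cN$.

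The main obstacle is the first step: deducing from GNS symmetry that $L_{kl}$ commutes with left multiplication by $\sigma_k$ and right multiplication by $\sigma_l$ individually, not just with the combined modular map. I would handle this using invariance under every faithful $\omega$. Varying the ratio of eigenvalues of $\eta_k$ and $\eta_l$ across distinct noiseless labels makes the modular group act on the corner as $Z\mapsto r^{it}\sigma_k^{it}Z\sigma_l^{-it}$, where the scalar $r^{it}$ is irrelevant. This gives commutation with $Z\mapsto\sigma_k^{it}Z\sigma_l^{-it}$. Combined with $h_{kl,0}$-self-adjointness, $L_{kl}$ preserves the joint eigenspaces of $G_1G_0^{-1}$. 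Since $G_0$ and $G_1$ commute, this suffices for the weights argument above, as the ratio $h_s/h_0$ on each joint eigenspace is constant.
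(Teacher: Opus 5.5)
Your proof goes through, but one claim in it is false and should be dropped. $L_{kl}$ does not in general commute with $G_0Z=Z\sigma_l$ and $G_1Z=\sigma_kZ$ separately, and hence not with $G_s$.

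Take a primitive semigroup, so there is a single corner with $\sigma_k=\sigma_l=\sigma$. Commutation with $G_1$ would give $\cL(\sigma)=\sigma\cL(I)=0$, i.e.\ $\sigma\in\Ker\cL=\mathbb C I$. This fails for every primitive GNS-symmetric generator with non-tracial invariant state, e.g.\ a qubit Davies generator. Neither of your suggested justifications produces it:
\begin{itemize}
\item Varying $\eta$ only multiplies $Z\mapsto\sigma_k^{it}Z\sigma_l^{-it}$ by the scalar $r^{it}$, as you note yourself.
\item Analytic continuation only yields the powers of $\Delta_{kl}:=G_1G_0^{-1}$, i.e.\ of $Z\mapsto\sigma_kZ\sigma_l^{-1}$.
\end{itemize}

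Those powers are all you need. Let $\dagger$ be the Hilbert--Schmidt adjoint. Then $h_{kl,0}$-self-adjointness reads $L_{kl}^\dagger G_0=G_0L_{kl}$. Since $G_s=G_0\Delta_{kl}^s$ and $L_{kl}$ commutes with $\Delta_{kl}^s$, you get $L_{kl}^\dagger G_s=G_0L_{kl}\Delta_{kl}^s=G_0\Delta_{kl}^sL_{kl}=G_sL_{kl}$. So the conclusion of your first paragraph (self-adjointness for every $h_{kl,s}$) is right; only the intermediate claim is wrong.

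In your second paragraph, replace ``commutation with $G_s$'' by this per-$s$ self-adjointness. Distinct eigenspaces of $L_{kl}|_{V_{kl}}$ are then $h_{kl,s}$-orthogonal for every $s$. Writing $Z_\mu$ for the eigencomponents of $Z$, this gives $q_{kl,s}(Z)=\sum_\mu\mu\,h_{kl,s}(Z_\mu)$ and $h_{kl,s}(Z)=\sum_\mu h_{kl,s}(Z_\mu)$. Hence $\kappa^{(0)}_{kl}=\kappa^{(1)}_{kl}=\lambda_{kl}$. Integrating in $s$ and testing on a self-adjoint eigenvector gives $\gamma_k^{\mathrm{int}}=\lambda_{kk}$; such an eigenvector exists by \eqref{eq:corner-adjoints} and reality of the eigenvalues. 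No common eigenbasis is needed, although one exists by the $\Delta_{kl}$-eigenspace argument of your last paragraph.

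With that repair, your route is genuinely different from the paper's, which argues globally and never enters the corners:
\begin{itemize}
\item GNS symmetry for one faithful invariant state gives it for all, together with commutation with the modular group of each $\omega$.
\item Hence $\cL$ is self-adjoint for every $\ip{X}{Y}_{\omega,s}$ of \eqref{eq:weighted-products}, and so for the average $\int_0^1\ip{X}{Y}_{\omega,s}\dd s=\Tr[X^*\Gamma_\omega(Y)]$. This is exactly the BKM form in the variable $X=\Gamma_\omega^{-1}(A)$.
\item For each fixed $\omega$, the BKM quotient over centered directions is therefore a Rayleigh quotient of a self-adjoint operator, with infimum $\min(\spec(-\cL)\setminus\{0\})$. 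The cases $s=0,\tfrac12$ give the GNS and KMS gaps.
\end{itemize}

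Your identity $G_s=G_0\Delta_{kl}^s$ is precisely the corner form of $\ip{X}{Y}_{\omega,s}=\ip{X}{\omega^sY\omega^{-s}}_{\mathrm{GNS},\omega}$, so the key input is the same. The difference is that you route it through Theorem~\ref{thm:exact-bkm}, Proposition~\ref{prop:gns-corners} and Theorem~\ref{thm:bkm-amplification}. That is longer and relies on the whole corner machinery. In exchange, it shows that each endpoint constant and each internal constant equals the spectral gap of its own corner. It also makes explicit why restricting to self-adjoint directions does not raise the infimum, which the paper's short argument leaves implicit.
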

\section{Examples and Counterexamples}
\label{sec:examples}

This section gives two counterexamples separating the spectral and entropic quantities
studied above.  The primitive $M_2$ example has a positive MLSI constant but zero CMLSI
constant, while the non-primitive $M_3$ example has a positive KMS spectral gap but zero
BKM and MLSI constants. Both examples are KMS-symmetric.   We begin with a
Carr\'e-du-Champ identity and a CMLSI criterion for primitive semigroups.

\subsection{A CMLSI criterion for primitive semigroups}
\label{subsec:primitive-criterion}
Let $P_t=e^{t\cL}$ be a QMS with a faithful asymptotic conditional expectation $E$. Let
$\sigma$ be a faithful invariant state.  For $Z\in\B(\cH)$, write
\begin{equation}
  h_0(Z):=\norm{Z}_{2,\sigma}^2=\Tr(\sigma Z^*Z),
  \qquad
  q_0(Z):=-\Re\Tr(\sigma Z^*\cL(Z)).
  \label{eq:primitive-forms}
\end{equation}
Recall the Carr\'e-du-Champ operator defined as
\begin{equation}
  \Gamma_\cL(Z)
  :=\cL(Z^*Z)-\cL(Z^*)Z-Z^*\cL(Z).
  \label{eq:carre-du-champ}
\end{equation}
Suppose that the generator $\cL$ admits the GKLS form of \eqref{eq:gkls}
\begin{equation}
  \cL(X)=i[H,X]
  +\sum_j\left(V_j^*XV_j-\frac12\{V_j^*V_j,X\}\right),
  \label{eq:gkls-jumps}
\end{equation}
where $V_j$ are the jump operators. Then, for every $Z\in\B(\cH)$,
\begin{equation}
  \Gamma_\cL(Z)=\sum_j[V_j,Z]^*[V_j,Z] ,\qquad
  q_0(Z)=\frac12\Tr(\sigma\Gamma_\cL(Z))
  =\frac12\sum_j\norm{[V_j,Z]}_{2,\sigma}^2.
  \label{eq:commutator-form}
\end{equation}
Define the \emph{jump commutant} by
\begin{equation}
  \cD_+(\cL):=\{Z:\Gamma_\cL(Z)=0\}
  =\{Z:[V_j,Z]=0\ \forall j\}.
  \label{eq:jump-commutant}
\end{equation}
For a primitive QMS with faithful invariant state $\sigma$, we have
\begin{align}
  E(X)&=\Tr(\sigma X)I, \qquad
  E_*(\rho)=\Tr(\rho)\sigma,\qquad
  V_\sigma:=\{Z\in\B(\cH):\Tr(\sigma Z)=0\}.
  \label{eq:primitive-data}
\end{align}

\begin{corollary}
\label{cor:primitive-criterion}
Let $\cL$ generate a finite-dimensional primitive QMS that is KMS symmetric with respect
to a faithful invariant state $\sigma$.  Then $\alpha(\cL)>0$.  Moreover,
\begin{equation}
  \alpha^{\mathrm c}(\cL)>0
  \iff \cD_+(\cL)=\mathbb C I.
  \label{eq:primitive-criterion}
\end{equation}
\end{corollary}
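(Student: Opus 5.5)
For the first assertion I will use Corollary~\ref{cor:internal-positivity} together with Corollary~\ref{cor:bkm-mlsi}. In the primitive case $\cN=\mathbb C I$, so there is a single block with $\dim\cK=1$, $\cR=\cH$ and $\sigma_k=\sigma$. Hence the set $\mathsf{Acc}$ is empty and $\kappa_\partial=+\infty$. Theorem~\ref{thm:exact-bkm} then gives $\lambda_{\mathrm{BKM}}(\cL)=\gamma^{\mathrm{int}}$. Under KMS symmetry, Corollary~\ref{cor:internal-positivity} gives $\gamma^{\mathrm{int}}>0$. The lower bound in Corollary~\ref{cor:bkm-mlsi} yields $\alpha(\cL)\ge\lambda_{\mathrm{BKM}}(\cL)/(2C(E))>0$, since $C(E)=\norm{\sigma^{-1}}_\infty<\infty$.

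For the equivalence I will use Corollary~\ref{cor:cmlsi-criterion}: $\alpha^{\mathrm c}(\cL)>0$ if and only if $\lambda_{\mathrm{GNS}}(\cL)>0$. Here $C_{\cb}(E)=\Tr(\sigma^{-1})$ is finite. By Proposition~\ref{prop:gns-corners}, in the primitive case
\[
  \lambda_{\mathrm{GNS}}(\cL)=\inf_{0\ne Z\in V_\sigma}\frac{q_0(Z)}{h_0(Z)} .
\]
This is an infimum of a ratio of two quadratic forms over a finite-dimensional space, with $h_0$ positive definite. By compactness of the $h_0$-unit sphere in $V_\sigma$, the infimum is attained. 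It is therefore zero exactly when some $0\ne Z\in V_\sigma$ has $q_0(Z)=0$.

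It remains to show $\{Z\in V_\sigma: q_0(Z)=0\}\ne\{0\}$ if and only if $\cD_+(\cL)\ne\mathbb C I$. By the Carr\'e-du-Champ identity \eqref{eq:commutator-form} and faithfulness of $\sigma$, $q_0(Z)=0$ holds iff $[V_j,Z]=0$ for all $j$, that is, iff $Z\in\cD_+(\cL)$. Thus the zero set of $q_0$ in $V_\sigma$ is $\cD_+(\cL)\cap V_\sigma$. Since $I\in\cD_+(\cL)$ and $\cD_+(\cL)$ is a linear subspace, the centering map $Z\mapsto Z-\Tr(\sigma Z)I$ sends $\cD_+(\cL)$ into $\cD_+(\cL)\cap V_\sigma$ with kernel $\mathbb C I$. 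Hence $\cD_+(\cL)\cap V_\sigma=\{0\}$ iff $\cD_+(\cL)=\mathbb C I$, which completes the argument.

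The main point to verify is the identity \eqref{eq:commutator-form} itself. One must check that the Hamiltonian part contributes nothing to $\Re\Tr(\sigma Z^*\cL(Z))$ and that $\Tr(\sigma\cL(Z^*Z))=0$. Both follow from $\cL_*\sigma=0$, via the symmetrization $q_0(Z)=\tfrac12\Tr\bigl(\sigma(\cL(Z^*)Z+Z^*\cL(Z))\bigr)\cdot(-1)=\tfrac12\Tr(\sigma\Gamma_\cL(Z))$. Since the paper states this identity, I will take it as given.
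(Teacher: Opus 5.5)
Your proposal is correct and follows the paper's proof essentially step for step. For $\alpha(\cL)>0$ you unpack the paper's appeal to Corollary~\ref{cor:internal-positivity}: $\mathsf{Acc}$ is empty, so $\lambda_{\mathrm{BKM}}=\gamma^{\mathrm{int}}>0$, and the index bound of Corollary~\ref{cor:bkm-mlsi} applies. The equivalence then goes through Proposition~\ref{prop:gns-corners}, compactness, the Carr\'e-du-Champ identity and centering, exactly as in the paper.

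One minor wording slip in your last paragraph: the Hamiltonian term drops out of $\Gamma_\cL(Z)$, not out of $\Re\Tr(\sigma Z^*\cL(Z))$ itself, where it need not vanish if $[H,\sigma]\ne0$. It is invariance of $\sigma$ under the full generator that kills $\Tr(\sigma\cL(Z^*Z))$. The identity $q_0(Z)=\tfrac12\Tr(\sigma\Gamma_\cL(Z))$ that you actually use is unaffected.
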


\begin{proof}
The positivity of $\alpha(\cL)$ follows from Corollary~\ref{cor:internal-positivity} in
the primitive case. Proposition~\ref{prop:gns-corners} gives
\begin{equation}
  \lambda_{\mathrm{GNS}}(\cL)
  =\kappa_{\partial}^{\mathrm c}(\cL)
  =\inf_{0\ne Z\in V_\sigma}
  \frac{q_0(Z)}{\norm{Z}_{2,\sigma}^2}.
  \label{eq:primitive-gns-gap}
\end{equation}
If $\cD_+(\cL)=\mathbb C I$, the faithfulness of $\sigma$ implies $q_0(Z)>0$ for every
nonzero $Z\in V_\sigma$. Compactness of the unit sphere in $V_\sigma$ then gives the
above infimum $\lambda_{\mathrm{GNS}}(\cL)>0$.  Conversely, if
$Z\in\cD_+(\cL)\setminus\mathbb C I$, then
\[
  Z^\circ:=Z-\Tr(\sigma Z)I\in V_\sigma, \qquad Z^\circ\neq 0, \qquad  q_0(Z^\circ)=0\ .
\]
Thus $\lambda_{\mathrm{GNS}}(\cL)=0$. Corollary~\ref{cor:cmlsi-criterion} then gives the
stated CMLSI criterion.
\end{proof}

\subsection{Positive CMLSI without GNS symmetry on
\texorpdfstring{$M_3$}{M3}}
\label{subsec:m3-positive}

Let
\begin{equation}
  \sigma=\frac16\diag(1,4,1),
  \qquad
  W_1=
  \begin{pmatrix}
    0&1&0\\
    0&0&2\\
    0&0&0
  \end{pmatrix},
  \qquad
  W_2=
  \begin{pmatrix}
    0&0&0\\
    2&0&0\\
    0&1&0
  \end{pmatrix}.
  \label{eq:m3-positive-data}
\end{equation}
Set $K=W_1^*W_1+W_2^*W_2=\diag(4,2,4)$ and define
\begin{equation}
  \cL(X)=W_1^*XW_1+W_2^*XW_2-\frac12\{K,X\}.
  \label{eq:m3-positive-generator}
\end{equation}
Together with $[\sigma^{1/2},K]=0$, the identities
\begin{equation}
  \sigma^{1/2}W_1^*\sigma^{-1/2}=W_2,
  \qquad
  \sigma^{1/2}W_2^*\sigma^{-1/2}=W_1,
  \qquad
  \cL_*(\sigma)=0
  \label{eq:m3-positive-kms}
\end{equation}
show that $\cL$ is KMS symmetric with respect to $\sigma$. For $X=(x_{ij})$, one has
\begin{equation}
  \cL(X)=
  \begin{pmatrix}
    4(x_{22}-x_{11})&2x_{23}-3x_{12}&-4x_{13}\\
    2x_{32}-3x_{21}&x_{11}+x_{33}-2x_{22}&2x_{12}-3x_{23}\\
    -4x_{31}&2x_{21}-3x_{32}&4(x_{22}-x_{33})
  \end{pmatrix}.
  \label{eq:m3-positive-action}
\end{equation}
Hence $\Ker\cL=\mathbb C I$, so the semigroup is primitive, with
\[
  E(X)=\Tr(\sigma X)I,
  \qquad
  E_*(\rho)=\Tr(\rho)\sigma.
\]
It is not GNS symmetric, since
\begin{equation}
  \ip{E_{12}}{\cL(E_{23})}_{\mathrm{GNS},\sigma}=\frac43,
  \qquad
  \ip{\cL(E_{12})}{E_{23}}_{\mathrm{GNS},\sigma}=\frac13.
  \label{eq:m3-positive-not-gns}
\end{equation}

The self-adjoint part of $V_\sigma$ has the real basis
\begin{align}
  Z_0&=\diag(1,0,-1),
  &\cL(Z_0)&=-4Z_0,\nonumber\\
  Z_1&=\diag(2,-1,2),
  &\cL(Z_1)&=-6Z_1,\nonumber\\
  Z_x^\pm&=E_{12}\pm E_{23}+E_{21}\pm E_{32},
  &\cL(Z_x^\pm)&=-(3\mp2)Z_x^\pm,\nonumber\\
  Z_y^\pm&=\mathrm i(E_{12}\pm E_{23}-E_{21}\mp E_{32}),
  &\cL(Z_y^\pm)&=-(3\mp2)Z_y^\pm,\nonumber\\
  Z_{13,x}&=E_{13}+E_{31},
  &\cL(Z_{13,x})&=-4Z_{13,x},\nonumber\\
  Z_{13,y}&=\mathrm i(E_{13}-E_{31}),
  &\cL(Z_{13,y})&=-4Z_{13,y}.
  \label{eq:m3-positive-basis}
\end{align}
These basis vectors are BKM-orthogonal. Indeed, the diagonal directions $Z_0,Z_1$ are
orthogonal, and the adjacent matrix units have the same BKM weight:
\begin{equation}
  \Gamma_\sigma(E_{12})=\frac{E_{12}}{2\log4},
  \qquad
  \Gamma_\sigma(E_{23})=\frac{E_{23}}{2\log4}.
  \label{eq:m3-positive-bkm-weights}
\end{equation}
The same identities hold for their adjoints, while the remaining off-diagonal directions
are orthogonal to them. Since the basis vectors are eigenvectors of $\cL$, the
integrated dissipation form is diagonal as well. Therefore,
\begin{equation}
  \lambda_{\mathrm{BKM}}(\cL)
  =\lambda_{\mathrm{KMS}}(\cL)
  =\min\{4,6,1,5,4\}=1.
  \label{eq:m3-positive-bkm-gap}
\end{equation}

The products
\[
  W_1W_2=2\diag(1,1,0),
  \qquad
  W_2W_1=2\diag(0,1,1)
\]
show that a matrix commuting with both $W_1$ and $W_2$ must be diagonal. Commutation
with $W_1$ then forces its diagonal entries to be equal. Thus $\cD_+(\cL)=\mathbb C I$,
and Corollary~\ref{cor:primitive-criterion} gives $\alpha^{\mathrm c}(\cL)>0$.

To compute the complete BKM constant, consider
\[
  \mathcal W_+=\operatorname{span}_{\mathbb C}\{E_{12},E_{23}\},
  \qquad
  \mathcal W_-=\operatorname{span}_{\mathbb C}\{E_{21},E_{32}\}.
\]
For $Z=xE_{12}+yE_{23}\in\mathcal W_+$, a direct calculation gives
\begin{equation}
  h_0(Z)=\frac{4|x|^2+|y|^2}{6},
  \qquad
  q_0(Z)=\frac{12|x|^2+3|y|^2-10\Re(\overline{x}y)}{6}.
  \label{eq:m3-positive-forms}
\end{equation}
Hence $q_0(Z)\ge\frac12h_0(Z)$, with equality when $y=2x$. On $\mathcal W_-$, the same
formulas hold after interchanging $x$ and $y$, so the minimum quotient is again $1/2$.

The spaces $\mathcal W_+$ and $\mathcal W_-$, the centered diagonal subspace, and
$\operatorname{span}_{\mathbb C}\{E_{13},E_{31}\}$ form a GNS-orthogonal invariant
decomposition of $V_\sigma$. On the latter two subspaces, the quotient $q_0/h_0$ is at
least $4$. The global minimum is therefore $1/2$, attained at $W_1=E_{12}+2E_{23}$.
Equation~\eqref{eq:primitive-gns-gap} gives
\begin{equation}
  \lambda_{\mathrm{GNS}}(\cL)
  =\inf_{0\ne Z\in V_\sigma}\frac{q_0(Z)}{h_0(Z)}
  =\frac{q_0(W_1)}{h_0(W_1)}
  =\frac12.
  \label{eq:m3-positive-gns-gap}
\end{equation}
Theorem~\ref{thm:bkm-amplification} then yields
\begin{equation}
  \lambda_{\mathrm{BKM}}^{\mathrm c}(\cL)
  =\lambda_{\mathrm{GNS}}(\cL)
  =\frac12<\lambda_{\mathrm{BKM}}(\cL)=1.
  \label{eq:m3-positive-complete-gap}
\end{equation}

Finally, Proposition~\ref{prop:pp-indices} gives
\begin{equation}
  C(E)=\|\sigma^{-1}\|_\infty=6,
  \qquad
  C_{\cb}(E)=\Tr(\sigma^{-1})=\frac{27}{2}.
  \label{eq:m3-positive-indices}
\end{equation}
Corollary~\ref{cor:bkm-mlsi} therefore yields
\begin{equation}
  \frac1{12}\le\alpha(\cL)\le1,
  \qquad
  \frac1{54}\le\alpha^{\mathrm c}(\cL)\le\frac12.
  \label{eq:m3-positive-mlsi}
\end{equation}

\subsection{MLSI without CMLSI on \texorpdfstring{$M_2$}{M2}}
\label{subsec:m2-counterexample}

Let
\begin{equation}
  \sigma=\frac15
  \begin{pmatrix}
    1&0\\0&4
  \end{pmatrix}
  ,
  \qquad
  W=
  \begin{pmatrix}
    0& 1/{\sqrt{2}}\\ \sqrt{2}&0
  \end{pmatrix}
  ,
  \qquad
  K=W^*W=\diag\!\left(2,\frac12\right),
  \label{eq:m2-data}
\end{equation}
and define
\begin{equation}
  \cL_2(X)=W^*XW-\frac12\{K,X\}.
  \label{eq:m2-generator}
\end{equation}
The identities
\begin{equation}
  \sigma^{\frac{1}{2}} W^*\sigma^{-\frac{1}{2}}=W,
  \qquad [\sigma^{\frac{1}{2}},K]=0,
  \qquad \cL_{2*}(\sigma)=0
  \label{eq:m2-kms}
\end{equation}
show that $\cL_2$ is KMS symmetric with respect to $\sigma$.  For $X=
\begin{psmallmatrix}
  x&y\\z&t
\end{psmallmatrix}
$,
\begin{equation}
  \cL_2(X)=
  \begin{pmatrix}
    2(t-x)&z-\frac54y\\[
    0.3em]
    y-\frac54z&\frac12(x-t)
  \end{pmatrix}.
  \label{eq:m2-action}
\end{equation}
Hence $\Ker\cL_2=\mathbb C I $, so the semigroup is primitive.  It is not GNS symmetric,
since
\begin{equation}
  \ip{E_{12}}{\cL_2(E_{21})}_{\mathrm{GNS},\sigma}=\frac45,
  \qquad
  \ip{\cL_2(E_{12})}{E_{21}}_{\mathrm{GNS},\sigma}=\frac15.
  \label{eq:m2-not-gns}
\end{equation}
Its GNS symmetrization does not preserve adjoints, since
\[
  \frac12(\cL_2+\cL_2^\sharp)(E_{12}+E_{21})
  =-\frac58E_{12}+\frac54E_{21}.
\]
The self-adjoint part of $V_\sigma$ has the real basis
\begin{align}
  Z_0&=\diag(1,-\frac{1}{4}),&\cL_2(Z_0)&=-\frac52Z_0,
  \label{eq:m2-z0}
  \\
  Z_x&=E_{12}+E_{21},&\cL_2(Z_x)&=-\frac14Z_x,
  \label{eq:m2-zx}
  \\
  Z_y&=\mathrm i(E_{12}-E_{21}),&\cL_2(Z_y)&=-\frac94Z_y.
  \label{eq:m2-zy}
\end{align}
The vector $Z_0$ is BKM-orthogonal to $Z_x$ and $Z_y$, while
\begin{equation}
  h_s(Z_x,Z_y)=\frac{\mathrm i}{5}\bigl(4^{1-s}-4^s\bigr),
  \qquad
  \int_0^1h_s(Z_x,Z_y)\dd s=0.
  \label{eq:m2-bkm-orthogonality}
\end{equation}
The BKM inner product is therefore diagonal in this basis.  Since the basis vectors are
eigenvectors of $\cL_2$, the integrated dissipation form is diagonal as well.  Hence
\begin{equation}
  \lambda_{\mathrm{BKM}}(\cL_2)
  =\lambda_{\mathrm{KMS}}(\cL_2)
  =\min\left\{\frac52,\frac14,\frac94\right\}
  =\frac14.
  \label{eq:m2-bkm-gap}
\end{equation}
Proposition~\ref{prop:pp-indices} gives $C(E)=\|\sigma^{-1}\|_\infty=5$, so
Corollary~\ref{cor:bkm-mlsi} yields
\begin{equation}
  \frac1{40}\le\alpha(\cL_2)\le\frac14.
  \label{eq:m2-mlsi-bounds}
\end{equation}
The operator $W$ satisfies $\Tr(\sigma W)=0$.  Since $[W,W]=0$,
\begin{equation}
  q_0(W)=0,
  \qquad h_0(W)=\Tr(\sigma W^*W)=\frac45>0.
  \label{eq:m2-endpoint-zero}
\end{equation}
We have $W\in\cD_+(\cL_2)$ (non-normality of $W$ gives $W^*\notin\cD_+(\cL_2)$), which
by Corollary \ref{cor:primitive-criterion} implies $\lambda_{\mathrm{GNS}}(\cL_2)=0$.
Theorem~\ref{thm:bkm-amplification} and Proposition~\ref{prop:alpha-upper-bkm} therefore
give
\begin{equation}
  \alpha(\id_{M_2}\otimes\cL_2)
  =\alpha^{\mathrm c}(\cL_2)=0.
  \label{eq:m2-cmlsi-zero}
\end{equation}

We now construct faithful states on $M_2\otimes M_2$ whose MLSI quotients tend to zero.
For $r>0$, set
\begin{equation}
  \eta_r=\frac1{1+r}\diag(r,1),
  \qquad \omega_r=\eta_r\otimes\sigma,
  \label{eq:m2-ancilla-base}
\end{equation}
and
\begin{equation}
  Y=E_{12}\otimes W,
  \qquad X=Y+Y^*,
  \qquad A_r=\Gamma_{\omega_r}(X).
  \label{eq:m2-ancilla-direction}
\end{equation}
Then
\begin{equation}
  A_r=A_r^*\ne0,
  \qquad \Tr(A_r)=0,
  \qquad (\id_{M_2}\otimes E_*)(A_r)=0.
  \label{eq:m2-perturbation-properties}
\end{equation}
A direct calculation gives
\begin{align}
  h_s(W)=\frac25(4^s+4^{-s}), \qquad
  q_s(W)=\frac3{10}(4^s-4^{-s}),
  \label{eq:m2-weighted-forms}
\end{align}
and Lemma~\ref{lem:corner-bkm} yields
\begin{equation}
  R_2(r):=\frac{Q_{\omega_r}(A_r)}{H_{\omega_r}(A_r)}
  =\frac34\,
  \frac{J(4r)-J(r/4)}{J(4r)+J(r/4)}.
  \label{eq:m2-quotient}
\end{equation}
Here $Q_{\omega_r}$ is the dissipation form for $\id_{M_2}\otimes\cL_2$, and $J$ is
defined in \eqref{eq:mu-r}. As $r\downarrow0$,
\begin{equation}
  R_2(r)=\frac{3\log4}{4\log(1/r)}
  +O\!\left(\frac1{\log^2(1/r)}\right)\longrightarrow0.
  \label{eq:m2-quotient-limit}
\end{equation}
For fixed $r>0$, set
\[
  \rho_{r,\delta}:=\omega_r+\delta A_r.
\]
Since $\Tr(A_r)=0$ and $\omega_r>0$, this is a faithful state for every sufficiently
small nonzero real $\delta$.  Moreover, $(\id_{M_2}\otimes
E_*)(\rho_{r,\delta})=\omega_r$. By \eqref{eq:entropy-hessian} and
\eqref{eq:production-hessian},
\[
  \lim_{\delta\to0}
  \frac{\EP_{\id_{M_2}\otimes\cL_2}(\rho_{r,\delta})}
  {2D(\rho_{r,\delta}\Vert\omega_r)}
  =R_2(r).
\]
Choose $r_m\to 0$.  For each $m$, choose $\delta_m\ne0$ sufficiently small that
$\rho_{r_m,\delta_m}$ is faithful and its MLSI quotient differs from $R_2(r_m)$ by less
than $1/m$.  The MLSI quotients of $\rho_{r_m,\delta_m}$ then tend to zero.

\begin{remark}
{\rm In \cite[Theorem~4.10]{GaoJungeLaRacuenteLi2025}, it was proved that GNS-symmetric
quantum Markov semigroups satisfy
\begin{align}
  \alpha^{\mathrm c}\ge \frac{1}{2t_{\cb}(0.1)},
  \label{eq:return-time-bound}
\end{align}
where the complete-positivity return time is defined by
\[
  t_{\cb}(\varepsilon)
  :=\inf\bigl\{t>0:(1-\varepsilon)E\le_{\mathrm{cp}}P_t
  \le_{\mathrm{cp}}(1+\varepsilon)E\bigr\},
  \qquad 0<\varepsilon<1.
\]
Here $\Phi\le_{\mathrm{cp}}\Psi$ means that $\Psi-\Phi$ is completely positive.

The above example shows that \eqref{eq:return-time-bound} does not extend to all
KMS-symmetric semigroups, answering the corresponding question in
\cite[Problem~6.3]{GaoJungeLaRacuenteLi2025} in the negative. Indeed, the Choi matrix of
$E(X)=\Tr(\sigma X)I_2$ is $\sigma^{\mathsf T}\otimes I_2>0$, and $P_t\to E$. For every
$0<\varepsilon<1$, the Choi matrices of $P_t-(1-\varepsilon)E$ and
$(1+\varepsilon)E-P_t$ are therefore positive for all sufficiently large $t$. Hence
$t_{\cb}(\varepsilon)<\infty$. Moreover, $\id\not\le_{\mathrm{cp}}1.1E$, so continuity
at $t=0$ gives $t_{\cb}(0.1)>0$. Thus the right-hand side of
\eqref{eq:return-time-bound} is strictly positive, whereas $\alpha^{\mathrm
c}(\cL_2)=0$. }
\end{remark}

\subsection{A positive KMS gap without MLSI on \texorpdfstring{$M_3$}{M3}}
\label{subsec:m3-counterexample}
Consider
\begin{equation}
  W=
  \begin{pmatrix}
    0&1/\sqrt{2}&0\\
    \sqrt{2}&0&0\\
    0&0&1
  \end{pmatrix},
  \qquad
  K=W^*W=\diag\!\left(2,\frac12,1\right),
  \label{eq:m3-jump}
\end{equation}
and we define
\begin{equation}
  \cL_3(X)=W^*XW-\frac12\{K,X\}.
  \label{eq:m3-generator}
\end{equation}
This is obtained by adjoining the scalar block $1$ to the jump operator in
Section~\ref{subsec:m2-counterexample}. For every $0<\eps<1$, the faithful state
\begin{equation}
  \sigma_\eps
  =\diag\!\left(\frac\eps5,\frac{4\eps}5,1-\eps\right)
  \label{eq:m3-sigma}
\end{equation}
is invariant under $\cL_3$.  The identities $\sigma_\eps^{1/2}W^*=W\sigma_\eps^{1/2}$
and $[\sigma_\eps^{1/2},K]=0$ show that $\cL_3$ is KMS symmetric with respect to
$\sigma_\eps$.

The kernel is
\begin{equation}
  \cN=\Ker\cL_3=\mathbb CP\oplus\mathbb CQ.
  \label{eq:m3-fixed-algebra}
\end{equation}
where $P=E_{11}+E_{22}$ and $Q=E_{33}$. Indeed, if $\cL_3(X)=0$, invariance of
$\sigma_\eps$ and \eqref{eq:commutator-form} imply $[W,X]=0$. Hence
$\cL_3(X)=\frac12[K,X]$, so also $[K,X]=0$.  Since $K$ has distinct diagonal entries,
$X$ is diagonal and $[W,X]=0$ forces $X_{11}=X_{22}$.

The asymptotic expectation and its trace predual are
\begin{align}
  E(X)
  &=\left(\frac15X_{11}+\frac45X_{22}\right)P+X_{33}Q,
  \label{eq:m3-expectation}
  \\
  E_*(\rho)
  &=\Tr(P\rho)\left(\frac15E_{11}+\frac45E_{22}\right)
  +\Tr(Q\rho)Q.
  \label{eq:m3-predual}
\end{align}
The corresponding reference states are $\sigma_P=\diag(1/5,4/5)$ and $\sigma_Q=1$.

The four corners $PM_3P$, $QM_3Q$, $PM_3Q$, and $QM_3P$ are invariant under $\cL_3$.
Its restriction to $PM_3P$ is $\cL_2$, and it vanishes on $QM_3Q$.  The GNS inner
product restricted to $PM_3P$ is $\eps$ times that of
Section~\ref{subsec:m2-counterexample}, so $\cL_3$ is not GNS symmetric. On $PM_3Q$, the
corner matrix is
\begin{equation}
  L_{PQ}
  =
  \begin{pmatrix}
    -\frac32&\sqrt2\\[
    0.3em]
    \frac1{\sqrt2}&-\frac34
  \end{pmatrix},
  \label{eq:m3-corner-map}
\end{equation}
with eigenvalues
\begin{equation}
  \lambda_\pm=\frac{-9\pm\sqrt{73}}8.
  \label{eq:m3-corner-eigenvalues}
\end{equation}
Since $\cL_3$ preserves adjoints, its restriction to $QM_3P$ has the same eigenvalues.
KMS symmetry therefore gives
\begin{equation}
  \lambda_{\mathrm{KMS}}(\cL_3)
  =\min\left\{\frac14,\frac{9-\sqrt{73}}8\right\}
  =\frac{9-\sqrt{73}}8>0.
  \label{eq:m3-kms-gap}
\end{equation}

\Needspace{11\baselineskip}
On the corner $Q\rightarrow P$, take $|u\rangle=2^{-1/2}|1\rangle+|2\rangle\in P\mathbb
C^3$. Since $W|u\rangle=|u\rangle$ and $\langle3|W=\langle3|$, the operator $
|u\rangle\langle3|$ commutes with $W$.  A direct calculation gives
\begin{equation}
  h_{PQ,0}(u)=\frac32,
  \qquad q_{PQ,0}(u)=0,
  \label{eq:m3-endpoint-zero}
\end{equation}
while
\begin{equation}
  h_{PQ,1}(u)=\frac9{10},
  \qquad q_{PQ,1}(u)=\frac3{20}.
  \label{eq:m3-endpoint-one}
\end{equation}
More generally, for every $s\in[0,1]$,
\begin{align*}
  h_{PQ,s}(u)
  =\frac12\left(\frac15\right)^s+\left(\frac45\right)^s, \qquad
  q_{PQ,s}(u)
  =\frac14\left[\left(\frac45\right)^s
  -\left(\frac15\right)^s\right].
\end{align*}
Since $P\ne Q$, this corner is endpoint accessible.  The vanishing of $q_{PQ,0}(u)$,
together with Theorem~\ref{thm:exact-bkm}, gives the  BKM coercivity
$\lambda_{\mathrm{BKM}}(\cL_3)=0$ hence by Corollary \ref{cor:internal-positivity} the
MLSI constant $\alpha (\cL_3)=0$. We also provide direct witness of the $\alpha
(\cL_3)=0$. Set
\[
  X=|u\rangle \langle3|+|3\rangle \langle u| ,
  \qquad A_\eps=\Gamma_{\sigma_\eps}(X).
\]
Then
\[
  A_\eps=A_\eps^*\ne0,
  \qquad E_*A_\eps=0,
  \qquad \Tr(A_\eps)=0.
\]
With $r=\eps/(1-\eps)$, Lemma~\ref{lem:corner-bkm} gives
\begin{equation}
  R_\eps
  :=\frac{Q_{\sigma_\eps}(A_\eps)}{H_{\sigma_\eps}(A_\eps)}
  =\frac{J(4r/5)-J(r/5)}{4J(4r/5)+2J(r/5)}.
  \label{eq:m3-quotient}
\end{equation}
Here $Q_{\sigma_\eps}$ is the dissipation form for $\cL_3$, and $J$ is defined in
\eqref{eq:mu-r}.  As $\eps\to 0$,
\[
  R_\eps
  =\frac{\log2}{3\log((1-\eps)/\eps)}
  +O\!\left(\frac1{\log^2(1/\eps)}\right)
  \longrightarrow0.
\]
For fixed $0<\eps<1$, set
\[
  \rho_{\eps,\delta}:=\sigma_\eps+\delta A_\eps.
\]
This is a faithful state for every sufficiently small nonzero real $\delta$, and
$E_*\rho_{\eps,\delta}=\sigma_\eps$.  By \eqref{eq:entropy-hessian} and
\eqref{eq:production-hessian},
\[
  \lim_{\delta\to0}
  \frac{\EP_{\cL_3}(\rho_{\eps,\delta})}
  {2D(\rho_{\eps,\delta}\Vert\sigma_\eps)}
  =R_\eps.
\]
Choose $\eps_m\to 0$.  For each $m$, choose $\delta_m>0$ sufficiently small that
$\rho_{\eps_m,\delta_m}$ is faithful and its MLSI quotient differs from $R_{\eps_m}$ by
less than $1/m$.  The MLSI quotients of $\rho_{\eps_m,\delta_m}$ tend to zero.
Therefore
\begin{equation}
  \lambda_{\mathrm{BKM}}(\cL_3)=\alpha(\cL_3)=\alpha^{\mathrm c}(\cL_3)=0.
  \label{eq:m3-mlsi-zero}
\end{equation}

\section{Application to the CKG Gibbs sampler}
\label{sec:ckg}

We apply the CMLSI criterion to the Gibbs sampler introduced by Chen, Kastoryano, and
Gily\'en \cite{CKG2025}. This sampler is KMS symmetric with respect to the Gibbs state.
We show that the kernel of its GNS Dirichlet form coincides with its fixed-point
algebra, which implies a positive CMLSI constant by Corollary~\ref{cor:cmlsi-criterion}.

\subsection{The CKG sampler}

Let $H=H^*\in\B(\cH)$ be a Hamiltonian and let $\beta>0$ be inverse temperature. The
Gibbs state is
\begin{equation}
  \sigma_\beta=\frac{e^{-\beta H}}{\Tr(e^{-\beta H})}.
  \label{eq:ckg-gibbs}
\end{equation}
Let $\{A_a:a\in\mathsf A\}$ be a finite family of operators satisfying $A_{a^*}=A_a^*$
for an involution $a\mapsto a^*$ in $\mathsf A$. Write $A_a(t)=e^{itH}A_ae^{-itH}$.

Fix $\sigma_E>0$ and $\omega_\gamma>\beta\sigma_E^2/2$. Following
\cite[Eqs.~(1.3)--(1.4)]{CKG2025}, define the filtered operators
\begin{equation}
  \widehat A_a(\omega)
  =\frac1{\sqrt{2\pi}}\int_{\mathbb R}
  f(t)A_a(t)e^{-i\omega t}\dd t,
  \qquad
  f(t)=\left(\sigma_E\sqrt{\frac2\pi}\right)^{1/2}e^{-\sigma_E^2t^2}.
  \label{eq:ckg-filter}
\end{equation}
The Gaussian transition weight is
\begin{equation}
  \gamma(\omega)=
  \exp\!\left[-\frac{(\omega+\omega_\gamma)^2}{2\sigma_\gamma^2}\right],
  \qquad
  \sigma_\gamma^2=\frac{2\omega_\gamma}{\beta}-\sigma_E^2>0.
  \label{eq:ckg-weight}
\end{equation}
The Lindbladian in Heisenberg picture is
\begin{equation}
  \cL(X)=i[B,X]
  +\sum_a\int_{\mathbb R}\gamma(\omega)
  \left(\widehat A_a(\omega)^*X\widehat A_a(\omega)
  -\frac12\{\widehat A_a(\omega)^*\widehat A_a(\omega),X\}\right)\dd\omega,
  \label{eq:ckg-generator}
\end{equation}
where $B=B^*$ is the Hamiltonian correction from \cite[Lemma~II.1]{CKG2025}, given in
\eqref{eq:ckg-correction}. The semigroup $P_t=e^{t\cL}$, is KMS symmetric with respect
to $\sigma_\beta$ \cite[Corollary~II.2]{CKG2025}. Let $E$ denote its asymptotic
conditional expectation; see Section~\ref{subsec:corner-generators}.

\subsection{The commutants of filtered jump operators}

For $Z\in\B(\cH)$, define the Dirichlet form
\[
  q_0(Z):=-\Re\Tr\!\left(\sigma_\beta Z^*\cL(Z)\right).
\]

\begin{proposition}
\label{prop:ckg-dirichlet-form}
For every $Z\in\B(\cH)$,
\begin{equation}
  q_0(Z)
  =\frac12\sum_a\int_{\mathbb R}\gamma(\omega)
  \norm{[\widehat A_a(\omega),Z]}_{2,\sigma_\beta}^2\dd\omega.
  \label{eq:ckg-dirichlet-form}
\end{equation}
Consequently,
\begin{equation}
  q_0(Z)=0
  \iff [\widehat A_a(\omega),Z]=0
  \quad\text{for every $a$ and a.e. $\omega$.}
  \label{eq:ckg-jump-kernel}
\end{equation}
\end{proposition}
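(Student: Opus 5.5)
The plan is to reduce the statement to the Carr\'e-du-Champ identity \eqref{eq:commutator-form}, now applied to the continuous family of jump operators $\sqrt{\gamma(\omega)}\,\widehat A_a(\omega)$. The generator \eqref{eq:ckg-generator} has GKLS form with jump operators $V_{a,\omega}:=\sqrt{\gamma(\omega)}\widehat A_a(\omega)$, integrated against $\dd\omega$ instead of summed. First I will check that the integral is absolutely convergent in norm. We have $\norm{\widehat A_a(\omega)}\le (2\pi)^{-1/2}\norm{A_a}\int|f|$, and $\gamma$ is integrable, so all integrals of the quadratic expressions converge. Hence differentiation, traces and sums commute with $\int\dd\omega$.

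\textbf{Carr\'e-du-Champ computation.} For each fixed $(a,\omega)$, the purely algebraic computation behind \eqref{eq:commutator-form} gives
\[
  V^*Z^*ZV-\tfrac12\{V^*V,Z^*Z\}-\Bigl(V^*Z^*V-\tfrac12\{V^*V,Z^*\}\Bigr)Z-Z^*\Bigl(V^*ZV-\tfrac12\{V^*V,Z\}\Bigr)=[V,Z]^*[V,Z].
\]
The Hamiltonian part $i[B,\cdot]$ is a derivation, so it drops out of $\Gamma_\cL$. Integrating yields
\[
  \Gamma_\cL(Z)=\sum_a\int\gamma(\omega)[\widehat A_a(\omega),Z]^*[\widehat A_a(\omega),Z]\dd\omega.
\]

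\textbf{From $\Gamma_\cL$ to $q_0$.} Next I use invariance of $\sigma_\beta$, namely $\Tr(\sigma_\beta\cL(Y))=0$ for all $Y$, which follows from $\cL_*(\sigma_\beta)=0$ by KMS symmetry. This gives
\[
  \Tr(\sigma_\beta\Gamma_\cL(Z))=-\Tr(\sigma_\beta\cL(Z^*)Z)-\Tr(\sigma_\beta Z^*\cL(Z)).
\]
Since $\cL(Z^*)=\cL(Z)^*$, the first term is the complex conjugate of the second:
\[
  \Tr(\sigma_\beta \cL(Z)^*Z)=\overline{\Tr(Z^*\cL(Z)\sigma_\beta)}=\overline{\Tr(\sigma_\beta Z^*\cL(Z))}.
\]
Therefore $\Tr(\sigma_\beta\Gamma_\cL(Z))=-2\Re\Tr(\sigma_\beta Z^*\cL(Z))=2q_0(Z)$. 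Inserting the integral formula and exchanging the trace with the integral proves \eqref{eq:ckg-dirichlet-form}.

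\textbf{Kernel characterization.} The integrand $\gamma(\omega)\norm{[\widehat A_a(\omega),Z]}_{2,\sigma_\beta}^2$ is nonnegative. Moreover, $\gamma>0$ everywhere and $\sigma_\beta$ is faithful, so $\norm{\cdot}_{2,\sigma_\beta}$ is a norm. Hence $q_0(Z)=0$ holds iff the integrand vanishes for a.e.\ $\omega$ and every $a$, i.e.\ iff $[\widehat A_a(\omega),Z]=0$ a.e. The integrand is in fact continuous in $\omega$, since $\widehat A_a(\omega)$ is continuous by dominated convergence, so ``a.e.'' upgrades to ``every $\omega$'' if desired.

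The only mildly delicate point is the justification of exchanging integrals with the algebraic manipulations; norm-convergence of the Bochner integral settles it.
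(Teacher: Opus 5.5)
Your proof is correct and follows the same route as the paper. The Hamiltonian correction drops out of $\Gamma_\cL$, invariance of $\sigma_\beta$ gives $q_0(Z)=\frac12\Tr(\sigma_\beta\Gamma_\cL(Z))$, and the Carr\'e-du-Champ commutator identity is applied to the family $\sqrt{\gamma(\omega)}\,\widehat A_a(\omega)$; the kernel characterization then follows from nonnegativity of the integrand, strict positivity of $\gamma$, and faithfulness of $\sigma_\beta$, with you additionally spelling out the norm-integrability the paper leaves implicit.
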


\begin{proof}
The Hamiltonian term cancels in $\Gamma_\cL$, and invariance of $\sigma_\beta$ gives
$q_0(Z)=\frac12\Tr(\sigma_\beta\Gamma_\cL(Z))$. Applying \eqref{eq:commutator-form} with
$V_{a,\omega}=\sqrt{\gamma(\omega)}\,\widehat A_a(\omega)$, with summation over $a$ and
integration over $\omega$, yields \eqref{eq:ckg-dirichlet-form}. The equivalence
\eqref{eq:ckg-jump-kernel} follows from nonnegativity of the integrand, strict
positivity of $\gamma$, and faithfulness of $\sigma_\beta$.
\end{proof}

\Needspace{12\baselineskip}
Let $\cB_H:=\{E'-E:E,E'\in\spec(H)\}$ be the Bohr spectrum of $H$. Write
\begin{equation}
  H=\sum_E EP_E,
  \qquad
  A_{a,\nu}:=\sum_{E'-E=\nu}P_{E'}A_aP_E,
  \qquad
  A_a(t)=\sum_{\nu\in\cB_H}e^{it\nu}A_{a,\nu}.
  \label{eq:ckg-bohr-components}
\end{equation}

\begin{lemma}
\label{lem:ckg-commutants}
The kernel of $q_0$ is given by
\begin{equation}
  \begin{aligned}
    \Ker q_0
    &=\{\widehat A_a(\omega):a\in\mathsf A,\ \omega\in\mathbb R\}'\\
    &=\{A_a(t):a\in\mathsf A,\ t\in\mathbb R\}'\\
    &=\{A_{a,\nu}:a\in\mathsf A,\ \nu\in\cB_H\}'.
  \end{aligned}
  \label{eq:ckg-commutants}
\end{equation}
\end{lemma}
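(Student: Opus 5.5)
The plan is to prove the three equalities in \eqref{eq:ckg-commutants} as a chain of commutant identities. The first equality is already essentially Proposition~\ref{prop:ckg-dirichlet-form}. For the second and third, I would show that the three families of operators generate the same linear span, or at least that each family lies in the span (or closed span) of the others. Commutants only see linear spans, so this suffices.

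\textbf{Step 1 (first equality).} By \eqref{eq:ckg-jump-kernel}, $q_0(Z)=0$ if and only if $[\widehat A_a(\omega),Z]=0$ for every $a$ and almost every $\omega$. The map $\omega\mapsto\widehat A_a(\omega)$ is continuous (indeed entire), because $f$ is Gaussian and $A_a(t)$ is a finite trigonometric sum. An almost-everywhere statement therefore upgrades to every $\omega$, which gives $\Ker q_0=\{\widehat A_a(\omega)\}'$.

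\textbf{Step 2 (Bohr components).} Insert \eqref{eq:ckg-bohr-components} into \eqref{eq:ckg-filter}. This gives
\[
  \widehat A_a(\omega)=\sum_{\nu\in\cB_H}\widehat f(\omega-\nu)A_{a,\nu},
\]
where $\widehat f$ is the Fourier transform of $f$, a strictly positive Gaussian $\widehat f(x)=c\,e^{-x^2/(4\sigma_E^2)}$. Hence each $\widehat A_a(\omega)$ lies in $\operatorname{span}\{A_{a,\nu}\}$, which gives the inclusion $\{A_{a,\nu}\}'\subseteq\{\widehat A_a(\omega)\}'$.

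For the reverse inclusion, suppose $Z$ commutes with every $\widehat A_a(\omega)$. Then
\[
  \sum_\nu c\,e^{-(\omega-\nu)^2/(4\sigma_E^2)}[A_{a,\nu},Z]=0
\]
for all $\omega$. Factor out $e^{-\omega^2/(4\sigma_E^2)}$. What remains is $\sum_\nu e^{-\nu^2/(4\sigma_E^2)}e^{\omega\nu/(2\sigma_E^2)}[A_{a,\nu},Z]=0$ for all real $\omega$. The exponentials $e^{\omega\nu/(2\sigma_E^2)}$ with distinct $\nu\in\cB_H$ are linearly independent functions of $\omega$. So each coefficient vanishes, and therefore $[A_{a,\nu},Z]=0$. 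This linear-independence argument is the main (though mild) obstacle. The only subtlety is that $\cB_H$ is a finite set of distinct reals, so a Vandermonde-type argument applies. For example, one can differentiate at $\omega=0$, or look at growth as $\omega\to\pm\infty$ to peel off the largest $\nu$ inductively.

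\textbf{Step 3 (Heisenberg evolutions).} Since $A_a(t)=\sum_\nu e^{it\nu}A_{a,\nu}$, we get $\{A_{a,\nu}\}'\subseteq\{A_a(t)\}'$. Conversely, if $[A_a(t),Z]=0$ for all $t$, then $\sum_\nu e^{it\nu}[A_{a,\nu},Z]=0$. The characters $e^{it\nu}$ with distinct $\nu$ are linearly independent. Alternatively, average $e^{-it\mu}(\cdot)$ over $t\in[-T,T]$ and let $T\to\infty$ to extract each component. Either way $[A_{a,\nu},Z]=0$. Combining Steps 1--3 yields \eqref{eq:ckg-commutants}.
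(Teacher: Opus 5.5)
Your proof is correct. Steps 1 and 3 match the paper: you use continuity to upgrade almost-everywhere commutation, and you use the Bohr decomposition together with either linear independence of characters or the time average \eqref{eq:ckg-bohr-projection}.

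The real difference is how you connect the filtered operators to the rest of the chain. The paper proves $\{\widehat A_a(\omega)\}'=\{A_a(t)\}'$ analytically. It sets $F_a(t)=f(t)[A_a(t),Z]\in L_1$ and observes $\widehat F_a(\omega)=[\widehat A_a(\omega),Z]$. It then invokes the Fourier uniqueness theorem, and finishes with $f\ne0$ a.e.\ and continuity in $t$. You instead compute $\widehat A_a(\omega)=\sum_{\nu}\widehat f(\omega-\nu)A_{a,\nu}$ explicitly. You then compare the filtered operators directly with the Bohr components, because Gaussian translates reduce to the linearly independent exponentials $e^{\omega\nu/(2\sigma_E^2)}$.

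Your route is entirely finite-dimensional linear algebra and avoids the Fourier uniqueness theorem. It also gives the continuity (indeed analyticity) of $\omega\mapsto\widehat A_a(\omega)$ needed in Step 1 for free. The price is that it relies on the Gaussian shape of the filter. The paper's argument works verbatim for any integrable filter that is nonzero almost everywhere. For a general filter, linear independence of the translates $\widehat f(\cdot-\nu)$ is, after Fourier transform, exactly the statement that $f(t)\sum_\nu c_\nu e^{it\nu}=0$ a.e.\ forces every $c_\nu=0$. That is the paper's argument in disguise.
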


\begin{proof}
Dominated convergence shows that each $\widehat A_a$ is norm continuous. Thus
commutation for a.e. $\omega$ implies commutation for every $\omega$, and
Proposition~\ref{prop:ckg-dirichlet-form} gives the first equality.

Suppose that $Z$ commutes with every $\widehat A_a(\omega)$. For each $a$, the function
$F_a(t):=f(t)[A_a(t),Z]$ belongs to $L_1(\mathbb R;\B(\cH))$ and satisfies
\[
  \widehat F_a(\omega)=[\widehat A_a(\omega),Z]=0.
\]
The Fourier uniqueness theorem, applied entrywise, gives $F_a=0$ a.e.
\cite[Theorem~4.33]{Folland1995}. Since $f\ne0$ a.e., $[A_a(t),Z]=0$ a.e. Norm
continuity then gives this equality for every $t$. The converse follows directly from
\eqref{eq:ckg-filter}, proving the second equality.

For the last equality, the Bohr decomposition \eqref{eq:ckg-bohr-components} shows that
commutation with every $A_{a,\nu}$ implies commutation with every $A_a(t)$. Conversely,
\begin{equation}
  A_{a,\nu}
  =\lim_{T\to\infty}\frac1{2T}\int_{-T}^T
  e^{-it\nu}A_a(t)\dd t,
  \label{eq:ckg-bohr-projection}
\end{equation}
where the limit holds in norm. Hence commutation with every $A_a(t)$ implies commutation
with each $A_{a,\nu}$.
\end{proof}

\Needspace{8\baselineskip}
\subsection{Fixed-point algebra and CMLSI}
Let
\begin{equation}
  \mathfrak A_{\mathrm{orb}}
  :=C^*(A_a(t):a\in\mathsf A,\ t\in\mathbb R).
  \label{eq:ckg-orbit-algebra}
\end{equation}
be the $C^*$-subalgebra generated $A_a(t), t\in R$. Set
\begin{equation}
  R_{\mathrm{diss}}
  :=\sum_a\int_{\mathbb R}\gamma(\omega)
  \widehat A_a(\omega)^*\widehat A_a(\omega)\dd\omega,
  \qquad
  R_{\mathrm{diss},\nu}
  :=\sum_{E'-E=\nu}P_{E'}R_{\mathrm{diss}}P_E.
  \label{eq:ckg-dissipation}
\end{equation}
The Hamiltonian correction from \cite[Lemma~II.1]{CKG2025} is given by function calculus
\begin{equation}
  B=\frac{i}{2}\sum_{\nu\in\cB_H}
  \tanh\!\left(\frac{\beta\nu}{4}\right)R_{\mathrm{diss},\nu}.
  \label{eq:ckg-correction}
\end{equation}

\begin{proposition}
\label{prop:ckg-fixed-algebra}
The fixed-point algebra $\cN:=\Ker\cL=\Fix(P_t)=\Ran E$ satisfies
\begin{equation}
  \cN=\Ker q_0=\mathfrak A_{\mathrm{orb}}'.
  \label{eq:ckg-fixed-algebra}
\end{equation}
The MLSI constants satisfy
\begin{equation}
  \alpha(\cL)\ge\alpha^{\mathrm c}(\cL)
  \ge\frac{\lambda_{\mathrm{GNS}}(\cL)}{2C_{\cb}(E)}>0.
  \label{eq:ckg-cmlsi}
\end{equation}
\end{proposition}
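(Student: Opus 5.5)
We prove the chain of inclusions $\cN\subseteq\Ker q_0\subseteq\cN$ and then read off the CMLSI bound from Corollary~\ref{cor:cmlsi-criterion}. The identification $\Ker q_0=\mathfrak A_{\mathrm{orb}}'$ is essentially Lemma~\ref{lem:ckg-commutants}. That lemma gives $\Ker q_0=\{A_a(t)\}'$. Since $A_{a^*}=A_a^*$, this generating set is closed under adjoints, so its commutant equals the commutant of the $C^*$-algebra it generates, namely $\mathfrak A_{\mathrm{orb}}'$.

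For $\cN\subseteq\Ker q_0$, take $Z\in\cN$. Then $\cL(Z)=0$, so $q_0(Z)=-\Re\Tr(\sigma_\beta Z^*\cL(Z))=0$ directly.

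For the reverse inclusion, take $Z\in\Ker q_0$. By Proposition~\ref{prop:ckg-dirichlet-form} and Lemma~\ref{lem:ckg-commutants}, $Z$ commutes with every $\widehat A_a(\omega)$. Because $\widehat A_a(\omega)^*=\widehat A_{a^*}(-\omega)\cdot$(phase-free, since $f$ is real and even), $Z$ also commutes with every $\widehat A_a(\omega)^*$. Hence the dissipative part of \eqref{eq:ckg-generator} reduces to
\[
\sum_a\int\gamma(\omega)\Bigl(\widehat A_a^*\widehat A_a Z-\tfrac12\{\widehat A_a^*\widehat A_a,Z\}\Bigr)\dd\omega=\tfrac12[R_{\mathrm{diss}},Z].
\]
Two facts then follow:
\begin{itemize}
\item $Z$ commutes with $R_{\mathrm{diss}}$, since $R_{\mathrm{diss}}$ lies in $\mathfrak A_{\mathrm{orb}}$. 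It is a norm limit of polynomials in the $A_a(t)$, or more directly of the $\widehat A_a(\omega)$, which are integrals of $A_a(t)$. So the dissipative part vanishes.
\item $Z$ commutes with $B$. For this we need $R_{\mathrm{diss},\nu}\in\mathfrak A_{\mathrm{orb}}$. Since $e^{itH}R_{\mathrm{diss}}e^{-itH}$ is again built from the $A_a(t+s)$, it lies in $\mathfrak A_{\mathrm{orb}}$. The Bohr-averaging formula \eqref{eq:ckg-bohr-projection} applied to $R_{\mathrm{diss}}$ then shows each $R_{\mathrm{diss},\nu}$ is in the norm-closed algebra. Therefore $B\in\mathfrak A_{\mathrm{orb}}$.
\end{itemize}
Together these give $\cL(Z)=i[B,Z]+0=0$, so $Z\in\cN$.

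I expect the membership $B\in\mathfrak A_{\mathrm{orb}}$ to be the main technical point. It relies on $\mathfrak A_{\mathrm{orb}}$ being invariant under the Heisenberg flow $\mathrm{Ad}\,e^{itH}$, which holds because the generating set is flow-invariant. Finite dimensionality makes all the limits unproblematic.

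For the MLSI bound, Assumption~\ref{ass:asymptotic-limit} holds by KMS symmetry, as noted in Section~\ref{subsec:corner-generators}. The GNS gap is the infimum of $q_0(Z)/h_0(Z)$ over $Z\in\Ker E$; here I use \eqref{eq:gns-gap} with $\omega=\sigma_\beta$. Since $\Ker q_0=\cN$ and $\cN$ meets $\Ker E$ only in $0$, $q_0$ is strictly positive on the unit sphere of the finite-dimensional space $\Ker E$. Compactness gives $\lambda_{\mathrm{GNS}}(\cL)>0$.

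Finally, $C_{\cb}(E)<\infty$ by Proposition~\ref{prop:pp-indices}. Corollary~\ref{cor:cmlsi-criterion} yields the middle inequality in \eqref{eq:ckg-cmlsi}. The bound $\alpha(\cL)\ge\alpha^{\mathrm c}(\cL)$ holds because the case $n=1$ is included in the infimum defining $\alpha^{\mathrm c}$.
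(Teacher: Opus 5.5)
Your proposal is correct and follows essentially the same route as the paper. You identify $\Ker q_0=\mathfrak A_{\mathrm{orb}}'$ from Lemma~\ref{lem:ckg-commutants} using adjoint-closedness of the generators, place $B$ in $\mathfrak A_{\mathrm{orb}}$ by Bohr-averaging $R_{\mathrm{diss}}$ over the flow-invariant algebra so that $\cL$ annihilates the commutant, and then get $\lambda_{\mathrm{GNS}}(\cL)>0$ by compactness before applying Corollary~\ref{cor:cmlsi-criterion}. Two small remarks: the identity $\widehat A_a(\omega)^*=\widehat A_{a^*}(-\omega)$ needs only that $f$ is real, not even, and the reduction to $\tfrac12[R_{\mathrm{diss}},Z]$ is unnecessary, since each dissipator term already vanishes once $Z$ commutes with $\widehat A_a(\omega)$ and its adjoint.
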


\begin{proof}
Each $\widehat A_a(\omega)$ belongs to $\mathfrak A_{\mathrm{orb}}$, hence so does
$R_{\mathrm{diss}}$. The algebra $\mathfrak A_{\mathrm{orb}}$ is invariant under modular
automorphism group $\alpha_t(\cdot)=e^{itH}\cdot e^{-itH}$, since
$\alpha_t(A_a(s))=A_a(t+s)$. Thus, for every $\nu\in\cB_H$,
\begin{equation}
  R_{\mathrm{diss},\nu}
  =\lim_{T\to\infty}\frac1{2T}\int_{-T}^T
  e^{-it\nu}\alpha_t(R_{\mathrm{diss}})\dd t
  \in\mathfrak A_{\mathrm{orb}},
  \label{eq:ckg-dissipation-bohr}
\end{equation}
where the limit holds in norm. Equation~\eqref{eq:ckg-correction} therefore gives
$B\in\mathfrak A_{\mathrm{orb}}$.

Since $A_a(t)^*=A_{a^*}(t)$, Lemma~\ref{lem:ckg-commutants} gives $\Ker q_0=\mathfrak
A_{\mathrm{orb}}'$. Every $Z$ in this commutant commutes with $B$ and all filtered jump
operators and their adjoints. Hence $\cL(Z)=0$ by \eqref{eq:ckg-generator}. Conversely,
$\cL(Z)=0$ implies $q_0(Z)=0$, proving \eqref{eq:ckg-fixed-algebra}.

Since $\Ker q_0=\cN$ and $\cN\cap\Ker E=\{0\}$, \eqref{eq:gns-gap} and compactness give
\[
  \lambda_{\mathrm{GNS}}(\cL)
  =\min_{\substack{E(Z)=0\\\norm{Z}_{2,\sigma_\beta}=1}}
  q_0(Z)>0.
\]
Corollary~\ref{cor:cmlsi-criterion} and $\alpha^{\mathrm c}(\cL)\le\alpha(\cL)$ now give
\eqref{eq:ckg-cmlsi}.
\end{proof}

\bibliographystyle{amsplain}
\bibliography{references}

\end{document}